\theoremstyle{plain}
\newtheorem{theorem}{Theorem}[section]
\newtheorem{proposition}[theorem]{Proposition}
\newtheorem{lemma}[theorem]{Lemma}
\newtheorem{corollary}[theorem]{Corollary}
\theoremstyle{definition}
\newtheorem{definition}[theorem]{Definition}
\newtheorem{example}[theorem]{Example}
\newtheorem{remark}[theorem]{Remark}
\newtheorem{notation}[theorem]{Notation}
\begin{document}

\title[Spaces of initial conditions for nonautonomous mappings]{Studies on spaces of initial conditions for nonautonomous mappings of the plane}

\author[T. Mase]{Takafumi Mase} 

\address{Graduate School of Mathematical Sciences, 
the University of Tokyo, 3-8-1 Komaba, Meguro-ku, Tokyo 153-8914, Japan.}


\begin{abstract}
We study nonautonomous mappings of the plane by means of spaces of initial conditions.
First we introduce the notion of a space of initial conditions for nonautonomous systems and we study the basic properties of general equations that have spaces of initial conditions.
Then, we consider the minimization of spaces of initial conditions for nonautonomous systems and we show that if a nonautonomous mapping of the plane with a space of initial conditions, and unbounded degree growth, has zero algebraic entropy, then it must be one of the discrete Painlev\'{e} equations in the Sakai classification.
\end{abstract}

\maketitle


\section{Introduction}\label{sec:intro}

Mappings of the plane are among the main objects of interest in the field of discrete integrable systems.
Such a mapping
\[
	\varphi_n \colon (x_n, y_n) \mapsto (x_{n+1}, y_{n+1})
\]
can be thought of as defining the equation
\[
	(x_{n+1}, y_{n+1}) = \varphi_n (x_n, y_n),
\]
where $x_{n+1}$ and $y_{n+1}$ are functions of $x_n$ and $y_n$ (and $n$).
Hereafter we shall therefore often refer to such a mapping as an equation itself.
A three point mapping, in which $x_{n+1}$ is determined by $x_n$ and $x_{n-1}$, can be transformed to the above form by introducing $y_n = x_{n+1}$.

In this paper, we deal with mappings of the plane that can be rationally solved in the opposite direction.
Such an equation defines a (family of) birational automorphism(s) on $\mathbb{P}^2$ (or on $\mathbb{P}^1 \times \mathbb{P}^1$).

How to detect the integrability of discrete equations has been a major problem in the field of integrable systems for more than a quarter century.

Singularity confinement was first proposed by Grammaticos, Ramani and Papageorgiou \cite{sc} as a discrete analogue of the Painlev\'{e} property in continuous systems.
Where the Painlev\'{e} property requires all movable singularities to be at most poles, singularity confinement requires every singularity (i.e.\,disappearance of information on the initial values) to be confined after a finite number of iterates.
An equation is said to ``enter a singularity'' when loosing information on the initial values, and is said to ``exit from a singularity'' when recovering the lost information.
Singularity confinement is so powerful that many discrete Painlev\'{e} equations have been discovered by deautonomising QRT mappings solely with the help of singularity confinement \cite{dpainleve}.

However, Hietarinta and Viallet presented \cite{hv} an equation that passes the singularity confinement test but which exhibits chaotic behavior.
Their counterexample is
\begin{equation}\label{eq:originalhv}
	x_{n+1} + x_{n-1} = x_n + \frac{a}{x^2_n},
\end{equation}
which is now called the Hietarinta-Viallet equation.
In order to test for integrability more precisely, Bellon and Viallet defined the algebraic entropy \cite{entropy} and showed that the entropy of the above equation is $\log ({(3 + \sqrt{5})}/{2}) > 0$.

\begin{definition}[algebraic entropy \cite{entropy}, dynamical degree]
The limits
\[
	\lim_{n \to \infty} \frac{1}{n} \log \left( \deg \varphi^n \right)
	\text{ and }
	\lim_{n \to \infty} \left( \deg \varphi^n \right)^{1/n},
\]
if they exist, are called the {\em algebraic entropy} and the {\em dynamical degree} of the equation, respectively.
We denote by $\varphi^n$ the $n$-th iterates and by $\deg \varphi^n$ the degree of $\varphi^n$ as a rational function of the initial values in the iteration.
\end{definition}

It is obvious that the entropy coincides with the logarithm of the dynamical degree.

Even in the autonomous case (``autonomous'' meaning that $\varphi$ does not depend on $n$), it is difficult to calculate the exact value of the entropy for a concrete equation.
However, the integrability test based on zero algebraic entropy is empirically accurate.
Hereafter, we shall call an equation with zero algebraic entropy integrable.

\begin{remark}\label{rem:degreeexist}
It is known that in the autonomous case, the entropy exists in nonnegative real number and that is invariant under coordinate changes \cite{entropy}.
However, as in Example~\ref{ex:power}, this does not hold in the nonautonomous case.

\begin{remark}
There are several definitions for the degree of a mapping of the plane.

The degree as a birational automorphism on $\mathbb{P}^2$ (Definition~\ref{app:p2degree}) is the most standard one.
We will mainly use this degree in this paper.

If $\varphi$ is written as
\[
	\varphi(x, y) = \left( \frac{\varphi_{11}(x, y)}{\varphi_{21}(x, y)}, \frac{\varphi_{12}(x, y)}{\varphi_{22}(x, y)} \right),
\]
where $\varphi_{1i}$ and $\varphi_{2i}$ have no common factors for $i = 1, 2$, then the degree of $\varphi$ as a birational automorphism on $\mathbb{P}^1 \times \mathbb{P}^1$ is defined by
\[
	\deg \varphi = \max(\deg \varphi_{11}, \deg \varphi_{12}, \deg \varphi_{21}, \deg \varphi_{22}).
\]
This degree is particularly convenient when we consider three point mappings.

It is known that, while these two degrees are different, their growth as a function of $n$ is the same.
\end{remark}

\begin{example}
Consider the equation
\[
	\varphi(x, y) = \left( \frac{1}{y}, \frac{1}{x} \right).
\]
It immediately follows from the above expression that the degree of $\varphi$ as a birational automorphism on $\mathbb{P}^1 \times \mathbb{P}^1$ is $1$.

On the other hand, $\varphi$ can be written in homogeneous coordinates on $\mathbb{P}^2$ as
\[
	\varphi(z_1 : z_2 : z_3) = (z_1 z_3 : z_2 z_3 : z_1 z_2).
\]
Therefore, the degree of $\varphi$ as a birational automorphism on $\mathbb{P}^2$ is $2$.

Since $\varphi^2 = \operatorname{id}$, the degree growth of $\varphi^n$ is bounded in both cases.
\end{example}

In the nonautonomous case, the degree of the $n$-th iterate of a mapping $\varphi$ is
\[
	\deg \varphi^n = \deg(\varphi_{\ell+n-1} \circ \cdots \circ \varphi_{\ell}),
\]
which in general depends on the starting index $\ell$.
However, we rarely think of $\deg \varphi^n$ (or the entropy) as a function of $\ell$.
It is usual to fix the starting index (for example $\ell = 0$, as in Example~\ref{ex:power}) or only consider these cases where $\deg \varphi^n$ does not depend on $\ell$, for all $n$.
If so, then the algebraic entropy always exists for the same reason as in the autonomous case.
\end{remark}

It has become quite clear that there are many nonintegrable systems that pass the singularity confinement test \cite{hv, lateconfinement, bedfordkim, redemption, redeeming}.
Moreover, most linearizable mappings, which are by definition integrable, do not pass the singularity confinement test \cite{linearizable}.

Besides singularity confinement and algebraic entropy, some other integrability criteria have also been proposed.

Based on Diophantine approximations, Halburd proposed a new integrability criterion, called Diophantine integrability \cite{diophantine}.
This approach is particularly useful when we numerically estimate the value of the entropy.
The coprimeness condition was proposed to reinterpret singularity confinement from an algebraic viewpoint \cite{coprimekdv, coprime, coprimetoda}.
This criterion focuses on the factorization of iterates as rational functions of the initial values and tries to transform the equation to another one with the Laurent property \cite{laurent}.
This method has been recognized as a useful technique to calculate the exact value of the algebraic entropy \cite{extendedhv}.

Since all equations in this paper are (families of) birational automorphisms, geometric methods are useful to analyze them.
The most important and powerful tool is the so-called space of initial conditions, which was first introduced by Okamoto \cite{okamoto} to analyze the continuous Painlev\'{e} equations.

Sakai focused on the close relation that exists between singularity confinement and a space of initial conditions.
Using a special type of algebraic surface, so-called generalized Halphen surfaces, he has classified all discrete Painlev\'{e} equations \cite{sakai}.

Takenawa performed the blow-ups for the Hietarinta-Viallet equation (regularized as an automorphism on a surface) to obtain its space of initial conditions \cite{takenawa1}.
He revealed a correspondence between the singularity pattern and the motion of specific curves, and recalculated the algebraic entropy by computing the maximum eigenvalue of the linear transformation induced on the Picard group.
He also considered blow-ups of nonautonomous systems and showed, by using specific bases first introduced by Sakai, that the degree growth of every discrete Painlev\'{e} equation is at most quadratic \cite{takenawa2}.

Let us start by recalling the close relationship between singularity confinement and the space of initial conditions.

\begin{example}\label{ex:firstexample}
Consider the equation
\begin{equation}\label{eq:qp13point}
	x_{n+1} = \frac{(x_n + 3a) x_{n-1} - 2 a x_n}{x_n - 3 a},
\end{equation}
where $a$ is a nonzero constant.
Although the gauge $x_n = a \widetilde{x}_n$ enables us to take $a = 1$ without loss of generality, this gauge does not work on the nonautonomous version of the equation we will consider in Example~\ref{ex:bupnormal}.
Thus, we do not use the gauge even in the autonomous case.

First let us explain the singularity confinement property on the above equation.
Let $\varepsilon$ be an infinitesimal quantity and assume that while $x_{n-1}$ takes a regular finite value, $x_n$ becomes $3 a + \varepsilon$.
Then we obtain
\begin{align*}
	x_{n+1} &= 6 a (- a + x_{n-1}) \varepsilon^{-1} + o(\varepsilon^{-1}),
	& x_{n+2} &= a + o(1), \\
	x_{n+3} &= 12 a (a - x_{n-1}) \varepsilon^{-1} + o(\varepsilon^{-1}),
	& x_{n+4} &= -a + o(1), \\
	x_{n+5} &= 6 a (- a + x_{n-1}) \varepsilon^{-1} + o(\varepsilon^{-1}),
	& x_{n+6} &= - 3 a + o(1), \\
	x_{n+7} &= - x_{n-1} + o(1),
\end{align*}
where ``$o(\varepsilon^k)$'' is the Landau symbol, i.e.\,$\lim_{\varepsilon \to 0} o(\varepsilon^k)/\varepsilon^k = 0$.
Since the leading order of $x_{n+7}$ is degree $0$ and the leading coefficient again depends on the initial value $x_{n-1}$, we say that this singularity is confined and its pattern is
\begin{equation}\label{eq:ex1scpattern1}
	\{ 3a, \infty^1, a, \infty^1, -a, \infty^1, -3a \}.
\end{equation}
This equation has one more confined singularity pattern
\begin{equation}\label{eq:ex1scpattern2}
	\{ -3a, -a, a, 3a \},
\end{equation}
which starts with $x_n = -3a + \varepsilon$.

Next, we blow up $\mathbb{P}^1 \times \mathbb{P}^1$ to obtain a space of initial conditions.
Equation (\ref{eq:qp13point}) can be written as
\begin{equation}\label{eq:ex1p1p1}
	\varphi \colon \mathbb{P}^1 \times \mathbb{P}^1 \dasharrow \mathbb{P}^1 \times \mathbb{P}^1, \quad
	(x, y) \mapsto
	\left( y, \frac{(y + 3a) x - 2 a y}{y - 3 a} \right).
\end{equation}
Introducing the variables $s = 1/x$ and $t = 1/y$, $\mathbb{P}^1 \times \mathbb{P}^1$ is covered with $4$ copies of $\mathbb{C}^2$ as:
\[
	\mathbb{P}^1 \times \mathbb{P}^1 = (x, y) \cup \left( s, y \right) \cup \left( x, t \right) \cup \left( s, t \right).
\]
Let $X$ be the surface obtained by blowing up $\mathbb{P}^1 \times \mathbb{P}^1$ at the following $9$ points (Figure~\ref{fig:ex1blowup}):
\begin{itemize}
\item
$P^{(1)} \colon \left( x, t \right) = (3a, 0)$,

\item
$P^{(2)} \colon \left( s, y \right) = (0, a)$,

\item
$P^{(3)} \colon \left( x, t \right) = (a, 0)$,

\item
$P^{(4)} \colon \left( s, y \right) = (0, -a)$,

\item
$P^{(5)} \colon \left( x, t \right) = (-a, 0)$,

\item
$P^{(6)} \colon \left( s, y \right) = (0, -3a)$,

\item
$Q^{(1)} \colon \left( x, y \right) = (-3a, -a)$,

\item
$Q^{(2)} \colon \left( x, y \right) = (-a, a)$,

\item
$Q^{(3)} \colon \left( x, y \right) = (a, 3a)$.
\end{itemize}
\begin{figure}
\begin{picture}(300, 300)

	{\thicklines
	\put(20, 45){\line(1, 0){225}}
	\put(220, 20){\line(0, 1){225}}
	\put(45, 20){\line(0, 1){225}}
	\put(20, 220){\line(1, 0){225}}
	}

	\multiput(20, 170)(5, 0){45}{\line(1, 0){2.5}}
	\multiput(20, 145)(5, 0){45}{\line(1, 0){2.5}}
	\multiput(20, 120)(5, 0){45}{\line(1, 0){2.5}}
	\multiput(20, 95)(5, 0){45}{\line(1, 0){2.5}}
	
	\multiput(170, 20)(0, 5){45}{\line(0, 1){2.5}}
	\multiput(145, 20)(0, 5){45}{\line(0, 1){2.5}}
	\multiput(120, 20)(0, 5){45}{\line(0, 1){2.5}}
	\multiput(95, 20)(0, 5){45}{\line(0, 1){2.5}}

	\put(255, 218){$y = \infty$}
	\put(255, 168){$y = 3a$}
	\put(255, 143){$y = a$}
	\put(255, 118){$y = -a$}
	\put(255, 93){$y = -3a$}
	\put(255, 43){$y = 0$}

	\put(10, 255){$x =$}
	\put(42, 255){$0$}
	\put(82, 255){$-3a$}
	\put(112, 255){$-a$}
	\put(142, 255){$a$}
	\put(164, 255){$3a$}
	\put(215, 255){$\infty$}

	\put(170, 220){\circle*{5}}
	\put(172, 225){$P^{(1)}$}
	
	\put(220, 145){\circle*{5}}
	\put(222, 150){$P^{(2)}$}
	
	\put(145, 220){\circle*{5}}
	\put(147, 225){$P^{(3)}$}
	
	\put(220, 120){\circle*{5}}
	\put(222, 125){$P^{(4)}$}
	
	\put(120, 220){\circle*{5}}
	\put(122, 225){$P^{(5)}$}
	
	\put(220, 95){\circle*{5}}
	\put(222, 100){$P^{(6)}$}
	
	\put(95, 120){\circle*{5}}
	\put(96, 108){$Q^{(1)}$}
	
	\put(120, 145){\circle*{5}}
	\put(121, 133){$Q^{(2)}$}
	
	\put(145, 170){\circle*{5}}
	\put(146, 158){$Q^{(3)}$}

\end{picture}
\caption{
Diagram showing the centers of the blow-ups needed to obtain a space of initial conditions for the mapping (\ref{eq:ex1p1p1}).
}\label{fig:ex1blowup}
\end{figure}
Then, $\varphi$ becomes an automorphism on $X$.

Let $D^{(1)}, D^{(2)} \subset X$ be the strict transforms of the lines $\{ x = \infty \}$ and $\{ y = \infty \}$ in $\mathbb{P}^1 \times \mathbb{P}^1$, respectively, and let $C^{(i)}, \widetilde{C}^{(i)} \subset X$ be the exceptional curves of the blow-up at $P^{(i)}, Q^{(i)}$, respectively.
Let $\{ y = \pm 3a \}, \{ x = \pm 3a \} \subset X$ be the strict transforms of the corresponding lines in $\mathbb{P}^1 \times \mathbb{P}^1$.
These curves move under $\varphi$ as follows:
\begin{align}
	& D^{(1)} \to D^{(2)} \to D^{(1)}, \nonumber \\
	& \{ y = 3a \} \to C^{(1)} \to C^{(2)} \to \cdots \to C^{(6)} \to \{ x = -3a \}, \label{eq:ex1sc1} \\
	& \{ y = -3a \} \to \widetilde{C}^{(1)} \to \widetilde{C}^{(2)} \to \widetilde{C}^{(3)} \to \{ x = 3a \}. \label{eq:ex1sc2}
\end{align}

We thus find an exact correspondence between the singularity pattern (\ref{eq:ex1scpattern1}) and the motion of curves (\ref{eq:ex1sc1}) on the one hand, and between the pattern (\ref{eq:ex1scpattern1}) and the motion (\ref{eq:ex1sc2}) on the other hand, as in $\mathbb{P}^1 \times \mathbb{P}^1$ these curves correspond to the points $P^{(1)}, \ldots, P^{(6)}, Q^{(1)}, Q^{(2)}, Q^{(3)}$.
After sufficient steps, however, these points again become curves.
This phenomenon corresponds to the recovery of the information on the initial value, and thus gives a geometric interpretation of singularity confinement.

We will see in \textsection\ref{sec:degree} how to calculate the degree growth of the equation from the linear action on $\operatorname{Pic} X$.
According to Takenawa \cite{takenawa1}, the maximum eigenvalue of the linear action gives the dynamical degree of the equation.
Using
\begin{align*}
	D^{(1)} + C^{(2)} + C^{(4)} + C^{(6)}
	&\sim \{ x = -3a \} + \widetilde{C}^{(1)} \\
	&\sim \{ x = 3a \} + C^{(1)},
\end{align*}
we have
\begin{align*}
	\{ x = -3a \} \sim D^{(1)} + C^{(2)} + C^{(4)} + C^{(6)} - \widetilde{C}^{(1)}, \\
	\{ x = 3a \} \sim D^{(1)} - C^{(1)} + C^{(2)} + C^{(4)} + C^{(6)},
\end{align*}
where ``$\sim$'' means the linear equivalence.
Thus, the matrix of $\varphi_{*} \colon \operatorname{Pic} X \to \operatorname{Pic} X$ with respect to the basis $D^{(1)}, D^{(2)}, C^{(1)}, \ldots, C^{(6)}, \widetilde{C}^{(1)}, \widetilde{C}^{(2)}, \widetilde{C}^{(3)}$ is
\begin{equation}\label{eq:exmatrix}
\left( \begin{matrix}
	0 & 1 &   &   &   &   &   & 1 &   &   & 1 \\
	1 & 0 &   &   &   &   &   & 0 &   &   & 0 \\
	  &   & 0 &   &   &   &   & 0 &   &   & -1 \\
	  &   & 1 & 0 &   &   &   & 1 &   &   & 1 \\
	  &   & 0 & 1 & 0 &   &   & 0 &   &   & 0 \\
	  &   &   & 0 & 1 & 0 &   & 1 &   &   & 1 \\
	  &   &   &   & 0 & 1 & 0 & 0 &   &   & 0 \\
	  &   &   &   &   & 0 & 1 & 1 &   &   & 1 \\
	  &   &   &   &   &   & 0 & -1 & 0 &  & 0 \\
	  &   &   &   &   &   &   & 0 & 1 & 0 & 0 \\
	  &   &   &   &   &   &   & 0 & 0 & 1 & 0
\end{matrix} \right).
\end{equation}
In fact, since the eigenvalues of this matrix all have modulus $1$, the entropy of this equation is $0$.
\end{example}

In the above example, we started with $\mathbb{P}^1 \times \mathbb{P}^1$ and only used blow-ups to obtain a space of initial conditions.
However, it is also possible to start with $\mathbb{P}^2$ (or a Hirzebruch surface $\mathbb{F}_a$) and, in general, blow-downs are also necessary to obtain a space of initial conditions.
If we admit the use of blow-downs, we can take an arbitrary rational surface as a starting point.
Therefore, the definition of a space of initial conditions is as follows:

\begin{definition}[space of initial conditions for autonomous systems]\label{def:sicautonomous}
If for an autonomous equation $\varphi \colon \mathbb{P}^2 \dasharrow \mathbb{P}^2$, there exist a rational surface $X$ and a birational map $f \colon X \dasharrow \mathbb{P}^2$ such that $f^{-1} \circ \varphi \circ f$ is an automorphism on $X$:
\[
\xymatrix{
	X \ar[r]^{\sim} \ar@{.>}[d]^{f} & X \ar@{.>}[d]^{f} \\
	\mathbb{P}^2 \ar@{.>}[r]^{\varphi} & \mathbb{P}^2,
}
\]
then $X$ is called a {\em space of initial conditions} for $\varphi$.
That is, an autonomous equation has a space of initial conditions if it can be regularized as an automorphism on some rational surface.
\end{definition}

It is important to note that in general $f$ is a composition of a finite number of blow-ups and blow-downs.

\begin{remark}\label{rem:bdownautonomous}
Consider an autonomous equation $\varphi \colon \mathbb{P}^2 \dasharrow \mathbb{P}^2$ with a space of initial conditions $f \colon X \dasharrow \mathbb{P}^2$ and assume that the degree of $\varphi^n$ is unbounded.
In this case, $X$ has infinitely many exceptional curves of the first kind and thus Theorem~\ref{app:nagata} implies that there exists a birational morphism $g \colon X \to \mathbb{P}^2$.
Let $\psi = g \circ f^{-1} \circ \varphi \circ f \circ g^{-1}$.
Then $\psi$ is a birational automorphism on $\mathbb{P}^2$:
\[
\xymatrix{
	\mathbb{P}^2 \ar@{.>}[rrr]^{\psi} & & & \mathbb{P}^2 \\
	& X \ar[r]^{\sim} \ar@{.>}[ld]_{f} \ar[ul]^{g} & X \ar@{.>}[rd]^{f} \ar[ur]_{g} & \\
	\mathbb{P}^2 \ar@{.>}[rrr]^{\varphi} & & & \mathbb{P}^2.
}
\]
If we identify two equations that are transformed to each other by a coordinate change of $\mathbb{P}^2$, then $\varphi$ and $\psi$ are the same equation.
Therefore, by changing coordinates on $\mathbb{P}^2$ appropriately, we can think of $f$ in Definition~\ref{def:sicautonomous} as a composition of blow-ups as long as the degree growth of the equation is unbounded.
\end{remark}

All automorphisms on rational surfaces have been classified by Gizatullin \cite{rationalgsurfaces}.
On the other hand, all birational automorphisms on surfaces have been classified by Diller and Favre \cite{df}.
Extracting the classification of birational automorphisms on rational surfaces from their theorem and interpreting it from the viewpoint of integrable systems, we have the following classification of autonomous equations of the plane:

\begin{theorem}[Diller-Favre \cite{df}]\label{thm:df}
Autonomous equations $\varphi$ of the plane are classified into the following $5$ classes:
\begin{itemize}
\item[class $1$:]
The degree of $\varphi^n$ is bounded.

This type of equation has a space of initial conditions.

For example, projective transformations on $\mathbb{P}^2$ and periodic mappings belong to this class.

\item[class $2$:]
The degree of $\varphi^n$ grows linearly.

This type of equation does not have a space of initial conditions.

Most linearizable mappings belong this class.

\item[class $3$:]
The degree of $\varphi^n$ grows quadratically.

This type of equation has a space of initial conditions.
It is an elliptic surface and $\varphi$ preserves the elliptic fibration on the surface.

For example, the QRT mappings belong to this class \cite{qrt, qrttsuda, qrtbook}.

\item[class $4$:]
The degree of $\varphi^n$ grows exponentially but the equation has a space of initial conditions.

Its Picard number is greater than $10$.

For example, the Hietarinta-Viallet equation belongs to this class.

\item[class $5$:]
The degree of $\varphi^n$ grows exponentially and the equation does not have a space of initial conditions.

``Most'' equations belong to this class.
\end{itemize}
\end{theorem}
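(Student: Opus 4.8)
The plan is to deduce this from the Diller--Favre classification of bimeromorphic self-maps of compact K\"ahler surfaces \cite{df} (together with Gizatullin's description of automorphisms of rational surfaces \cite{rationalgsurfaces}), and then to translate ``birationally conjugate to an automorphism of a rational surface'' into ``has a space of initial conditions'' by means of Definition~\ref{def:sicautonomous} and Remark~\ref{rem:bdownautonomous}: these two conditions coincide, since every surface birationally equivalent to $\mathbb{P}^2$ is rational and conversely. The starting point is the fundamental dichotomy: for any birational $\varphi$ there is an \emph{algebraically stable} rational model, one on which $(\varphi^{n})^{*} = (\varphi^{*})^{n}$ holds on the Picard group for all $n$; on such a model $\deg \varphi^{n}$ grows like $\|(\varphi^{*})^{n}\|$, so that its growth is bounded, linear, quadratic, or exponential according to the largest Jordan block and the spectral radius of $\varphi^{*}$. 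These four growth types, refined in the exponential case by whether $\varphi$ can be regularized, are exactly the five classes.

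I would first dispose of the subexponential cases. If $\deg \varphi^{n}$ is bounded, the $\varphi^{*}$-orbit of an ample class has bounded, finite-dimensional span, so $\varphi^{*}$ is quasi-unipotent with semisimple action; averaging produces a $\varphi^{*}$-invariant big-and-nef class, and contracting the curves it annihilates yields a model on which $\varphi$ is an automorphism --- class $1$ (projective transformations, periodic mappings). If the growth is linear, Diller--Favre show that $\varphi$ preserves a pencil of rational curves and acts on its base by a M\"obius transformation; moreover $\varphi$ cannot be regularized, since on an automorphism model $\varphi^{*}$ would be an isometry of the hyperbolic Picard lattice, and such an isometry of spectral radius $1$ has largest Jordan block of size $1$ or $3$, incompatible with the size-two block that linear growth forces --- class $2$ (linearizable mappings). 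If the growth is quadratic, $\varphi^{*}$ has, after passing to an iterate, a size-three unipotent Jordan block at $1$, corresponding to an invariant genus-one (Halphen) pencil; contracting to the relatively minimal elliptic surface lets $\varphi$ descend to an automorphism preserving the fibration, so this class does admit a space of initial conditions and the surface is elliptic --- class $3$ (QRT mappings \cite{qrt}).

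For the exponential case, where $\lambda := \lim_{n \to \infty} (\deg \varphi^{n})^{1/n} > 1$, Diller--Favre's main theorem produces the fork. Either some birational model makes $\varphi$ an automorphism of a rational surface $X$: then $\varphi^{*}$ is an isometry of $\operatorname{Pic} X \cong \mathbb{Z}^{1,\rho-1}$ fixing $K_{X}$, with spectral radius $\lambda > 1$, hence loxodromic; since the Weyl group of $E_{n}$ (arising from $n$ blow-ups of $\mathbb{P}^2$, so $\rho = n+1$) is finite for $n \le 8$ and affine for $n = 9$, and contains loxodromic elements only for $n \ge 10$, one must have $\rho > 10$ --- class $4$ (the Hietarinta--Viallet equation \cite{hv}, \cite{takenawa1}). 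Or no birational model regularizes $\varphi$; a dimension count in the space of birational maps shows this is the generic situation --- class $5$.

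The genuinely hard input --- the existence of an algebraically stable model, and the sharp regularizability dichotomy in the exponential case --- is precisely what Diller--Favre \cite{df} supply, and I would use it as a black box; that exponential fork is the main obstacle, and is exactly why it is imported rather than reproved here. What remains on our side is bookkeeping: identifying the invariant elliptic fibration in class $3$ with a Halphen pencil, deriving the Picard-number bound in class $4$ from the classification of the Coxeter lattices $E_{n}$, and checking --- via Remark~\ref{rem:bdownautonomous}, which trades blow-downs for coordinate changes on $\mathbb{P}^2$ when the degree growth is unbounded --- that the automorphism model obtained really is of the form demanded by Definition~\ref{def:sicautonomous}.
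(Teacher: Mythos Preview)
The paper does not prove this theorem; it is simply quoted from Diller--Favre \cite{df} (reinterpreted in the language of integrable systems), with no proof given in the text. There is therefore nothing to compare your attempt against. Your sketch is a reasonable outline of how one extracts the five classes from \cite{df} and \cite{rationalgsurfaces}, which is precisely what the paper does implicitly by citation.

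One comment on your class~4 argument: the route through the Weyl groups of $E_n$ is more elaborate than needed, and also incomplete as stated, since you have not justified that $\varphi^{*}$ acts through the Weyl group (this is not automatic; it requires that $\varphi^{*}$ permute a set of simple roots, or an equivalent hypothesis). The paper itself, when proving the nonautonomous analogue in Proposition~\ref{prop:picardgrowthna}, gives a much more direct argument that works equally well here: $\varphi^{*}$ fixes $K_X$, and if $v$ is the dominant eigenvector (with eigenvalue $\lambda>1$) then $v\cdot K_X=0$; since $v$ is isotropic and $K_X$ is not a multiple of $v$ (different eigenvalues), Lemma~\ref{lem:lem1} forces $K_X^2<0$, i.e.\ $\rho(X)>10$. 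This bypasses any appeal to the structure of Coxeter groups.
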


Moreover, Diller and Favre showed that the value of the dynamical degree of an equation is quite restricted.

\begin{definition}
A reciprocal quadratic integer is a root of $\lambda^2 - a \lambda + 1 = 0$ for some integer $a$.
A real algebraic integer $\lambda > 1$ is a Pisot number if all its conjugates have modulus less than $1$.
A real algebraic integer $\lambda > 1$ is a Salem number if $1/\lambda$ is a conjugate and all (but at least one) of the other conjugates lie on the unit circle.
\end{definition}

\begin{remark}
It goes without saying that reciprocal quadratic integers greater than $1$ and Salem numbers are by definition irrational.
\end{remark}

\begin{theorem}[Diller-Favre \cite{df}]\label{thm:autonomousgrowth}
The dynamical degree of an autonomous equation of the plane is $1$, a Pisot number or a Salem number.
\end{theorem}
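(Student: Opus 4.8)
The plan is to derive this from the structure of the linear action on cohomology that underlies the classification of Theorem~\ref{thm:df}, together with a careful analysis of the spectral radius of that action. Recall that an autonomous equation $\varphi\colon\mathbb{P}^2\dasharrow\mathbb{P}^2$, after suitable blow-ups (possibly combined with blow-downs, but by Remark~\ref{rem:bdownautonomous} and the Diller--Favre framework one may pass to a model where $\varphi$ lifts to an automorphism $F$ of a rational surface $X$ when the degree growth is unbounded, and the bounded case gives dynamical degree $1$ trivially), one obtains a $\mathbb{Z}$-linear automorphism $F^{*}\colon\operatorname{Pic}X\to\operatorname{Pic}X$ preserving the intersection form. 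The first step is to recall the key fact from Diller--Favre that the dynamical degree $\lambda(\varphi)$ equals the spectral radius of $F^{*}$ acting on $\operatorname{Pic}X$ (or on $H^{1,1}$ in their analytic formulation), and that it is an algebraic integer because $F^{*}$ is an integer matrix. So the whole question reduces to: which real algebraic integers $\lambda>1$ can arise as the spectral radius of an integer matrix preserving a nondegenerate symmetric bilinear form of signature $(1,N-1)$ (the Hodge index signature on $\operatorname{Pic}X$)?

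The second step is the arithmetic heart of the argument. Let $\chi(t)\in\mathbb{Z}[t]$ be the characteristic polynomial of $F^{*}$; it is monic with $\chi(0)=\pm1$ since $F^{*}\in\mathrm{GL}(\operatorname{Pic}X)$. The preservation of the intersection form forces $\chi$ to be (essentially) reciprocal: if $\mu$ is an eigenvalue then so is $1/\mu$, with the same multiplicity, because $F^{*}$ is conjugate to $(F^{*})^{-T}$ via the Gram matrix of the form. Next, invoke the signature constraint: since the form has signature $(1,N-1)$ and $F^{*}$ preserves it, an argument with the invariant subspaces (the standard ``a $\mathbb{Z}$-isometry of a Lorentzian lattice has at most one eigenvalue of modulus $>1$'') shows that $F^{*}$ has exactly one eigenvalue of modulus $>1$, namely $\lambda$, exactly one of modulus $<1$, namely $1/\lambda$, and all remaining eigenvalues lie on the unit circle. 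Factor out the cyclotomic part: $\chi(t)=C(t)\,S(t)$ where $C(t)$ is a product of cyclotomic polynomials and $S(t)$ is the minimal polynomial of $\lambda$ (up to sign $\chi(0)=\pm1$ forces $S(0)=\pm1$ after removing cyclotomic factors). Now distinguish cases by $\deg S$: if $\deg S\le 2$ then since $\lambda\cdot(\text{other root})=\pm1$ and the polynomial is monic with constant term $\pm1$, $\lambda$ satisfies $\lambda^{2}-a\lambda\pm1=0$; the case with $+1$ gives (if $\lambda>1$ is real) a reciprocal quadratic integer, and one checks the $-1$ case is incompatible with $\lambda>1$ real together with $1/\lambda$ being a conjugate unless it degenerates. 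If $\deg S\ge 3$, then $S$ is reciprocal, has exactly one root $\lambda$ outside the closed unit disk and its reciprocal $1/\lambda$ strictly inside, with all other roots on the unit circle — that is precisely the definition of a Salem number. Finally, I should rule out $\lambda$ being a Pisot number of degree $\ge 3$ that is not Salem: but a Pisot number $>1$ has \emph{all} conjugates of modulus $<1$, which contradicts reciprocity (which forces $1/\lambda$, of modulus $<1$, to be a conjugate, hence forces another conjugate $\lambda$ of modulus $>1$) unless $\deg=2$ — so the only Pisot numbers occurring are the reciprocal quadratic integers, and the statement's inclusion of ``Pisot number'' is really the quadratic sub-case. (Alternatively one records all three possibilities $1$, reciprocal quadratic integer, Salem number, noting reciprocal quadratic integers $>1$ are exactly the quadratic Pisot--and--reciprocal numbers.)

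The main obstacle, and where I would spend the most care, is establishing rigorously that $F^{*}$ has exactly one eigenvalue off the unit circle — i.e.\ that the spectral radius is a \emph{simple} root of $\chi$ with all other eigenvalues of modulus $\le 1$ and in fact that those of modulus exactly $1$ behave well enough to be roots of unity after factoring. This is the Lorentzian-lattice-isometry lemma; the clean way to prove it is: the form has a positive cone, $F^{*}$ (or a power of it, to fix orientation) preserves one nappe, and a Perron--Frobenius/hyperbolic-geometry argument on the isometry of hyperbolic space $\mathbb{H}^{N-1}$ shows the action is either elliptic, parabolic, or loxodromic, the loxodromic case giving exactly one expanding eigenvalue $\lambda>1$ and its reciprocal, elliptic/parabolic giving $\lambda=1$. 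One then needs: the characteristic polynomial being reciprocal with integer coefficients and exactly two roots off the unit circle, together with a classical lemma (Kronecker's theorem) that an algebraic integer all of whose conjugates lie on the unit circle is a root of unity, to conclude the unit-circle eigenvalues are roots of unity and hence the complementary factor $S(t)$ has the Salem or reciprocal-quadratic shape. Citing Diller--Favre for the identification of $\lambda(\varphi)$ with the spectral radius, the bulk of the remaining work is this linear-algebra-over-$\mathbb{Z}$ classification, which is standard but must be assembled carefully.
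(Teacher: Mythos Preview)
The paper does not supply its own proof of this theorem; it simply cites Diller--Favre. So there is no ``paper's approach'' to compare against, and the relevant question is whether your argument stands on its own.

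It does not, and the gap is structural rather than technical. Your proof begins by lifting $\varphi$ to an automorphism $F$ of a rational surface $X$, then analyses the isometry $F^{*}$ of the Lorentzian lattice $\operatorname{Pic} X$. But Theorem~\ref{thm:autonomousgrowth} is stated for \emph{all} autonomous birational maps of the plane, including those in classes~2 and~5 of Theorem~\ref{thm:df}, which by definition do \emph{not} admit a space of initial conditions. For such maps there is no surface on which $\varphi$ becomes an automorphism, so your lattice-isometry argument never gets off the ground. What you have actually proved (correctly, modulo the hyperbolic-isometry trichotomy you sketch) is Theorem~\ref{thm:sicgrowth}: when a space of initial conditions exists, the dynamical degree is $1$, a reciprocal quadratic integer, or a Salem number.

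Your own final paragraph reveals the problem. You observe that reciprocity of the characteristic polynomial rules out non-quadratic Pisot numbers, and then try to explain away the word ``Pisot'' in the statement as meaning only the quadratic case. That reading is wrong: the whole content of Theorem~\ref{thm:autonomousgrowth} beyond Theorem~\ref{thm:sicgrowth} is precisely that genuine higher-degree Pisot numbers \emph{do} occur as dynamical degrees, for maps in class~5. In Diller--Favre's treatment of that case one works on an algebraically stable model where $\varphi^{*}$ is not an isometry and $(\varphi^{*})^{-1}\neq\varphi_{*}$; the characteristic polynomial is then not reciprocal, which is exactly what allows Pisot numbers in. Your argument cannot reach this case.
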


\begin{theorem}[Diller-Favre \cite{df}]\label{thm:sicgrowth}
If an autonomous equation of the plane has a space of initial conditions, then its dynamical degree must be $1$, a reciprocal quadratic integer greater than $1$ or a Salem number.
If the dynamical degree is $1$, then the degree growth is bounded or quadratic.
In particular, this implies that if the degree grows linearly, then the equation does not have a space of initial conditions.
\end{theorem}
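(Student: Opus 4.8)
The plan is to derive the statement from the earlier results together with the elementary theory of isometries of the Picard lattice.

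The assertions concerning dynamical degree $1$ are immediate from Theorem~\ref{thm:df}: an equation with a space of initial conditions belongs to class~$1$, $3$ or~$4$, and dynamical degree~$1$ excludes the exponentially growing class~$4$, leaving bounded degree growth (class~$1$) or quadratic degree growth (class~$3$); moreover linear degree growth is class~$2$, which has no space of initial conditions. So only the first sentence needs an argument, and we may assume the dynamical degree $\lambda$ satisfies $\lambda>1$, in which case the degree growth is unbounded.

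Let $X$ be a space of initial conditions for $\varphi$, so that $F:=f^{-1}\circ\varphi\circ f$ is an automorphism of the rational surface $X$. Since the degree growth is unbounded, Remark~\ref{rem:bdownautonomous} lets us assume, after a coordinate change on $\mathbb{P}^2$, that $X$ is obtained from $\mathbb{P}^2$ by a finite sequence of blow-ups; hence $\operatorname{Pic}X$, equipped with the intersection form, is a unimodular lattice of signature $(1,\rho-1)$, where $\rho=\operatorname{rank}\operatorname{Pic}X$. The map $F_{*}\colon\operatorname{Pic}X\to\operatorname{Pic}X$ is then an isometry of this lattice whose inverse $(F^{-1})_{*}$ is again integral, and (as reviewed in \textsection\ref{sec:degree}, following Takenawa~\cite{takenawa1}) its spectral radius is exactly $\lambda$; in particular $\lambda$ is an algebraic integer, which is already more than Theorem~\ref{thm:autonomousgrowth} gives in general.

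It then remains to analyse the characteristic polynomial $\chi(t)\in\mathbb{Z}[t]$ of $F_{*}$. Because $F_{*}$ preserves a nondegenerate form, $\chi$ is reciprocal, so its roots are stable under $\mu\mapsto 1/\mu$ with multiplicities; and because the form has signature $(1,\rho-1)$, at most one eigenvalue has absolute value greater than $1$. As $\lambda>1$ is such an eigenvalue, it is the only one, it is real and a simple root of $\chi$, the number $1/\lambda$ is also a root, and every other root lies on the unit circle. From here a short argument with the minimal polynomial $m_{\lambda}$ of $\lambda$ (using $m_{\lambda}\mid\chi$, that $\lambda$ is a simple root, that $\chi$ is reciprocal, and that $m_{\lambda}(0)\in\mathbb{Z}\setminus\{0\}$) shows that $1/\lambda$ must itself be a conjugate of $\lambda$ and that $m_{\lambda}$ has no root of modulus strictly between $1$ and $\lambda$; hence all conjugates of $\lambda$ other than $\lambda$ and $1/\lambda$ lie on the unit circle. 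By the definitions this means $\lambda$ is a Salem number when $\deg m_{\lambda}\geq 4$, while $\deg m_{\lambda}\leq 3$ forces $\deg m_{\lambda}=2$ and $m_{\lambda}(t)=t^{2}-at+1$ with $a=\lambda+1/\lambda\in\mathbb{Z}$, i.e.\ $\lambda$ is a reciprocal quadratic integer greater than $1$. I expect the main delicate point to be the ``at most one eigenvalue off the unit circle'' fact (a Lorentzian, Hodge-index-type input) together with justifying that $\lambda$ is a \emph{simple} root, since the remainder is bookkeeping with the reciprocal polynomial $\chi$.
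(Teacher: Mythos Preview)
The paper does not prove Theorem~\ref{thm:sicgrowth} directly---it is cited as a result of Diller--Favre---but the nonautonomous generalization is obtained later as a corollary of Proposition~\ref{prop:jordan}, and your argument is essentially the same as the one behind that proposition: analyze the lattice isometry $F_*$ on $\operatorname{Pic} X$ and use the Lorentzian signature $(1,\rho-1)$ (your ``Hodge-index-type input'' is exactly Lemmas~\ref{lem:lem1} and~\ref{lem:symmetricjordan}, the latter proved in Appendix~\ref{sec:proofoflemma}) to force the eigenvalue structure $\{\lambda,1/\lambda,\mu_1,\dots\}$ with $|\mu_i|=1$. The only presentational differences are that you dispatch the dynamical-degree-$1$ case by quoting Theorem~\ref{thm:df} rather than reading it off the Jordan form as in Proposition~\ref{prop:jordan}, and that you spell out the minimal-polynomial step for the Salem/reciprocal-quadratic dichotomy more explicitly than the paper does (there it is asserted in the statement of Proposition~\ref{prop:jordan} but not argued in the proof).
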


Theorem~\ref{thm:sicgrowth} says that if a mapping has a space of initial conditions, then the value of its dynamical degree (and algebraic entropy) is strongly restricted.
Thus, it is sometimes possible to prove the nonexistence of a space of initial conditions by calculating the algebraic entropy \cite{extendedhv}.

It is well-known that there is a close relation between the degree growth of an equation and the Picard number of its space of initial conditions:

\begin{proposition}\label{prop:picardgrowth}
If an equation has a space of initial conditions with the Picard number less than $10$ (resp.~$11$), then its degree growth is unbounded (resp.~at most quadratic).
Moreover, if the degree growth is quadratic and a space of initial conditions is minimal (Definition~\ref{def:minimalautonomous}), then its Picard number is $10$.
\end{proposition}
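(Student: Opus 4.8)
The plan is to transfer the question to the linear action $\varphi_{*}$ on $\operatorname{Pic} X$ of the automorphism underlying the equation, using that for a rational surface $X$ the intersection form makes $\operatorname{Pic} X$ a unimodular lattice of signature $(1,\rho-1)$, where $\rho$ denotes the Picard number, and that $K_{X}^{2}=10-\rho$. From \textsection\ref{sec:degree} I take that the growth of $\deg\varphi^{n}$ is governed by that of $\varphi_{*}^{n}$: in particular the degree is bounded if $\varphi_{*}$ has finite order, and the dynamical degree is the spectral radius of $\varphi_{*}$. The two properties of $\varphi_{*}$ I would use are that it is a lattice isometry and that $\varphi_{*}K_{X}=K_{X}$, the latter because the canonical class of a surface is intrinsic.

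Suppose first $\rho<10$, so $K_{X}^{2}>0$. Then $\mathbb{R}K_{X}$ is positive definite, hence $K_{X}^{\perp}\subset\operatorname{Pic} X$ is negative definite; as $\varphi_{*}$ fixes $K_{X}$ it preserves $K_{X}^{\perp}$, and the isometry group of a finite-rank negative definite lattice is finite, so $\varphi_{*}$ has finite order and the degree growth is bounded. For the statement with $11$ in place of $10$, the only new case is $\rho=10$, so $K_{X}^{2}=0$; here I would show the dynamical degree is $1$. If $\varphi_{*}$ had an eigenvalue $\lambda$ with $|\lambda|>1$, then (replacing $\varphi_{*}$ by $\varphi_{*}^{2}$ if needed, and using the standard structure of isometries of Lorentzian lattices) there is a real eigenvector $\theta$ with $\varphi_{*}\theta=\lambda\theta$ and $\theta^{2}=0$; applying the isometry $\varphi_{*}$ to the number $\theta\cdot K_{X}$ gives $\lambda\,(\theta\cdot K_{X})=\theta\cdot K_{X}$, so $\theta\cdot K_{X}=0$, and since two orthogonal isotropic vectors of a Lorentzian space are proportional, $\theta\in\mathbb{R}K_{X}$, contradicting $\varphi_{*}K_{X}=K_{X}$. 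Hence the dynamical degree is $1$, and Theorem~\ref{thm:sicgrowth} then forces the degree growth to be bounded or quadratic.

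For the final assertion, assume the degree growth is quadratic and $X$ is a minimal space of initial conditions; since quadratic growth is unbounded, the first case above already gives $\rho\ge10$. By Theorem~\ref{thm:df} the equation is then in class $3$, so $X$ carries a $\varphi$-invariant genus-one fibration $\pi\colon X\to\mathbb{P}^{1}$. If $X$ were not relatively minimal as an elliptic surface it would contain $(-1)$-curves, and any such curve lies in one of the finitely many reducible fibres of $\pi$; this finite set of curves is intrinsic to $(X,\pi)$ and therefore $\varphi$-invariant, so contracting it produces a space of initial conditions for the same equation with strictly smaller Picard number, contradicting minimality. Hence $X$ is a relatively minimal rational elliptic surface, and such a surface has $\rho=10$.

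The step I expect to be the main obstacle is this last one: making rigorous that the $(-1)$-curves contained in the fibres can be blown down $\varphi$-equivariantly and that the contraction is again a space of initial conditions, and pinning down exactly what minimality (Definition~\ref{def:minimalautonomous}) contributes. By contrast, the lattice computations for $\rho<10$ and $\rho=10$ are routine once the dictionary between $\deg\varphi^{n}$ and $\varphi_{*}^{n}$ of \textsection\ref{sec:degree} and Theorem~\ref{thm:sicgrowth} are in hand.
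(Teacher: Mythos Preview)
Your treatment of the first two assertions matches the paper's (see the proof of the nonautonomous analogue, Proposition~\ref{prop:picardgrowthna}): the same use of $K_X^2=10-\rho$, the same restriction of $\varphi_*$ to $K_X^\perp$ when $\rho<10$, and the same isotropic--eigenvector argument when $\rho=10$.

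For the final assertion you take a genuinely different route. The paper (in the more general nonautonomous setting, Lemma~\ref{lem:keylemma}) never invokes an elliptic fibration: it takes the normalised dominant eigenvector $v_1$, uses Riemann--Roch and Serre duality to show that $v_1+K_X$ is effective and nonzero once $\rho>10$, and then reads off from its irreducible components a $\sigma$-invariant set of mutually disjoint exceptional curves of the first kind, contradicting minimality. Your approach via the invariant genus-one fibration is valid in the autonomous case and more geometric, but two points deserve care. First, Theorem~\ref{thm:df} as stated in the paper only guarantees \emph{some} elliptic space of initial conditions, not a fibration on the given minimal $X$; for that you really want Gizatullin's theorem (reference~\cite{rationalgsurfaces}) applied to the automorphism $\varphi$ on $X$ itself. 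Second, the disjointness you flag as the obstacle is in fact easy: two distinct $(-1)$-curves $C,C'$ contained in fibres lie in $f^\perp$ (with $f$ the fibre class), so $(C+C')^2\le 0$ forces $C\cdot C'\le 1$, and equality would make $[C+C']$ isotropic in $f^\perp$, hence proportional to $f$, which is impossible since $(C+C')\cdot K_X=-2$ while $f\cdot K_X=0$. The paper's Riemann--Roch argument buys generality (it works for Cremona isometries, where no fibration need exist); your argument buys transparency when an honest automorphism and fibration are available.
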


All autonomous mappings with quadratic degree growth have been classified in \cite{classification2d}.

Moreover, there is a strong result about equations with bounded degree:

\begin{theorem}[Blanc-D\'{e}serti \cite{boundedp2}]
Let $\varphi$ be a nonperiodic equation with bounded degree growth and let $X$ be a space of initial conditions.
Then, $\varphi$ can be minimized from $X$ to either $\mathbb{P}^2$ or a Hirzebruch surface $\mathbb{F}_a$ with $a \ne 1$.
Furthermore, $\varphi$ is birationally conjugate to a projective transformation on $\mathbb{P}^2$.
\end{theorem}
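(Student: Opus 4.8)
The plan is to package the two hypotheses as a positive‑dimensional group of symmetries and then run a minimal model program that keeps track of $\varphi$. Set $\psi=f^{-1}\circ\varphi\circ f\in\operatorname{Aut}(X)$, so $X$ is a rational surface on which $\psi$ acts biregularly. Since $X$ is rational, the intersection form on $\operatorname{Pic}(X)$ has signature $(1,\rho-1)$, with $\rho$ the Picard number, and $\psi_{*}$ is a lattice isometry fixing $K_{X}$; writing $H$ for the pull‑back under $f$ of a line on $\mathbb{P}^{2}$, bounded degree growth of $\varphi$ is equivalent to the $\psi_{*}$‑orbit of the ample class $H$ staying bounded in the positive cone. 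Identifying the projectivised positive cone with hyperbolic space $\mathbb{H}^{\rho-1}$, a bounded orbit has a circumcenter, which $\psi_{*}$ must fix; hence $\psi_{*}$ lies in a maximal compact subgroup of $O(1,\rho-1)$ and, being an integral matrix, has finite order $N$. Consequently $\psi^{N}$ fixes $H$, so it lies in the group of automorphisms of $X$ fixing $H$, which is a linear algebraic group; and since $\varphi$ — hence $\psi$, hence $\psi^{N}$ — is nonperiodic, $\psi^{N}$ has infinite order, so its Zariski closure is a positive‑dimensional algebraic group. Nonperiodicity is indispensable here: a periodic bounded‑degree map may have a del Pezzo surface of degree at most $7$ as its minimal space of initial conditions — the Geiser and Bertini involutions are the classical examples — so the statement fails without it.

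Let $G\subseteq\operatorname{Aut}(X)$ be the Zariski closure of $\langle\psi\rangle$; by the previous paragraph it is a positive‑dimensional algebraic group (with finitely many components) containing $\psi$. Run the $G$‑equivariant minimal model program on $X$: contracting $G$‑invariant $(-1)$‑curves and disjoint $G$‑invariant unions of them, one reaches after finitely many steps a $G$‑minimal rational surface $X_{0}$ together with a $G$‑equivariant — in particular $\psi$‑equivariant — birational morphism $X\to X_{0}$. By the classification of minimal rational $G$‑surfaces, $X_{0}$ is either a del Pezzo surface or a conic bundle over $\mathbb{P}^{1}$; and because $X_{0}$ carries the action of the positive‑dimensional group $G$ while $\psi_{*}$ has finite order and $\psi$ is nonperiodic, one checks — ruling out del Pezzo surfaces of degree $\le5$ (finite automorphism group, forcing periodicity) and of degrees $6,7$ (either a $\psi$‑invariant disjoint pair of $(-1)$‑curves exists, or $\psi$ is forced periodic), and conic bundles with reducible fibres (few reducible fibres fall into the del Pezzo cases, many force the automorphism group to be finite) — that in fact $X_{0}\cong\mathbb{P}^{2}$ or $X_{0}\cong\mathbb{F}_{a}$ with $a\ne1$ (note $\mathbb{F}_{1}$ is not minimal). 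This proves the first assertion, and if $X_{0}=\mathbb{P}^{2}$ the induced map already lies in $\operatorname{PGL}_{3}$.

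It remains to pass from $\mathbb{F}_{a}$, $a\ge2$, to $\mathbb{P}^{2}$. The induced automorphism preserves the ruling $\pi\colon\mathbb{F}_{a}\to\mathbb{P}^{1}$ and, since $a\ge1$, the negative section $C_{0}$, and it descends to $\bar\varphi\in\operatorname{PGL}_{2}$ on the base, which has a fixed point $p_{0}$. Choosing a $\varphi$‑fixed point of $\pi^{-1}(p_{0})$ lying off $C_{0}$ and performing the elementary transformation centred there produces a $\varphi$‑equivariant birational map to $\mathbb{F}_{a-1}$; iterating, one reaches $\mathbb{F}_{1}$ — whose $(-1)$‑section we then contract to get $\mathbb{P}^{2}$ — or $\mathbb{F}_{0}=\mathbb{P}^{1}\times\mathbb{P}^{1}$, from which one further equivariant elementary transformation centred at a fixed point yields $\mathbb{P}^{2}$. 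In every case $\varphi$ is birationally conjugate to an element of $\operatorname{Aut}(\mathbb{P}^{2})=\operatorname{PGL}_{3}$, that is, to a projective transformation.

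The technical heart is the case analysis in the second paragraph — deciding which minimal rational $G$‑surfaces can genuinely occur — because $\psi$ by itself need not preserve any disjoint family of $(-1)$‑curves (on a degree‑$6$ del Pezzo surface $\psi_{*}$ can rotate the hexagon of $(-1)$‑curves with order $6$). One must therefore exploit that $G$ is positive‑dimensional and commutes with $\psi^{N}$, together with the vanishing of the twisted sum $\sum_{i=0}^{N-1}R^{i}$ when $R$ is the finite‑order action of $\psi$ on the character lattice of a torus inside $G$ without a fixed vector — this is precisely what forces the offending $\psi$ to be periodic and hence excludes the del Pezzo and conic‑bundle cases. A secondary nuisance is the bookkeeping in the last paragraph when $\bar\varphi$ is parabolic, or the only available $\varphi$‑fixed point over $p_{0}$ lies on $C_{0}$, where one must instead use the other fixed point of $\bar\varphi$, or the group $G$ once more, to keep the elementary transformations index‑decreasing.
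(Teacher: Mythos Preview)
The paper does not prove this theorem: it is quoted from Blanc--D\'eserti \cite{boundedp2} as background for the autonomous classification, and no argument is given in the present paper. There is therefore no ``paper's own proof'' to compare your attempt against.

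That said, your outline follows the standard route used in \cite{boundedp2}: bounded degree growth forces $\psi_{*}$ to be an elliptic isometry of the hyperbolic lattice $\operatorname{Pic}X$, hence of finite order $N$; nonperiodicity then gives a positive-dimensional algebraic group $G=\overline{\langle\psi\rangle}$; running the $G$-MMP lands on $\mathbb{P}^{2}$ or some $\mathbb{F}_{a}$; and equivariant elementary transformations finish the job. The two places you yourself flag as the real work---the exclusion of del Pezzo surfaces of degree $\le 7$ and of conic bundles with singular fibres, and the descent from $\mathbb{F}_{a}$ when the only $\varphi$-fixed point on the distinguished fibre lies on $C_{0}$---are genuine and are precisely where \cite{boundedp2} spends its effort. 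Your sketch of the first (``either a $\psi$-invariant disjoint pair of $(-1)$-curves exists, or $\psi$ is forced periodic'') is on the right track but would need the actual case work to be a proof; your sketch of the second (``use the other fixed point of $\bar\varphi$, or the group $G$ once more'') is not yet an argument, since when $\bar\varphi$ is parabolic there is no other fixed point and an elementary transformation centred on $C_{0}$ increases the index. None of this is wrong in spirit, but as written it is an outline rather than a proof.
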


Therefore, besides periodic mappings, all autonomous integrable (zero algebraic entropy) equations of the plane are characterized by a minimal space of initial conditions with Picard number less than $11$.

While the most famous class of nonautonomous equations that have a space of initial conditions is that of the discrete Painlev\'{e} equations, there are many other examples.
For instance, using algebro-geometric methods, Takenawa considered a nonautonomous extension of the Hietarinta-Viallet equation \cite{takenawa1, takenawa2, takenawa3}.
In addition, one of the most important and powerful methods to find a nonautonomous equation with all singularities confined is so-called late confinement, which was first reported in \cite{lateconfinement}.
This method provides us with a family of nonautonomous equations that pass the singularity confinement test \cite{deautonomisation, redeeming}.

Until now, no general theory of nonautonomous mappings with a space of initial conditions has been formulated.
One of the main aims of this paper is such a classification of integrable equations with a space of initial conditions.
It is known that all discrete Painlev\'{e} equations have a space of initial conditions (by definition \cite{sakai}) and that they are integrable (as shown by Takenawa \cite{takenawa2}).
Then, is it conceivable that there exists an integrable equation that is not a discrete Painlev\'{e} equation but which has a space of initial conditions?

The reason why there has been almost no general theory of spaces of initial conditions in the nonautonomous case is the difficulty in setting up a suitable starting point.
In the autonomous case, an equation with a space of initial conditions is reduced to one automorphism on a single rational surface.
However, even if a nonautonomous system such as a discrete Painlev\'{e} equation has a space of initial conditions, it is in general not reducible to an automorphism on a surface.
Furthermore, in the nonautonomous case, even the centers of the blow-ups and therefore the obtained surface do depend on $n$.
As a result, a space of initial conditions is not a single surface in a strict sense, but rather a family of surfaces.
Therefore, choosing appropriate $\varphi_n$, one can obtain many pathological examples.
It is true that this kind of problem does not matter when we consider a concrete example such as a discrete Painlev\'{e} equation or a nonautonomous extension of the Hietarinta-Viallet equation.
However, if we are interested in a classification, we cannot avoid the need to set up an appropriate starting point.
In \textsection\ref{sec:na}, we shall first describe several artificial examples and then define a space of initial conditions for nonautonomous equations.
We will also recall the space of initial conditions in Sakai's sense and show that these two definitions are equivalent.

\textsection\ref{sec:degree} mainly contains preliminaries.
We shall see that, under our definition of a space of initial conditions, many analogues of the properties of autonomous equations still hold.

As in the autonomous case, in order to use the Picard number of a space of initial conditions in a classification, we must consider minimizations since the Picard number of a space of initial conditions can be artificially increased.
Minimization was considered by Carstea and Takenawa \cite{ct}, but general nonautonomous cases have not been considered.
In \textsection\ref{sec:minimalization}, we shall see that a minimization of a space of initial conditions in the nonautonomous case is in fact quite similar to that in the autonomous case.

\textsection\ref{subsec:main} is the main part of this paper.
We consider a minimization of an integrable equation with unbounded degree growth and a space of initial conditions to classify all such equations.
As a result, we will obtain the main theorem of this paper (Theorem~\ref{thm:main}), which states that an integrable mapping of the plane with unbounded degree growth which possesses a space of initial conditions must be one of the discrete Painlev\'{e} equations.
We also show the uniqueness of the minimization (Proposition~\ref{prop:unique}).

\textsection\ref{subsec:exp} contains some additional results on the minimization of a space of initial conditions in the nonintegrable case.
We will not classify such equations, but instead give a procedure to minimize a general space of initial conditions and show the uniqueness of the minimization.

For the convenience of the reader, in Appendix~\ref{sec:surf}, we describe the notations we use throughout the paper and recall some basic results on algebraic surfaces.
Appendix~\ref{sec:proofoflemma} is an elementary but rather involved proof of a fundamental fact in linear algebra (Lemma~\ref{lem:symmetricjordan}).


\section{Space of initial conditions for nonautonomous systems}\label{sec:na}

In this section, we define a space of initial conditions in the nonautonomous case.
First we show several examples in order to explain what is necessary in the definition of a space of initial conditions.
Next, we will state the definition (Definition~\ref{def:sic}).
Finally, we will see in Proposition~\ref{prop:corresp} that there is a correspondence between our definition of a space of initial conditions and that of Sakai.

\begin{example}\label{ex:bupnormal}
Let us consider a nonautonomous extension of the equation in Example~\ref{ex:firstexample}:
\[
	y_{n+1} = \frac{(y_n + 3a_n - 2 \alpha)y_{n-1} - 2 a_n y_n - 14 \alpha a_n}{y_n - 3 a_n - 2 \alpha},
\]
where $\alpha$ is a constant and $a_n$ satisfies $a_{n+1} = a_n + \alpha \ne 0$ \cite{ramanieq}.
We can recover the autonomous case by taking $\alpha = 0$.
As in the autonomous case, this equation has two confined singularity patterns:
\[
	\{
	3 a_n + 2 \alpha, \infty, a_n, \infty, -a_n - 6 \alpha, \infty, -3 a_n - 16 \alpha
	\}
\]
and
\[
	\{
	-3 a_n + 2 \alpha, -a_n + 3 \alpha, a_n + 6 \alpha, 3 a_n + 11 \alpha
	\}.
\]

Let us regularize this equation as a family of isomorphisms on surfaces by blow-ups.
The equation can be written as follows:
\[
	\varphi_n \colon \mathbb{P}^1 \times \mathbb{P}^1 \dasharrow \mathbb{P}^1 \times \mathbb{P}^1,
	\quad
	(x_n, y_n) \mapsto
	\left(
	y_n,
	\frac{(y_n + 3a_n - 2 \alpha)x_n - 2a_n y_n - 14 \alpha a_n}{y_n - 3 a_n - 2 \alpha}
	\right).
\]
Using the variables $s_n = 1/x_n$ and $t_n = 1/y_n$, we cover $\mathbb{P}^1 \times \mathbb{P}^1$ with $4$ copies of $\mathbb{C}^2$:
\[
	\mathbb{P}^1 \times \mathbb{P}^1 = (x_n, y_n) \cup \left( s_n, y_n \right) \cup \left( x_n, t_n \right) \cup \left( s_n, t_n \right).
\]
Let $X_n$ be the surface obtained by blowing up $\mathbb{P}^1 \times \mathbb{P}^1$ at the following $9$ points:
\begin{itemize}
\item
$P^{(1)}_n \colon \left( x_n, t_n \right) = (3a_n - \alpha, 0)$,

\item
$P^{(2)}_n \colon \left( s_n, y_n \right) = (0, a_n - 2 \alpha)$,

\item
$P^{(3)}_n \colon \left( x_n, t_n \right) = (a_n - 3 \alpha, 0)$,

\item
$P^{(4)}_n \colon \left( s_n, y_n \right) = (0, -a_n - 2 \alpha)$,

\item
$P^{(5)}_n \colon \left( x_n, t_n \right) = (-a_n - \alpha, 0)$,

\item
$P^{(6)}_n \colon \left( s_n, y_n \right) = (0, -3a_n + 2 \alpha)$,

\item
$Q^{(1)}_n \colon \left( x_n, y_n \right) = (-3a_n + 5 \alpha, -a_n + 4 \alpha)$,

\item
$Q^{(2)}_n \colon \left( x_n, y_n \right) = (-a_n + 5 \alpha, a_n + 4 \alpha)$,

\item
$Q^{(3)}_n \colon \left( x_n, y_n \right) = (a_n + 3 \alpha, 3a_n + 2 \alpha)$.
\end{itemize}
The configuration of these 9 points is almost the same as in the autonomous case (Figure~\ref{fig:ex1blowup}).
Since
\[
	\varphi_n \left( P^{(i)}_n \right) = P^{(i+1)}_{n+1}
\]
for $i = 1, \ldots, 5$ and
\[
	\varphi_n \left( Q^{(i)}_n \right) = Q^{(i+1)}_{n+1}
\]
for $i = 1, 2$, one finds that $\varphi_n$ is indeed an isomorphism from $X_n$ to $X_{n+1}$.

As in the autonomous case, let us label specific curves on $X_n$ as follows:
\begin{itemize}
\item
$D^{(1)}_n, D^{(2)}_n$: the strict transforms of the lines $\{ x = \infty \}$ and $\{ y = \infty \}$ in $\mathbb{P}^1 \times \mathbb{P}^1$, respectively.

\item
$C^{(i)}_n$: the exceptional curve of the blow-up at $P^{(i)}_n$.

\item
$\widetilde{C}^{(i)}_n$: the exceptional curve of the blow-up at $Q^{(i)}_n$.

\end{itemize}
These curves move under the equation as:
\begin{align*}
	& D^{(1)} \to D^{(2)} \to D^{(1)}, \\
	& \{ y = 3a + 2 \alpha \} \to C^{(1)} \to C^{(2)} \to \cdots \to C^{(6)} \to \{ x = -3a + 5 \alpha \}, \\
	& \{ y = -3a + 2 \alpha \} \to \widetilde{C}^{(1)} \to \widetilde{C}^{(2)} \to \widetilde{C}^{(3)} \to \{ x = 3a - \alpha \},
\end{align*}
where we omit the index $n$.
Thus, the matrix of $\varphi_{*} \colon \operatorname{Pic} X_n \to \operatorname{Pic} X_{n+1}$ with respect to the basis $(D^{(1)}, D^{(2)}, C^{(1)}, \ldots, C^{(6)}, \widetilde{C}^{(1)}, \widetilde{C}^{(2)}, \widetilde{C}^{(3)})$ coincides exactly with the one obtained in the autonomous case (\ref{eq:exmatrix}).
Therefore, the algebraic entropy of this equation is zero as well.
\end{example}

As we have seen in the above example, when considering a space of initial conditions, it is most important for the equation to be regularized as a (family of) isomorphism(s) on surfaces.
However, since there exist lots of pathological nonautonomous equations, this condition is so weak that we cannot hope to say anything about general properties of such equations.

Let us consider some of these examples.
In the following examples, we fix the starting index at $n = 0$, i.e.\,by $\deg \psi^n$ we denote $\deg (\psi_{n-1} \circ \cdots \circ \psi_0)$ (Remark~\ref{rem:degreeexist}).

\begin{example}\label{ex:power}
Let $\varphi$ be an arbitrary autonomous equation with unbounded degree growth and a space of initial condition $X$ (for example, the mapping $\varphi$ in Example~\ref{ex:firstexample}), and let $(d_n)_{n>0}$ be an arbitrary sequence of positive integers.
Define sequences $(p_n)_{n \ge 0}$ and $(q_n)_{n>0}$ by
\[
	p_0 = 0, \quad
	p_n = \max \{ k \in \mathbb{Z}_{\ge 0} \, | \, \deg \varphi^{k} \le d_n \}, \quad
	q_n = p_n - p_{n-1}.
\]
Let
\[
	\psi_n = \varphi^{q_n} \colon \mathbb{P}^2 \dasharrow \mathbb{P}^2
\]
for all $n > 0$.
Then, we have
\[
	\deg (\psi_n \circ \cdots \circ \psi_1) = \deg \varphi^{p_n} \approx d_n.
\]
Since $\varphi$ is an automorphism on $X$, so is $\psi_n$ for all $n$.
Therefore, by choosing $(d_n)_n$ appropriately, we can construct many equations that can be reduced to families of isomorphisms (automorphisms) on surfaces but that have arbitrary degree growth.

\noindent
\underline{Case 1}

Let $\lambda$ be an arbitrary real number greater than $1$ and let $d_n$ be the greatest integer not greater than $\lambda^n$.
In this case, the entropy of the mapping $\psi_n$ is $\log \lambda$.

\noindent
\underline{Case 2}

Let $\lambda$ as in Case 1 and let
\[
	d_n = \begin{cases}
		\text{the greatest integer not greater than } \lambda^n & (n \colon \text{even}) \\
		1 & (n \colon \text{odd}).
	\end{cases}
\]
In this case, the entropy of the mapping $\psi_n$ does not exist.
If we change the definition of the entropy to
\[
	\limsup_{n \to \infty} \frac{1}{n} \log \left( \deg \psi^n \right),
\]
then the entropy exists and is $\log \lambda$.

\noindent
\underline{Case 3}

Let $d_n = n$.
In this case, the degree of $\psi^n$ grows linearly but the equation can be reduced to a family of automorphisms on $X$ (in contrast to Theorem~\ref{thm:df}, class~2 for autonomous mappings).

\noindent
\underline{Case 4}

Let $d_n$ grow faster than any exponential function of $n$, for example $d_n = n^n$.
In this case, the entropy of the mapping $\psi_n$ is $+\infty$ (in contrast to Remark~\ref{rem:degreeexist} for autonomous mappings).
\end{example}

\begin{example}\label{ex:nonpower}
The same technique as above can also be used in case the original $\varphi$ does not have a space of initial conditions.
Let $\varphi$ be an autonomous equation with unbounded degree growth but no space of initial conditions (for example a linearizable mapping) and let $d_n = n^2$.
Then, we obtain a mapping $\psi_n$ that has a quadratic degree growth but cannot be regularized as a family of isomorphisms on surfaces.
\end{example}

\begin{example}
In the above two examples, the equations are quite artificial and practically impossible to write explicitly.
Usually, the term ``nonautonomous equation'' refers to an equation with several nonautonomous coefficients such as Example~\ref{ex:bupnormal}.
However, even in this class of equations, there are strange mappings.

Consider the equation
\[
	x_{n+1} = a_n x^2_n + (1 - a_n)x_n + b x_{n-1},
\]
where $b$ is a general constant and $a_n$ is a nonautonomous coefficient.
We are interested only in the case $a_n = 0, 1$.

In the case where $a_n$ is always $0$, this equation is a linear mapping and thus the degree growth is obviously bounded.
On the other hand, in the case where $a_n$ is always $1$, this equation is a H\'{e}non map \cite{henon} and its algebraic entropy is $\log 2$.

If $a_n$ can take both values $0$ and $1$, then these two cases are mixed.
It is obvious that for any real number $\lambda \in [1, 2]$, there exist a sequence $(a_n)_n$ such that the dynamical degree of the above equation is $\lambda$.

It is always possible to mix two different equations by using one nonautonomous coefficient.
For example, if we start with two autonomous equations that have the same space of initial conditions, then the mixed equation is reduced to a family of automorphisms on a surface but exhibits strange behavior.
\end{example}

What is important is that, even if the obtained surfaces and isomorphisms depend on $n$, their ``fundamental structures'' (for example, the intersection pattern of specific curves and the linear action induced on the Picard groups) are the same.
When we consider a concrete equation such as Example~\ref{ex:bupnormal}, it is (in principle) possible to check whether those structures do or do not depend on $n$.
However, it is difficult to define mathematically what constitutes a fundamental structure for general equations.
In this paper we shall therefore define a space of initial conditions as follows:

\begin{definition}[space of initial conditions for nonautonomous systems]\label{def:sic}
An equation $\varphi_n \colon \mathbb{P}^2 \dashrightarrow \mathbb{P}^2$ has a {\em space of initial conditions} if (after an appropriate coordinate change) the following three conditions are satisfied:
\begin{itemize}
\item
There exists a composition of blow-ups $\pi_n = \pi^{(1)}_n \circ \cdots \circ \pi^{(r)}_n \colon X_n \rightarrow \mathbb{P}^2$ for each $n$ such that the induced birational maps $\varphi_n \colon X_n \dashrightarrow X_{n+1}$ are all isomorphisms:
\[
\xymatrix{
	\ar[r] & X_{n-1} \ar[d]^{\pi_{n-1}} \ar[r]^{\sim} & X_n \ar[d]^{\pi_n} \ar[r]^{\sim} & X_{n+1} \ar[d]^{\pi_{n+1}} \ar[r] & \\
	\ar@{.>}[r] & \mathbb{P}^2 \ar@{.>}[r]^{\varphi_{n-1}} & \mathbb{P}^2 \ar@{.>}[r]^{\varphi_n} & \mathbb{P}^2 \ar@{.>}[r] & .
}
\]
\item
Let $e_n = (e^{(0)}_n, \ldots, e^{(r)}_n)$ be the geometric basis corresponding to $\pi_n$ (Definition~\ref{app:geometricbasis}).
Then, the matrices of $\varphi_{n*} \colon \operatorname{Pic} X_n \to \operatorname{Pic} X_{n+1}$ with respect to these bases do not depend on $n$.
\item
The set of all effective classes in $\operatorname{Pic} X_n$ does not depend on $n$, i.e.\,if $\sum_i a^{(i)} e^{(i)}_n \in \operatorname{Pic} X_n$ is effective, then so is $\sum_i a^{(i)} e^{(i)}_{k} \in \operatorname{Pic} X_{k}$ for any $k$.
\end{itemize}
\end{definition}

We refer the reader to Appendix~\ref{sec:surf} for an explanation of the notations and for a summary of some basic results on algebraic surfaces.

Note that in the nonautonomous case, a space of initial conditions does not consist of a single surface but of a family of surfaces.
It also contains information about the centers and ordering of the blow-ups.

\begin{remark}
As in the autonomous case (Remark~\ref{rem:bdownautonomous}), blow-downs are necessary, in general, to construct a space of initial conditions.
However, to avoid unnecessary complexity, we use the phrase ``after an appropriate coordinate change'' instead.
We will see in Remark~\ref{rem:bdownna} the rigorous definition including blow-downs.
\end{remark}

Usual nonconfining equations such as linearizable mappings and H\'{e}non maps do not satisfy the first condition in Definition~\ref{def:sic}.
On the other hand, Example~\ref{ex:power} does satisfy the first and third conditions but does not satisfy the second.

The third condition imposes some constraint on the centers and ordering of blow-ups.
Unfortunately, it is not easy in general to check the third condition in Definition~\ref{def:sic} for a concrete equation.
However, we shall see that even if only the first and second conditions are satisfied, we can still calculate the degree growth by Proposition~\ref{prop:jordan} (since its proof does not need the third condition).
One reason why we introduce the third condition is the correspondence to a space of initial conditions in Sakai's sense, which we shall introduce later.

\begin{remark}\label{rem:phi}
Let us first have a closer look at the second condition.
Let
\[
	\mathbb{Z}^{1, r} = \mathbb{Z}e^{(0)} \oplus \cdots \oplus \mathbb{Z}e^{(r)}
\]
and define on $\mathbb{Z}^{1, r}$ a symmetric bilinear form $(- , -)$ by
\[
	(e^{(i)}, e^{(j)}) = 
	\begin{cases}
		1 & (i = j = 0) \\
		-1 & (i = j \ne 0) \\
		0 & (i \ne j).
	\end{cases}
\]
Let
\[
	\iota_n \colon \mathbb{Z}^{1, r} \to \operatorname{Pic} X_n, \quad
	e^{(i)} \mapsto e^{(i)}_n
\]
and $\Phi_n = \iota^{-1}_{n+1} \varphi_{n*} \iota_n$:
\[
\xymatrix{
	\ar[r] & \mathbb{Z}^{1, r} \ar[r]^{\Phi_{n-1}} \ar[d]^{\iota_{n-1}} & \mathbb{Z}^{1, r}\ar[r]^{\Phi_n} \ar[d]^{\iota_n} & \mathbb{Z}^{1, r}\ar[r] \ar[d]^{\iota_{n+1}} & \\
	\ar[r] & \operatorname{Pic X_{n-1}} \ar[r]^{\varphi_{n-1*}} & \operatorname{Pic X_n} \ar[r]^{\varphi_{n*}} & \operatorname{Pic X_{n+1}} \ar[r] & .
}
\]
Then, the meaning of the second condition is that $\Phi_n$ does not depend on $n$.
We then simply denote $\Phi_n$ by $\Phi$.

We will use these notations in \textsection\ref{sec:degree}.
\end{remark}

\begin{lemma}
Let $K = \iota^{-1}_n K_{X_n} = - 3e^{(0)} + e^{(1)} + \cdots + e^{(r)}$.
Then $\Phi$ preserves $K$ and $(-, -)$, i.e.
\[
	\Phi K = K, \quad
	(v, w) = (\Phi v, \Phi w)
\]
for all $v, w \in \mathbb{Z}^{1, r}$.
\end{lemma}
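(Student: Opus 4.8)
The plan is to reduce everything to two standard facts about blow-ups of surfaces: that a geometric basis is orthonormal for the intersection form with the signature $(1, r)$ indicated, and that the canonical class of the blow-up of $\mathbb{P}^2$ at $r$ points (in general position or not) is $-3e^{(0)} + e^{(1)} + \cdots + e^{(r)}$ in terms of the geometric basis $e_n = (e^{(0)}_n, \ldots, e^{(r)}_n)$. These are recorded in Appendix~\ref{sec:surf}; I would cite the relevant statements there. Given those, the two claims become statements about the isomorphisms $\varphi_n \colon X_n \to X_{n+1}$ and their pushforwards on the Picard groups.

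First I would prove that $\varphi_{n*} \colon \operatorname{Pic} X_n \to \operatorname{Pic} X_{n+1}$ is an isometry for the intersection forms and sends $K_{X_n}$ to $K_{X_{n+1}}$. Both are immediate because $\varphi_n$ is an \emph{isomorphism} of smooth projective surfaces (the first condition in Definition~\ref{def:sic}): the intersection form is preserved by pullback/pushforward under any isomorphism, and the canonical class is an isomorphism invariant, so $\varphi_{n*} K_{X_n} = K_{X_{n+1}}$. Then I would transport these facts along the identifications $\iota_n \colon \mathbb{Z}^{1,r} \to \operatorname{Pic} X_n$. By the appendix, $\iota_n$ is an isometry from $(\mathbb{Z}^{1,r}, (-,-))$ onto $(\operatorname{Pic} X_n, \text{intersection form})$ and carries the distinguished vector $-3e^{(0)} + e^{(1)} + \cdots + e^{(r)}$ to $K_{X_n}$; in the notation of the statement this last vector is exactly $K$, and $K = \iota_n^{-1} K_{X_n}$ for \emph{every} $n$.

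Now $\Phi = \Phi_n = \iota_{n+1}^{-1} \varphi_{n*} \iota_n$ is a composition of three maps, each of which preserves the relevant data: $\iota_n$ is an isometry carrying $K$ to $K_{X_n}$, $\varphi_{n*}$ is an isometry carrying $K_{X_n}$ to $K_{X_{n+1}}$, and $\iota_{n+1}^{-1}$ is an isometry carrying $K_{X_{n+1}}$ back to $K$. Composing, $\Phi$ is an isometry of $(\mathbb{Z}^{1,r}, (-,-))$ with $\Phi K = K$. Since the second condition in Definition~\ref{def:sic} guarantees $\Phi_n$ is independent of $n$ (Remark~\ref{rem:phi}), writing $\Phi$ for the common value finishes both assertions $(v, w) = (\Phi v, \Phi w)$ for all $v, w$ and $\Phi K = K$.

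There is no serious obstacle here; the only point requiring a little care is bookkeeping with the identifications — making sure $\iota_n^{-1} K_{X_n}$ really is the \emph{same} element $K$ of $\mathbb{Z}^{1,r}$ for all $n$ (this is precisely why $K$ is defined without a subscript), and that the appendix's description of a geometric basis applies verbatim to each $\pi_n$ regardless of where the centers of the blow-ups lie. Once that is observed, the proof is a one-line composition-of-isometries argument, so I would keep it short and simply invoke the appendix for the two structural facts about geometric bases and canonical classes.
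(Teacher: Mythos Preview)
Your proposal is correct and follows exactly the same approach as the paper: the paper's proof is the single sentence ``Immediate from the fact that $\varphi_{n*}$ preserves the canonical class and the intersection number on the surface,'' and you have simply unpacked this by explicitly factoring $\Phi = \iota_{n+1}^{-1}\varphi_{n*}\iota_n$ and noting that each factor is an isometry carrying the canonical element to the canonical element. Your additional care with the bookkeeping (that $\iota_n^{-1}K_{X_n}$ is the same $K$ for every $n$) is correct and worth making explicit, but the underlying idea is identical.
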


\begin{proof}
Immediate from the fact that $\varphi_{n*}$ preserves the canonical class and the intersection number on the surface.
\end{proof}

Next, we review the notion of a space of initial conditions in Sakai's sense.

Let $X$ be a basic rational surface (Definition~\ref{app:rational}).
Let $e = (e^{(0)}, \ldots, e^{(r)}), \widetilde{e} = (\widetilde{e}^{(0)}, \ldots, \widetilde{e}^{(r)})$ be geometric bases and $\pi, \widetilde{\pi} \colon X \to \mathbb{P}^2$ the corresponding birational morphisms.
Then we obtain a birational automorphism $\widetilde{\pi} \circ \pi^{-1} \colon \mathbb{P}^2 \dashrightarrow \mathbb{P}^2$ which will become a part of an equation.

Let $\sigma$ be the $\mathbb{Z}$-linear map on $\operatorname{Pic} X$ defined by
\[
	e^{(0)} \mapsto \widetilde{e}^{(0)}, \ldots, e^{(r)} \mapsto \widetilde{e}^{(r)}.
\]
Suppose that $\sigma^n e = (\sigma^n e^{(0)}, \ldots, \sigma^n e^{(r)})$ is a geometric basis for each $n$ and let $\pi_n \colon X \to \mathbb{P}^2$ be the corresponding birational morphism.
Then, we obtain the equation
\[
	\varphi_n = \pi_{n+1} \circ \pi^{-1}_n \colon \mathbb{P}^2 \dasharrow \mathbb{P}^2.
\]

\begin{example}\label{ex:simplestcremona}
The following is probably the simplest example where $\sigma^n e$ is not a geometric basis.

We cover $\mathbb{P}^2$ by three copies of $\mathbb{C}^2$ as follows:
\[
	\mathbb{P}^2 = \left( x, y \right) \cup \left( \frac{x}{y}, \frac{1}{y} \right) \cup \left( \frac{y}{x}, \frac{1}{x} \right).
\]
Let $\pi^{(1)}, \pi^{(2)}, \pi^{(3)}$ be the blow-ups at the following points:
\begin{itemize}
\item
$\pi^{(1)}$: at $P^{(1)} \colon (x, y) = (0, 0)$,
\item
$\pi^{(2)}$: at $P^{(2)} \colon \left( \dfrac{1}{y}, \dfrac{x}{y} \right) = (0, 0)$,
\item
$\pi^{(3)}$: at $P^{(3)} \colon \left( \dfrac{1}{x}, \dfrac{x}{y} \right) = (0, 0)$.
\end{itemize}
Let $X$ be the surface obtained by the blow-ups $\pi = \pi^{(1)} \circ \pi^{(2)}  \circ \pi^{(3)}$ (Figure~\ref{fig:simplestcremona}) and let $e = (e^{(0)}, e^{(1)}, e^{(2)}, e^{(3)})$ be the corresponding geometric basis.
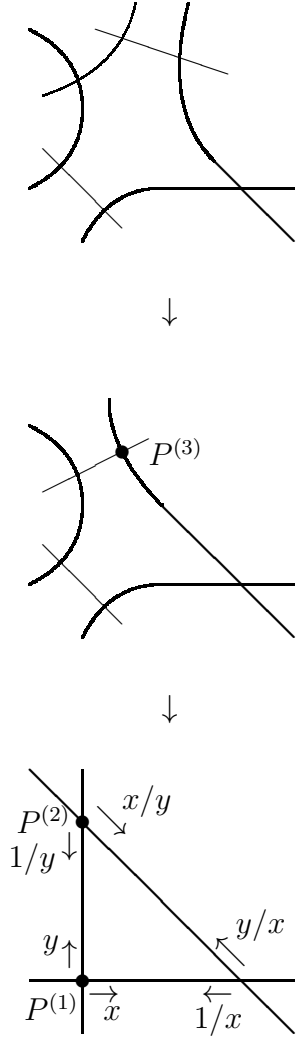
\begin{figure}
\begin{picture}(120, 410)
	{\thicklines
	\put(10, 30){\line(1, 0){100}}
	\put(30, 10){\line(0, 1){100}}
	\put(10, 110){\line(1, -1){100}}
	}
	
	\put(30, 30){\circle*{5}}
	\put(8, 15){$P^{(1)}$}
	
	\put(30, 90){\circle*{5}}
	\put(5, 86){$P^{(2)}$}

	\put(32, 22){$\rightarrow$}
	\put(38, 15){$x$}
	
	\put(22, 37){$\uparrow$}
	\put(15, 42){$y$}

	\put(75, 22){$\leftarrow$}
	\put(72, 12){$1/x$}
	
	\put(80, 38){$\nwarrow$}
	\put(88, 48){$y/x$}

	\put(22, 78){$\downarrow$}
	\put(2, 73){$1/y$}

	\put(35, 88){$\searrow$}
	\put(45, 95){$x/y$}

	\put(60, 130){$\downarrow$}

	{\thicklines
	\put(60, 180){\line(1, 0){50}}
	\put(60, 210){\line(1, -1){50}}
	\qbezier(60, 180)(40, 180)(30, 160)
	\qbezier(30, 210)(30, 190)(10, 180)
	\qbezier(30, 210)(30, 230)(10, 240)
	\qbezier(60, 210)(40, 230)(40, 250)
	}
	
	\put(15, 195){\line(1, -1){30}}
	\put(15, 215){\line(2, 1){40}}
	
	\put(45, 230){\circle*{5}}
	\put(55, 225){$P^{(3)}$}

	\put(60, 280){$\downarrow$}

	{\thicklines
	\put(60, 330){\line(1, 0){50}}
	\put(80, 340){\line(1, -1){30}}
	\qbezier(60, 330)(40, 330)(30, 310)
	\qbezier(30, 360)(30, 340)(10, 330)
	\qbezier(30, 360)(30, 380)(10, 390)
	\qbezier(80, 340)(60, 360)(70, 400)
	}
	
	\put(15, 345){\line(1, -1){30}}
	
	\qbezier(15, 365)(45, 375)(50, 400)
	\put(35, 390){\line(3, -1){50}}

\end{picture}
\caption{
Diagram showing the blow-ups needed to obtain $X$ in Example~\ref{ex:simplestcremona}.
}
\label{fig:simplestcremona}
\end{figure}

It is obvious that
\[
	\widetilde{e} =
	(\widetilde{e}^{(0)}, \widetilde{e}^{(1)}, \widetilde{e}^{(2)}, \widetilde{e}^{(3)}) =
	(e^{(0)}, e^{(2)}, e^{(3)}, e^{(1)})
\]
is another geometric basis on $\operatorname{Pic} X$.
Let $\sigma$ be the $\mathbb{Z}$-linear transformation on $\operatorname{Pic} X$ defined by $e^{(i)} \mapsto \widetilde{e}^{(i)}$ for all $i$.
While $e$ and $\sigma e = \widetilde{e}$ are geometric, $\sigma^2 e = (e^{(0)}, e^{(3)}, e^{(1)}, e^{(2)})$ is not since $e^{(2)} - e^{(3)}$ is effective.

It is obvious that all problems in this case come from the ordering of the $e^{(i)}$.
\end{example}

As in the above example, $\sigma^n e$ is not always geometric.
Therefore, it is necessary to impose some condition on $\sigma$.

\begin{definition}[Cremona isometry \cite{lo, dol2, sakai}]\label{def:cremona}
Let $X$ be a rational surface and let $\sigma$ be an invertible $\mathbb{Z}$-linear transformation on $\operatorname{Pic} X$.
$\sigma$ is said to be a {\em Cremona isometry} if it satisfies the following three conditions:
\begin{itemize}
\item
$\sigma$ preserves the intersection number on $\operatorname{Pic} X$, i.e.\,$F_1 \cdot F_2 = (\sigma F_1) \cdot (\sigma F_2)$ for all $F_1, F_2 \in \operatorname{Pic} X$,
\item
$\sigma$ preserves $K_X$,
\item
$\sigma$ preserves the set of effective classes, i.e.\,if $F$ is effective, then so is $\sigma F$ (and $\sigma^{-1} F$).

\end{itemize}
\end{definition}

\begin{example}
Let $\varphi$ be an automorphism on a rational surface.
Then the induced linear transformations $\varphi^{*}$ and $\varphi_{*}$ are Cremona isometries.
\end{example}

It is clear from the definition that the following holds.

\begin{lemma}\label{lem:cremonacone}
Cremona isometries preserve the nef cone.
\end{lemma}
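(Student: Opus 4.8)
The plan is to unwind the definitions and show that the nef cone is characterized purely in terms of the intersection form and the set of effective classes, both of which a Cremona isometry preserves.

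First I would recall that the nef cone of a rational surface $X$ consists of those classes $F \in \operatorname{Pic} X \otimes \mathbb{R}$ (or $\operatorname{Pic} X$, depending on the convention fixed in Appendix~\ref{sec:surf}) such that $F \cdot C \ge 0$ for every irreducible curve $C$ on $X$, equivalently $F \cdot E \ge 0$ for every effective class $E$. So the statement to prove is: if $\sigma$ is a Cremona isometry and $F$ is nef, then $\sigma F$ is nef, i.e.\ $(\sigma F) \cdot E \ge 0$ for all effective $E$.

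The argument is then a one-line computation packaged as a proof. Given an arbitrary effective class $E$, since $\sigma$ is invertible and preserves the set of effective classes, $\sigma^{-1} E$ is again effective. Because $F$ is nef, $F \cdot (\sigma^{-1} E) \ge 0$. Since $\sigma$ preserves the intersection number, $(\sigma F) \cdot E = (\sigma F) \cdot (\sigma \sigma^{-1} E) = F \cdot (\sigma^{-1} E) \ge 0$. As $E$ was an arbitrary effective class, $\sigma F$ is nef. The same works with $\sigma^{-1}$ in place of $\sigma$ (this is why the definition of Cremona isometry explicitly includes $\sigma^{-1} F$ effective), so $\sigma$ restricts to an automorphism of the nef cone, not merely a one-sided inclusion.

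There is no real obstacle here; the only subtlety is bookkeeping about which cone convention and which characterization of ``nef'' the paper uses — whether ``$F$ nef'' is taken to mean $F \cdot C \ge 0$ on all irreducible curves or $F \cdot E \ge 0$ on all effective divisor classes — but on a projective surface these agree by writing an effective class as a nonnegative combination of its irreducible components, so the choice does not affect the proof. One should just make sure to invoke the effective-class characterization so that the preservation of effectivity under $\sigma^{-1}$ can be applied directly; everything else is immediate from the three defining properties of a Cremona isometry.
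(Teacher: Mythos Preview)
Your proof is correct and is precisely the unpacking of what the paper means when it says the lemma is ``clear from the definition'': nefness is defined via intersection with effective classes, and a Cremona isometry preserves both the intersection form and the set of effective classes, so the computation $(\sigma F)\cdot E = F\cdot(\sigma^{-1}E)\ge 0$ is exactly the intended argument. There is nothing to add.
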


It should be noted that, while an automorphism on a surface determines the motion of each curve, a Cremona isometry does {\em not}.
It only determines the motion of the classes of curves.
However, as shown in the following lemma, if an irreducible curve has a negative self-intersection, then its motion is completely determined.

\begin{lemma}\label{lem:crirred}
Let X be a rational surface and $\sigma$ a Cremona isometry, and let $C$ be an irreducible curve in $X$ with negative self-intersection.
Then there exists only one effective divisor $D$ such that $[D] = \sigma [C]$.
Moreover, $D$ is a prime divisor, i.e.\,an irreducible curve.
In particular, $\sigma$ acts as a permutation on the set of all exceptional curves of the first kind.
\end{lemma}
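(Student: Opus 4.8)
The plan is to exploit the fact that $\sigma$ preserves the intersection form, the canonical class, and effectivity, together with the adjunction formula, to pin down the class $\sigma[C]$ so tightly that it can only be represented by a single irreducible curve. First I would let $C$ be an irreducible curve with $C^2 = c < 0$. Writing $d = [D]$ for $\sigma[C]$, the three defining properties of a Cremona isometry give $d^2 = c < 0$, $d \cdot K_X = [C]\cdot K_X$, and $d$ is effective; in particular (applying $\sigma^{-1}$ and the same reasoning) $d$ is \emph{nonzero} effective. Now decompose $d = \sum_j m_j [\Gamma_j]$ into irreducible components with multiplicities $m_j > 0$. The heart of the argument is that the negativity $d^2 < 0$ forces the sum to be concentrated, and adjunction forces the surviving component to be reduced.

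The key steps, in order: (1) Since $d$ is effective and nonzero, at least one component $\Gamma_{j_0}$ satisfies $d \cdot [\Gamma_{j_0}] < 0$, because $0 > d^2 = \sum_j m_j\, (d\cdot[\Gamma_j])$ so some summand is negative. (2) For that component, $[\Gamma_{j_0}]^2 \le d \cdot [\Gamma_{j_0}] < 0$ — here one uses that $d - m_{j_0}[\Gamma_{j_0}]$ is still effective and that an irreducible curve meets any effective divisor not containing it nonnegatively, so $[\Gamma_{j_0}] \cdot (d - m_{j_0}[\Gamma_{j_0}]) \ge 0$, whence $m_{j_0}[\Gamma_{j_0}]^2 \le d\cdot[\Gamma_{j_0}] < 0$ and thus $[\Gamma_{j_0}]^2 < 0$. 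So $\Gamma_{j_0}$ is an irreducible curve of negative self-intersection. (3) Apply $\sigma^{-1}$: the preceding analysis is symmetric, and one shows that $\sigma$ restricted to classes of irreducible negative curves is injective and that the class $[C] = \sigma^{-1}(d)$ being irreducible of negative self-intersection forces $d$ itself to be a single irreducible class. Concretely, I would argue that $\sigma^{-1}[\Gamma_{j_0}]$ is again (by the same reasoning) represented by an effective divisor, and intersecting $[C]$, which is irreducible, against the effective decomposition of $\sigma^{-1}(d) = [C]$ shows $d = m_{j_0}[\Gamma_{j_0}]$ with $m_{j_0} = 1$ — the multiplicity is $1$ because $c = d^2 = m_{j_0}^2 [\Gamma_{j_0}]^2$ and $[C]^2$ equals $[\Gamma_{j_0}]^2$ under the isometry applied the other way, so $m_{j_0}^2 = 1$. (4) Uniqueness of the effective representative $D$: if $D'$ is another effective divisor with $[D'] = d$, then $D - D'$ is numerically trivial; since $d^2 < 0$ the class $d$ is not nef, and on a rational surface a divisor class with negative self-intersection has at most one effective representative (two would force a nonzero effective difference of self-intersection $\le 0$ meeting $d$ negatively, impossible). (5) The final sentence: an exceptional curve of the first kind $E$ has $E^2 = -1$ and $E \cdot K_X = -1$; by the above $\sigma[E]$ is the class of a unique irreducible curve $E'$ with $E'^2 = -1$, $E' \cdot K_X = -1$, so by adjunction $E'$ is a $(-1)$-curve, and since $\sigma$ is invertible with inverse again a Cremona isometry this map $E \mapsto E'$ is a bijection, i.e.\ a permutation of the finite-or-infinite set of $(-1)$-curves.

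The main obstacle I anticipate is making step (3) fully rigorous — cleanly deducing that $d$ is \emph{irreducible} (not merely that it has an irreducible negative component) and that its coefficient is exactly $1$. The cleanest route is to run the argument symmetrically in $\sigma$ and $\sigma^{-1}$: from $d = \sum m_j[\Gamma_j]$ apply $\sigma^{-1}$ to get $[C] = \sum m_j\, \sigma^{-1}[\Gamma_j]$; each $\sigma^{-1}[\Gamma_{j_0}]$ for the negative component is effective and nonzero, so comparing with the irreducibility of $C$ (an irreducible curve's class cannot be a nontrivial sum of nonzero effective classes unless all but one vanish and that one equals $[C]$ with multiplicity $1$) immediately collapses the sum. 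One must be slightly careful that $\sigma^{-1}[\Gamma_j]$ is effective for \emph{every} component, not just the distinguished one — but this follows by induction on the number of components applied to the effective class $d - m_{j_0}[\Gamma_{j_0}]$, or alternatively by noting $\sigma^{-1}$ preserves effectivity by hypothesis, so each $\sigma^{-1}[\Gamma_j]$ is effective directly. With that observation the collapse is immediate and the rest is bookkeeping with adjunction.
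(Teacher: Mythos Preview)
Your core argument is the same as the paper's: write $\sigma[C] = \sum_j m_j[\Gamma_j]$, apply $\sigma^{-1}$ to get $[C] = \sum_j m_j\,\sigma^{-1}[\Gamma_j]$ with each $\sigma^{-1}[\Gamma_j]$ effective (directly from the definition of a Cremona isometry), and then use that an irreducible curve $C$ with $C^2<0$ satisfies $h^0([C])=1$, so its class admits no nontrivial effective decomposition (this is the paper's Proposition~\ref{app:irrednega}). The paper does exactly this in three lines; your steps (1)--(2) locating a negative component $\Gamma_{j_0}$ are an unnecessary detour, since effectivity of every $\sigma^{-1}[\Gamma_j]$ comes for free and the collapse does not require singling one out.

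A few places need tightening. Your parenthetical ``an irreducible curve's class cannot be a nontrivial sum of nonzero effective classes'' is false as stated (a smooth conic in $\mathbb{P}^2$ is twice a line class); you need the hypothesis $C^2<0$, which you have but should invoke. Your argument for $m_{j_0}=1$ via $c = m_{j_0}^2[\Gamma_{j_0}]^2$ together with ``$[C]^2 = [\Gamma_{j_0}]^2$ under the isometry'' is circular: at that point you only know $[C] = m_{j_0}\,\sigma^{-1}[\Gamma_{j_0}]$, which gives back the same equation. The multiplicity $1$ falls out directly from $h^0([C])=1$: the unique effective divisor in $|[C]|$ is $C$ itself, and $C$ is reduced irreducible. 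Finally, your uniqueness step (4) claims two effective representatives would yield ``a nonzero effective difference,'' but $D-D'$ need not be effective; instead, once you know $D$ is irreducible with $D^2<0$, any other effective $D'\sim D$ is (by the same argument) also irreducible, and then $D\cdot D' = D^2 < 0$ forces $D=D'$, or just cite $h^0([D])=1$ again.
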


\begin{proof}
Let
\[
	\sigma [C] = \left[ \sum^{k}_{i = 1} m_i C_i \right],
\]
where $C_i$ are irreducible curves.
Since
\[
	[C] = \sum^{k}_{i = 1} m_i \sigma^{-1}[C_i]
\]
and $\sigma^{-1}[C_i]$ are all effective, it follows from Proposition~\ref{app:irrednega} that $k = 1$ and $m_1 = 1$.
\end{proof}

\begin{lemma}\label{lem:geombasis}
Let $\sigma$ be a Cremona isometry.
If $e = (e^{(0)}, \ldots, e^{(r)})$ is a geometric basis on $\operatorname{Pic} X$, then so is $\sigma e = (\sigma e^{(0)}, \ldots, \sigma e^{(r)})$.
\end{lemma}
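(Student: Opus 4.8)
The plan is to induct on the number $r$ of blow-ups, equivalently on $\operatorname{rank}\operatorname{Pic} X - 1$. Since the inductive step produces, after a contraction, a comparison map between two \emph{a priori} different surfaces, it is convenient to prove the following mild generalization: if $X$ and $Y$ are basic rational surfaces (Definition~\ref{app:rational}) with $\operatorname{rank}\operatorname{Pic} X = \operatorname{rank}\operatorname{Pic} Y$ and $\tau\colon \operatorname{Pic} X \to \operatorname{Pic} Y$ is an isomorphism preserving the intersection form, sending $K_X$ to $K_Y$, and sending effective classes to effective classes in both directions, then $\tau$ maps every geometric basis of $\operatorname{Pic} X$ to a geometric basis of $\operatorname{Pic} Y$. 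The Lemma is the case $X = Y$, $\tau = \sigma$. The proof of Lemma~\ref{lem:crirred} applies verbatim to such a $\tau$, since it only uses Proposition~\ref{app:irrednega} together with the fact that $\tau$ and $\tau^{-1}$ preserve effectivity and the intersection form; in particular $\tau$ sends the class of an exceptional curve of the first kind to the class of an exceptional curve of the first kind.

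For $r = 0$ we have $X = Y = \mathbb{P}^2$, $\operatorname{Pic} X = \mathbb{Z}e^{(0)}$ with $e^{(0)}$ the class of a line, and the conditions $(\tau e^{(0)})^2 = 1$ and $\tau(-3e^{(0)}) = K_Y$ force $\tau e^{(0)}$ to be the class of a line, so $(\tau e^{(0)})$ is geometric. For the inductive step, let $e = (e^{(0)}, \dots, e^{(r)})$ come from $\pi = \pi^{(1)} \circ \cdots \circ \pi^{(r)}\colon X \to \mathbb{P}^2$. The class $e^{(r)}$ is represented by the exceptional curve of the last blow-up $\pi^{(r)}\colon X \to X_{r-1}$, an exceptional curve of the first kind, so by the extension of Lemma~\ref{lem:crirred} just noted, $\tau e^{(r)}$ is the class of an exceptional curve of the first kind $E'$ on $Y$. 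Contract it by a blow-down $\mu\colon Y \to Y'$, and recall that $\pi^{(r)}$ itself contracts the curve of class $e^{(r)}$, with $X_{r-1}$ the surface produced by the first $r-1$ blow-ups; identify $\operatorname{Pic} X_{r-1} = (e^{(r)})^{\perp}\subset\operatorname{Pic} X$ and $\operatorname{Pic} Y' = (\tau e^{(r)})^{\perp}\subset\operatorname{Pic} Y$ via $\pi^{(r)*}$ and $\mu^{*}$.

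Since $e^{(i)} \cdot e^{(r)} = 0$ for $i < r$, the map $\tau$ restricts to an isomorphism $\tau'\colon \operatorname{Pic} X_{r-1} \to \operatorname{Pic} Y'$. It preserves the intersection form, being the restriction of an isometry; it sends $K_{X_{r-1}}$ to $K_{Y'}$, because $\pi^{(r)*}K_{X_{r-1}} = K_X - e^{(r)}$, $\mu^{*}K_{Y'} = K_Y - \tau e^{(r)}$ and $\tau K_X = K_Y$; and it preserves effectivity, since an effective class $F$ on $X_{r-1}$ pulls back to an effective class on $X$, hence $\tau\pi^{(r)*}F = \mu^{*}\tau'F$ is effective on $Y$, hence $\tau'F = \mu_{*}\mu^{*}\tau'F$ is effective on $Y'$ (and symmetrically for $(\tau')^{-1}$). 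By the inductive hypothesis applied to $\tau'$ and the geometric basis $(e^{(0)}, \dots, e^{(r-1)})$ of $\operatorname{Pic} X_{r-1}$, there is a composition of $r-1$ blow-ups $\rho\colon Y' \to \mathbb{P}^2$ whose associated geometric basis is $(\tau e^{(0)}, \dots, \tau e^{(r-1)})$. Then $\rho \circ \mu\colon Y \to \mathbb{P}^2$ is a composition of $r$ blow-ups, and unwinding the definition of its associated geometric basis yields exactly $(\tau e^{(0)}, \dots, \tau e^{(r-1)}, \tau e^{(r)}) = \tau e$; hence $\tau e$ is geometric, completing the induction.

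The main point requiring care is the inductive step: one must phrase the statement so that it survives the passage to the contracted surfaces (hence the two-surface formulation and the remark that Lemma~\ref{lem:crirred} extends to it); verify that $\tau'$ retains all three defining properties, where only the preservation of effectivity is not purely formal and is obtained by pushing divisors forward along the blow-down $\mu$; and keep track of the \emph{ordering} of the blow-ups, so that $\tau e$ is recovered as an ordered tuple rather than merely up to permutation --- precisely the subtlety exhibited in Example~\ref{ex:simplestcremona}. Here the ordering comes out automatically, because the last blow-up of $\rho\circ\mu$ is $\mu^{-1}$, whose exceptional curve has class $\tau e^{(r)}$, which matches the last entry of $\tau e$.
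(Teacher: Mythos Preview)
Your argument follows essentially the same idea as the paper's proof: both show that the irreducible curves contracted by $\pi$ are sent by the isometry to irreducible curves with the same intersection pattern, which can therefore be contracted in the same order to reach $\mathbb{P}^2$. The paper does this in one terse paragraph; you unfold it into an explicit induction, and the two-surface formulation is exactly what is needed to make the inductive step (where the source and target of the contraction differ) go through cleanly.

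There is, however, one small gap in your inductive setup. Your stated hypothesis requires both $X$ and $Y$ to be \emph{basic} rational, but in the inductive step you only know that $Y'$ is rational: the blow-down of a basic rational surface along a $(-1)$-curve need not be basic (for instance, contracting the strict transform of the line through two blown-up points on $\mathbb{P}^2$ yields $\mathbb{P}^1\times\mathbb{P}^1$). So the inductive hypothesis, as you have phrased it, does not literally apply to $(X_{r-1},Y',\tau')$. The fix is immediate: weaken the hypothesis on $Y$ from ``basic rational'' to ``smooth projective rational''. The base case then becomes the standard fact that a smooth projective rational surface of Picard number $1$ is $\mathbb{P}^2$, and the inductive step goes through unchanged since $Y'$, being a blow-down of $Y$, is certainly rational. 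With this adjustment your argument is complete, and the conclusion in particular \emph{shows} that $Y$ is basic.
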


\begin{proof}
Let $\pi = \pi^{(1)} \circ \cdots \circ \pi^{(r)} \colon X \to \mathbb{P}^2$ be the composition of blow-ups corresponding to $e$ and let $C_1, \ldots, C_r \subset X$ be the irreducible curves contracted by $\pi$.
Since all these curves have negative self-intersection, by Lemma~\ref{lem:crirred}, their motions are determined by $\sigma$.
Let us denote them by $C'_1, \ldots, C'_r$.
Since $C_i \cdot C_j = C'_i \cdot C'_j$ for all $i, j$, it is possible to contract $C'_1, \ldots, C'_r$ in the same order as $C_1, \ldots, C_r$.
It is clear that the geometric basis corresponding to this contraction is $\sigma e$.
\end{proof}

Let us see how to obtain an equation from a Cremona isometry \cite{sakai}.

\begin{definition}\label{def:sicsakai}
Let $X$ be a basic rational surface and let $\sigma$ be a Cremona isometry on $\operatorname{Pic} X$ and take an arbitrary geometric basis $e = (e^{(0)}, \ldots, e^{(r)})$.
By Lemma~\ref{lem:geombasis}, $\sigma^{n} e$ is a geometric basis for each $n$.
Let $\pi_n$ be the corresponding birational morphism to $\mathbb{P}^2$ and let $\varphi_n = \pi_{n+1} \circ \pi^{-1}_n$.
Thus we obtain $(\varphi_n)_{n \in \mathbb{Z}}$, a family of birational automorphisms on $\mathbb{P}^2$:
\[
\xymatrix{
	 &  & X \ar[ld]_{\pi_{n-1}} \ar[d]^{\pi_n} \ar[rd]^{\pi_{n+1}} &  & \\
	\ar@{.>}[r] & \mathbb{P}^2 \ar@{.>}[r]_{\varphi_{n-1}} & \mathbb{P}^2 \ar@{.>}[r]_{\varphi_n} & \mathbb{P}^2 \ar@{.>}[r] & .
}
\]
This is the equation defined by $X, \sigma$ and $e$, and we call $X$ a space of initial conditions (in Sakai's sense).
Since the choice of $e$ only determines the specific coordinates, we sometimes think of $(X, \sigma)$ as the equation itself.
\end{definition}

Note that $(\varphi_n)_{n \in \mathbb{Z}}$ is determined by $X, \sigma$ and $e$ up to an automorphism of $\mathbb{P}^2$ for each $n$, i.e.\,if $(\varphi'_n)_{n \in \mathbb{Z}}$ is another family of birational automorphisms on $\mathbb{P}^2$ defined by $X, \sigma$ and $e$, then there exist $f_n \in \operatorname{PGL}(3, \mathbb{C})$ such that $\varphi_n = f_{n+1} \circ \varphi'_n \circ f^{-1}_{n}$.

\begin{definition}[generalized Halphen surface \cite{sakai}]
A rational surface $X$ is called a {\em generalized Halphen surface} if it satisfies the following two conditions:
\begin{itemize}
\item
$- K_X$ is effective,

\item
All components of $- K_X$ are orthogonal to $- K_X$, i.e.\,$D_i \cdot (- K_X) = 0$ for any $\sum_{i} m_i D_i \in |- K_X|$.

\end{itemize}
\end{definition}

\begin{lemma}[Proposition 2 in \cite{sakai}]
Any generalized Halphen surface is a basic rational surface.
\end{lemma}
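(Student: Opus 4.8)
I will prove the contrapositive‑flavoured statement by contradiction: assume $X$ is a generalized Halphen surface admitting no birational morphism onto $\mathbb{P}^2$, and extract a contradiction from the two anticanonical hypotheses. The first thing to record is a numerical consequence of the orthogonality condition that comes essentially for free: choosing any effective anticanonical divisor $\sum_i m_i D_i\in|-K_X|$, one gets
\[
	(-K_X)^2=\sum_i m_i\,D_i\cdot(-K_X)=0,
\]
so $K_X^2=0$. In particular $X$ is not a relatively minimal rational surface, since those are $\mathbb{P}^2$ (with $K^2=9$) and the Hirzebruch surfaces $\mathbb{F}_n$ (with $K^2=8$).

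Being rational, $X$ admits a birational morphism $\rho\colon X\to Y$ onto a relatively minimal model $Y$, where $Y$ is $\mathbb{P}^2$ or $\mathbb{F}_m$ for some $m\neq1$. If $Y=\mathbb{P}^2$ then $X$ is basic, contrary to assumption, so $Y=\mathbb{F}_m$; and by the previous paragraph $\rho$ is a composition of at least one blow-up, say $X=X_N\to\cdots\to X_1\to X_0=\mathbb{F}_m$. Let $s=s_m\subset\mathbb{F}_m$ be the negative section. I would then split according to the first centre $p_1$. If $m=0$, or if $m=2$ and $p_1\notin s$, then on $X_1=\mathrm{Bl}_{p_1}\mathbb{F}_m$ the strict transform of the fibre through $p_1$ is a $(-1)$-curve; contracting it exhibits $X_1$, by the elementary-transformation rule for Hirzebruch surfaces, as a blow-up of $\mathbb{F}_{1}=\mathrm{Bl}_{\mathrm{pt}}\mathbb{P}^2$. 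Hence $X_1$, and therefore $X$ (obtained from $X_1$ by further blow-ups), is a blow-up of $\mathbb{P}^2$, contradicting non-basicness.

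In the remaining cases — $m\geq3$, or $m=2$ with $p_1\in s$ — the strict transform $C\subset X$ of $s$ is an irreducible rational curve with $C^2\leq-3$: indeed $C^2=-m-k$, where $k\geq0$ counts the blow-ups performed along the successive strict transforms of $s$, and in these cases this is $\leq-3$ (when $m=2$ and $p_1\in s$ the self-intersection is already $-3$ on $X_1$, and further blow-ups do not increase it). By adjunction, $(-K_X)\cdot C=2+C^2\leq-1<0$. Now write $-K_X\sim\sum_i m_i D_i$ as an effective divisor; since distinct irreducible curves meet non-negatively, a class represented by an effective divisor can pair negatively with an irreducible curve only if that curve is one of the components $D_i$. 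But then the orthogonality hypothesis forces $(-K_X)\cdot C=0$, a contradiction. Therefore $X$ is basic.

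The step I expect to require the most care is the elementary-transformation branch: one must set up the elementary transformations of Hirzebruch surfaces carefully (the index drops, respectively rises, by $1$ according to whether the centre avoids or meets the negative section) and verify that the descent to $\mathbb{F}_1$ really does go through in precisely the cases not already handled by the negative-curve branch. The clean formulation of what is needed is the structural dichotomy that a non-basic rational surface is either isomorphic to some $\mathbb{F}_n$ with $n\geq2$ or else carries an irreducible curve of self-intersection $\leq-3$; once that is in place, the anticanonical orthogonality eliminates both possibilities simultaneously — the first because $K_X^2=0\neq8$, the second by the adjunction computation above.
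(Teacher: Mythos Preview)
The paper does not supply its own proof of this lemma; it is quoted as Proposition~2 of Sakai's paper without reproduction, so there is no in-paper argument to compare against.

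Your argument is correct. The orthogonality hypothesis indeed forces $K_X^2=0$, excluding all minimal rational surfaces; the elementary-transformation branch for $m=0$ and for $m=2$ with $p_1\notin s$ correctly exhibits $X_1$ (hence $X$) as a blow-up of $\mathbb{F}_1$ and thus of $\mathbb{P}^2$; and in the remaining cases the strict transform $C$ of the negative section satisfies $C^2\leq -3$, whence $(-K_X)\cdot C=2+C^2<0$ by adjunction, contradicting either the effectivity of $-K_X$ (if $C$ is not a component) or the orthogonality hypothesis (if it is). One small correction to your closing summary: the dichotomy as stated misses $\mathbb{F}_0=\mathbb{P}^1\times\mathbb{P}^1$, which is non-basic but carries no curve of self-intersection $\leq -3$; the accurate statement is that a non-basic rational surface is either $\mathbb{F}_0$ or $\mathbb{F}_2$, or carries such a curve (the $\mathbb{F}_n$ with $n\geq 3$ already contain a $(-n)$-curve). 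This does not affect the proof itself, since you used $K_X^2=0\neq 8$ to exclude all Hirzebruch surfaces before the case split.
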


\begin{definition}[discrete Painlev\'{e} equation \cite{sakai}]\label{def:discretepainleve}
Let $X$ be a generalized Halphen surface and $\sigma$ a Cremona isometry on $\operatorname{Pic} X$ of infinite order.
Then, the equation obtained by the above procedure is called a {\em discrete Painlev\'{e} equation}.
\end{definition}

\begin{remark}
Note that according to this definition, autonomous mappings such as the QRT mappings are also labeled ``discrete Painlev\'{e}.''
\end{remark}

Using generalized Halphen surfaces, Sakai has classified (and, in a sense, defined) all discrete Painlev\'{e} equations.
Since we do not need such a detailed classification in this paper, we only give a brief summary.

The orthogonal lattice $K^{\perp}_X \subset \operatorname{Pic} X$, which is preserved under $\sigma$, is an affine root lattice of type $E^{(1)}_8$.
If $\dim |- K_X| = 0$, then the expression $\sum_i m_i D_i \in |- K_X|$ is unique.
Therefore, $\sigma$ acts on the set $\{ D_i \}_i$ as a permutation and preserves the lattice $\operatorname{span}_{\mathbb{Z}} D_i$ and its orthogonal compliment.
These two lattices are both affine root sublattices of $K^{\perp}_X$ and play an important role in the classification of the discrete Painlev\'{e} equations.

\begin{remark}
While Cremona isometries can be defined for any rational surface, we only consider basic rational surfaces such as in Definition~\ref{def:sicsakai}.
Although it is possible to consider a family of blow-downs from a nonbasic rational surface to a Hirzebruch surface $\operatorname{F}_a$ ($a \ge 2$) instead of $\mathbb{P}^2$, Theorem~\ref{app:nagata} implies that the degree growth of such an equation must be bounded.
Hence, it is sufficient to consider only basic rational surfaces as long as we are interested in equations with unbounded degree growth.
\end{remark}

Now let us clarify the correspondence between the two definitions of a space of initial conditions we considered.

\begin{proposition}\label{prop:corresp}
The two definitions of a space of initial conditions, Definition~\ref{def:sic} and Definition~\ref{def:sicsakai}, are equivalent.
\end{proposition}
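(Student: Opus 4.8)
The plan is to prove the two implications separately, in each case realizing the data of one definition explicitly from the other. For the direction ``Definition~\ref{def:sicsakai} $\Rightarrow$ Definition~\ref{def:sic}'', start from a basic rational surface $X$, a Cremona isometry $\sigma$ and a geometric basis $e=(e^{(0)},\dots,e^{(r)})$ on $\operatorname{Pic}X$, take $X_n:=X$ for every $n$, let $\pi_n\colon X\to\mathbb{P}^2$ be a composition of blow-ups realizing the geometric basis $\sigma^{n}e$ (geometric by Lemma~\ref{lem:geombasis}), and set $\varphi_n:=\pi_{n+1}\circ\pi_n^{-1}$ as in Definition~\ref{def:sicsakai}. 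Then the induced birational self-map $\pi_{n+1}^{-1}\circ\varphi_n\circ\pi_n$ of $X$ is $\mathrm{id}_X$, so the first condition of Definition~\ref{def:sic} holds and $\varphi_{n*}=\mathrm{id}_{\operatorname{Pic}X}$; the matrix of $\varphi_{n*}$ with respect to the source basis $\sigma^{n}e$ and the target basis $\sigma^{n+1}e$ is the inverse of the matrix of $\sigma$ in the basis $e$, hence independent of $n$, which is the second condition. Finally, the class $\sum_i a^{(i)}e_n^{(i)}=\sigma^{n}\bigl(\sum_i a^{(i)}e^{(i)}\bigr)$, where $e_n^{(i)}=\sigma^{n}e^{(i)}$, is effective if and only if $\sum_i a^{(i)}e^{(i)}$ is, since $\sigma$ and $\sigma^{-1}$ preserve effective classes; this is the third condition. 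So $X$, viewed as a constant family, is a space of initial conditions in the sense of Definition~\ref{def:sic}, and it produces exactly $(\varphi_n)$.

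For the converse, suppose $(\varphi_n)$ has a space of initial conditions $\pi_n\colon X_n\to\mathbb{P}^2$ with geometric bases $e_n$, identifications $\iota_n\colon\mathbb{Z}^{1,r}\to\operatorname{Pic}X_n$ and common linear map $\Phi$ as in Remark~\ref{rem:phi}; write $\hat\varphi_n\colon X_n\to X_{n+1}$ for the lifting isomorphisms, so that $\hat\varphi_{n*}\circ\iota_n=\iota_{n+1}\circ\Phi$. I would take $X:=X_0$ (a basic rational surface, since $\pi_0$ is a composition of blow-ups), the geometric basis $e_0$, and define $\sigma:=\iota_0\circ\Phi^{-1}\circ\iota_0^{-1}$ on $\operatorname{Pic}X_0$ (well defined and invertible since $\hat\varphi_{0*}$ is). By the lemma immediately following Remark~\ref{rem:phi}, $\Phi$ preserves the form $(-,-)$ and the class $K$; since $\iota_0$ carries $(-,-)$ to the intersection pairing and $K$ to $K_{X_0}$, the map $\sigma$ preserves the intersection number and $K_{X_0}$. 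For effective classes, put $\mathcal{E}:=\{\,v\in\mathbb{Z}^{1,r} : \iota_0(v)\text{ is effective}\,\}$, which by the third condition of Definition~\ref{def:sic} also equals $\{\,v : \iota_1(v)\text{ is effective}\,\}$; since $\hat\varphi_{0*}$ and its inverse preserve effective classes and $\hat\varphi_{0*}=\iota_1\Phi\iota_0^{-1}$, both $\Phi$ and $\Phi^{-1}$ map $\mathcal{E}$ into itself, so $\sigma$ and $\sigma^{-1}$ preserve the effective classes of $X_0$. Hence $\sigma$ is a Cremona isometry, and $(X_0,\sigma)$ is a space of initial conditions in Sakai's sense.

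It remains to identify the equation produced by $(X_0,\sigma,e_0)$ with $(\varphi_n)$. Composing the lifts, let $\Psi_n\colon X_0\to X_n$ be $\hat\varphi_{n-1}\circ\cdots\circ\hat\varphi_0$ (with the evident meaning for $n\le 0$). A direct computation using $\hat\varphi_{k*}\circ\iota_k=\iota_{k+1}\circ\Phi$ gives $(\Psi_n)_*^{-1}=\iota_0\circ\Phi^{-n}\circ\iota_n^{-1}$, so the birational morphism $\pi_n\circ\Psi_n\colon X_0\to\mathbb{P}^2$ realizes on $\operatorname{Pic}X_0$ the geometric basis whose $i$-th member is $\iota_0\Phi^{-n}e^{(i)}=\sigma^{n}\iota_0 e^{(i)}=\sigma^{n}e_0^{(i)}$; that is, it realizes $\sigma^{n}e_0$ in the notation of Lemma~\ref{lem:geombasis}. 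Since a composition of blow-ups realizing a prescribed geometric basis is unique up to an automorphism of $\mathbb{P}^2$, one may take the morphisms occurring in Definition~\ref{def:sicsakai} to be the $\pi_n\circ\Psi_n$; then the resulting equation is $(\pi_{n+1}\Psi_{n+1})\circ(\pi_n\Psi_n)^{-1}=\pi_{n+1}\circ\hat\varphi_n\circ\pi_n^{-1}=\varphi_n$, which agrees with $(\varphi_n)$ up to the $\operatorname{PGL}(3,\mathbb{C})$-ambiguity already built into Definition~\ref{def:sicsakai}. (Any $X_m$ could be used in place of $X_0$, and the reduction allowing blow-downs in Definition~\ref{def:sic} is handled exactly as in the autonomous case, Remark~\ref{rem:bdownautonomous}.)

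The argument is largely bookkeeping with the maps $\iota_n$ and $\Phi$; the two points that actually carry content are, first, that the third condition of Definition~\ref{def:sic} is exactly what upgrades $\sigma$ from an isometry fixing $K_{X_0}$ to a genuine Cremona isometry, and second, that the two constructions yield the \emph{same} family $(\varphi_n)$ rather than merely conjugate equations --- which is why one must track precisely the geometric bases realized by the morphisms $\pi_n\circ\Psi_n$ and invoke the uniqueness of such a realization up to $\operatorname{PGL}(3,\mathbb{C})$. I expect this matching of the two equations to be the most delicate part of the write-up.
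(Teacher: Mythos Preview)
Your proof is correct and follows essentially the same approach as the paper: both directions construct $\sigma=\iota_0\Phi^{-1}\iota_0^{-1}$ on $X_0$ (respectively take $X_n\equiv X$ with bases $\sigma^n e$ and identity lifts), and both use the third condition of Definition~\ref{def:sic} precisely to verify that $\sigma$ preserves effective classes. Your treatment is in fact slightly more thorough than the paper's, since you explicitly verify via the compositions $\Psi_n$ that $(X_0,\sigma,e_0)$ reproduces the original family $(\varphi_n)$ up to $\operatorname{PGL}(3,\mathbb{C})$, whereas the paper records only the key identity $\sigma^{\ell}e_0^{(i)}=\varphi_0^{*}\cdots\varphi_{\ell-1}^{*}e_\ell^{(i)}$ and leaves that matching implicit.
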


\begin{proof}
First, consider the situation in  Definition~\ref{def:sic}.
Let $X = X_0$ and $\sigma$ be the $\mathbb{Z}$-linear transformation on $\operatorname{Pic} X$ defined by
\[
	\sigma = \iota_0 \Phi^{-1} \iota^{-1}_0.
\]
Then, a direct calculation shows that
\[
	\sigma^{\ell} e^{(i)}_0 = \varphi^{*}_0 \cdots \varphi^{*}_{\ell-1} e^{(i)}_{\ell}, \quad
	\sigma^{-\ell} e^{(i)}_0 = \varphi_{-1*} \cdots \varphi_{-\ell*} e^{(i)}_{-\ell}
\]
for all $\ell > 0$.

Let us show that $\sigma$ is a Cremona isometry.
It is clear, by construction, that $\sigma$ satisfies the first and second conditions on a Cremona isometry.
Let $F = \sum_i a^{(i)} e^{(i)}_0 \in \operatorname{Pic} X$ be an effective class.
Then we have
\begin{align*}
	\sigma F & = \sum_i a^{(i)} \varphi^{*}_{0} e^{(i)}_{1} \\
	& = \varphi^{*}_{0} \left( \sum_i a^{(i)} e^{(i)}_{1} \right).
\end{align*}
The third condition in Definition~\ref{def:sic} implies that $\sum_i a^{(i)} e^{(i)}_{1}$ is effective.
Since $\varphi^{*}_{0}$ preserves the set of effective classes, $\sigma F$ is also effective.
We can prove the effectiveness of $\sigma^{-1} F$ in the same way.

Next, consider the situation in  Definition~\ref{def:sicsakai}.
That is, $X$ is a basic rational surface and $\sigma$ is a Cremona isometry on $\operatorname{Pic} X$.
Take $e = (e^{(0)}, \ldots, e^{(r)})$ as a geometric basis on $\operatorname{Pic} X$ and consider the equation defined by $X, \sigma$ and $e$.
Let us recover the above situation from these data.

Let $X_n = X$ and $e_n = \sigma^n e$ for all $n \in \mathbb{Z}$.
While the $X_n$ themselves are all the same, the bases $e_n$ vary depending on $n$.
Then we have the following diagram:
\[
\xymatrix{
	 & X_{n-1} \ar@{=}[d] & X_n \ar@{=}[d] & X_{n+1} \ar@{=}[d] & \\
	\ar[r] & X \ar[d]^{\pi_{n-1}} \ar[r]^{\operatorname{id}} & X \ar[d]^{\pi_n} \ar[r]^{\operatorname{id}} & X \ar[d]^{\pi_{n+1}} \ar[r] & \\
	\ar@{.>}[r] & \mathbb{P}^2 \ar@{.>}[r]^{\varphi_{n-1}} & \mathbb{P}^2 \ar@{.>}[r]^{\varphi_n} & \mathbb{P}^2 \ar@{.>}[r] & .
}
\]
It is important to note that, while the morphisms from $X_n$ to $X_{n+1}$ are all the identity map on $X_n = X_{n+1} = X$, the $\varphi_n$ are not the identity map on $\mathbb{P}^2$ in general.

Let us check the second condition in Definition~\ref{def:sic}.
Let $A_n = \left( a^{(i, j)}_n \right)_{i, j}$ be the matrix representation of $\varphi_{n*}$ with respect to the bases $e_n$ and $e_{n+1}$.
Since $\varphi_{n*} = \operatorname{id}_{\operatorname{Pic} X}$, $A_n$ are determined by
\[
	e^{(i)}_n = \sum_j a^{(j, i)}_n e^{(j)}_{n+1}.
\]

Applying $\sigma^{k}$ and using $\sigma^{k} e^{(i)}_n = e^{(i)}_{n+k}$, we have
\[
	e^{(i)}_{n+k} = \sum_j a^{(j, i)}_n e^{(j)}_{n+k+1},
\]
which shows that $A_n$ do not depend on $n$.

Finally we check the third condition in Definition~\ref{def:sic}.
Let $F = \sum_i a^{(i)} e^{(i)}_n \in \operatorname{Pic} X$ be an effective class.
Since $\sigma^{k}$ preserves the effective classes,
\[
	\sigma^{k} F = \sum_i a^{(i)} e^{(i)}_{n+k}
\]
are effective for all $k$.
Hence the set of effective classes does not depend on $n$.
\end{proof}

We have seen that the two definitions of a space of initial conditions are equivalent.
In this paper, we will use both definitions depending on the situation.

\begin{remark}\label{rem:bdownna}
If we consider blow-downs instead of an ``appropriate coordinate change'' in Definition~\ref{def:sic}, we must assume that the blow-downs do not depend on $n$.
In this case, one possible rigorous definition is as follows:

An equation $(\varphi_n)_n$ has a space of initial conditions if there exist rational surfaces $Y_n$ and $X_n$, blow-ups $\pi_n = \pi^{(1)}_n \circ \cdots \circ \pi^{(r)}_n \colon Y_n \to \mathbb{P}^2$ and blow-downs $\epsilon_n = \epsilon^{(1)}_n \circ \cdots \circ \epsilon^{(r')}_n \colon Y_n \to X_n$ for each $n$, such that the following four conditions are satisfied:
\begin{itemize}
\item
$\varphi_n$ is an isomorphism from $X_n$ to $X_{n+1}$.

\item
Let $\widetilde{e}_n = (\widetilde{e}^{(0)}_n, \ldots, \widetilde{e}^{(r)}_n)$ be the geometric basis corresponding to $\pi_n$ and identify all $\operatorname{Pic} Y_n$ by these bases.
Let $E^{(k)}_n$ be the total transform of the exceptional class of $\epsilon^{(k)}_n$.
Then, $E^{(k)}_n$ does not depend $n$.

\item
Take a basis $e_n = (e^{(0)}_n, \ldots, e^{(r-r'+1)}_n)$ of $\operatorname{Pic} X_n$ for each $n$ such that $\epsilon^{*}_n e^{(i)}_n$ does not depend on $n$ (under the above identification).
Identify all $\operatorname{Pic} X_n$ by these bases.
Then, $\varphi_{n*}$ does not depend on $n$.

\item
The set of effective classes in $\operatorname{Pic} X_n$ (and in $\operatorname{Pic} Y_n$) does not depend on $n$ (under the above identification).
\end{itemize}

As in the autonomous case (Remark~\ref{rem:bdownautonomous}), if the equation has unbounded degree growth, then it is possible to reduce the above situation to that in Definition~\ref{def:sic} by taking new blow-downs $X_n \to \mathbb{P}^2$ (Figure~\ref{fig:bdownna}).
Needless to say, the new blow-downs must be such that the geometric basis on $\operatorname{Pic} X_n$ does not depend on $n$.
As in the autonomous case, the existence of such blow-downs is guaranteed by Theorem~\ref{app:nagata}.
Hence, as long as we are interested only in performing a classification, we may only consider the situation in Definition~\ref{def:sic}.

The reason why this kind of problem arises is that we start from a specific equation $(\varphi_n)_n$, whereas if we start from the situation in Definition~\ref{def:sicsakai}, this kind of problem does not appear.

From now on, we shall assume that a space of initial conditions is obtained only by blow-ups, i.e.\,we shall simply consider the situation in Definition~\ref{def:sic} or Definition~\ref{def:sicsakai}.
\begin{figure}
\[
\xymatrix{
	& Y_{n-1} \ar[dd] \ar[rd] & & Y_n \ar[dd] \ar[rd] & & Y_{n+1} \ar[dd] \ar[rd] & & \\
	\ar[rr] & & X_{n-1} \ar[dd] \ar[rr] & & X_{n} \ar[dd] \ar[rr] & & X_{n+1} \ar[dd] \ar[r] & \\
	 \ar@{.>}[r] & \mathbb{P}^2 \ar@{.>}[rr] & & \mathbb{P}^2 \ar@{.>}[rr] & & \mathbb{P}^2 \ar@{.>}[rr] & & \\
	 \ar@{.>}[rr] & & \mathbb{P}^2 \ar@{.>}[rr] & & \mathbb{P}^2 \ar@{.>}[rr] & & \mathbb{P}^2 \ar@{.>}[r] &
}
\]
\caption{
Diagram showing a space of initial conditions in the case where we consider blow-downs.
The second row from the bottom represents the original equation and the bottom row represents a new equation obtained by an appropriate coordinate change.
}
\label{fig:bdownna}
\end{figure}
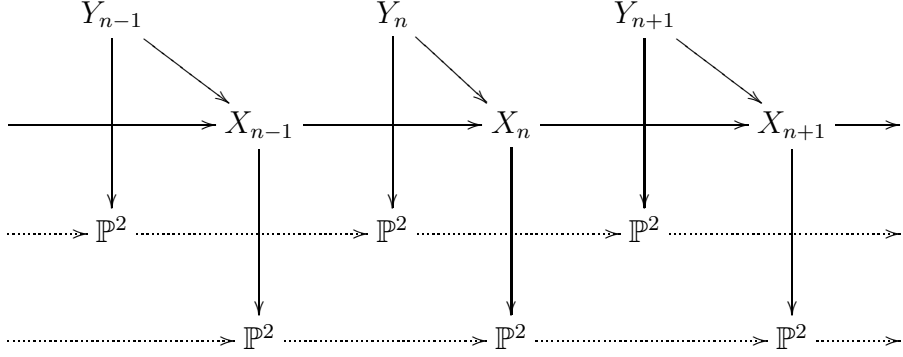
\end{remark}


\section{Basic properties of an equation with a space of initial conditions}\label{sec:degree}

In this section, we first recall Takenawa's result on the degree growth for an equation \cite{takenawa2}.
Next we shall see that, as in the autonomous case, the degree growth of a nonautonomous equation with a space of initial conditions can be classified into three cases.
Finally we show some relations between the degree growth of an equation and the Picard number of a space of initial conditions.

In this section, we consider the situation in Definition~\ref{def:sic}.
We will also use the $\Phi$ and $\iota_n$ defined in Remark~\ref{rem:phi}.

Since we will not use the third condition in Definition~\ref{def:sic} in this section, the results will still hold in the case where the third condition is not satisfied.

\begin{lemma}[Takenawa \cite{takenawa2}]\label{lem:degreeintersection}
\[
	\deg \varphi^{n} = (\Phi^n e^{(0)}, e^{(0)}).
\]
\end{lemma}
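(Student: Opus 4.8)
The plan is to express $\deg\varphi^{n}$ as an intersection number on the fixed surface $X_{0}$ and then transport that number to $\mathbb{Z}^{1,r}$ via the isometries $\iota_{n}$; this is essentially Takenawa's argument \cite{takenawa2}, but I record the steps.

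First I would recall the following standard fact about birational self-maps of the plane: if $\psi\colon\mathbb{P}^{2}\dashrightarrow\mathbb{P}^{2}$ is birational and $p\colon Z\to\mathbb{P}^{2}$ is a birational morphism from a smooth projective surface such that $q:=\psi\circ p$ is a morphism, then $\deg\psi=(p^{*}H)\cdot(q^{*}H)$, where $H$ denotes the class of a line. Indeed, by the projection formula $q_{*}\bigl((p^{*}H)\cdot(q^{*}H)\bigr)=q_{*}(p^{*}H)\cdot H$, and $q_{*}(p^{*}H)$ is the cycle-theoretic image under $\psi$ of a general line $L$: since $L$ avoids the base points of $\psi$ and $\psi|_{L}$ is birational onto its image, that image is an irreducible curve of degree $\deg\psi$, so $q_{*}(p^{*}H)=(\deg\psi)\,H$ and the formula follows.

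I would then apply this with $\psi=\varphi^{n}$ (starting index $0$), $Z=X_{0}$ and $p=\pi_{0}$. Let $\Psi_{n}\colon X_{0}\to X_{n}$ be the isomorphism $\varphi_{n-1}\circ\cdots\circ\varphi_{0}$ of surfaces, which exists by the first condition of Definition~\ref{def:sic}; then $\varphi^{n}\circ\pi_{0}=\pi_{n}\circ\Psi_{n}$ is a morphism, so the preliminary fact gives $\deg\varphi^{n}=(\pi_{0}^{*}H)\cdot\Psi_{n}^{*}(\pi_{n}^{*}H)$ in $\operatorname{Pic}X_{0}$. Since $\Psi_{n}$ is an isomorphism it preserves the intersection form and satisfies $\Psi_{n*}\Psi_{n}^{*}=\operatorname{id}$, so pushing forward by $\Psi_{n*}$ yields $\deg\varphi^{n}=\Psi_{n*}(\pi_{0}^{*}H)\cdot(\pi_{n}^{*}H)$ in $\operatorname{Pic}X_{n}$. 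Now $\pi_{m}^{*}H=\iota_{m}(e^{(0)})$ by definition of the geometric basis, and the definition of $\Phi$ together with the second condition of Definition~\ref{def:sic} (independence of $n$) gives $\varphi_{m*}\circ\iota_{m}=\iota_{m+1}\circ\Phi$; iterating this, $\Psi_{n*}\bigl(\iota_{0}(e^{(0)})\bigr)=\iota_{n}\bigl(\Phi^{n}e^{(0)}\bigr)$. Hence $\deg\varphi^{n}=\iota_{n}(\Phi^{n}e^{(0)})\cdot\iota_{n}(e^{(0)})=(\Phi^{n}e^{(0)},e^{(0)})$, the last equality because each $\iota_{n}$ carries the standard bilinear form on $\mathbb{Z}^{1,r}$ to the intersection form on $\operatorname{Pic}X_{n}$; the case $n=0$ is trivial.

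The delicate point is the preliminary fact in the second paragraph: one must notice that $\pi_{0}$ need not resolve the indeterminacy of $\varphi^{n}$ — what matters is only that $\varphi^{n}\circ\pi_{0}$ is already a morphism — and one needs the standard but slightly fussy statement that the image of a general line under a degree-$d$ Cremona transformation is an irreducible degree-$d$ curve on which the map restricts birationally. Everything after that is bookkeeping with the isomorphisms $\Psi_{n}$ and the relation $\varphi_{m*}\iota_{m}=\iota_{m+1}\Phi$; in particular only the first two conditions of Definition~\ref{def:sic} are used, in keeping with the remark preceding the lemma.
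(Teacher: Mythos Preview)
Your proof is correct and follows essentially the same route as the paper: compute the degree as an intersection number on a surface dominating $\mathbb{P}^{2}$, then track $e^{(0)}$ through $\iota_{n}$ and $\Phi$. The only cosmetic difference is that what you establish as a ``preliminary fact'' via the image of a general line is exactly what the paper packages into Definition~\ref{app:p2degree} together with its definition of $f_{*}$ through a resolution, and the paper works with a generic starting index $\ell$ rather than fixing $\ell=0$.
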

\begin{proof}
Consider the following commutative diagram:
\[
\xymatrix{
	\mathbb{Z}^{1, r} \ar[d]_{\iota_{\ell}} \ar[r]^{\Phi} & \cdots \ar[r]^{\Phi} & \mathbb{Z}^{1, r} \ar[d]^{\iota_{\ell+n}} \\
	\operatorname{Pic} X_{\ell} \ar[r]_{\varphi_{\ell*}} & \cdots \ar[r]_{\varphi_{\ell+n-1*}} & \operatorname{Pic} X_{\ell+n} \ar[d]^{\pi_{\ell+n*}} \\
	\operatorname{Pic}(\mathbb{P}^2) \ar[u]^{\pi^{*}_{\ell}} &  & \operatorname{Pic}(\mathbb{P}^2).
}
\]
Using Definition~\ref{app:p2degree}, we have
\begin{align*}
	\deg \varphi^n &= \left( \pi_{\ell+n*}(\varphi_{\ell+n-1} \circ \cdots \circ \varphi_{\ell})_{*} \pi^{*}_{\ell} \mathcal{O}_{\mathbb{P}^2}(1) \right) \cdot \mathcal{O}_{\mathbb{P}^2}(1) \\
	&= \left( (\varphi_{\ell+n-1} \circ \cdots \circ \varphi_{\ell})_{*} e^{(0)}_{\ell} \right) \cdot e^{(0)}_{\ell+n} \\
	&= \left( \iota_{\ell+n-1}\Phi^n e^{(0)} \right) \cdot e^{(0)}_{\ell+n} \\
	&= (\Phi^n e^{(0)}, e^{(0)}).
\end{align*}
\end{proof}

\begin{proposition}\label{prop:jordan}
The Jordan normal form of $\Phi$ is one of the following three:
\begin{itemize}
\item
\[
	\Phi
	\sim
	\left( \begin{matrix}
		\mu_1 & & \\
		& \ddots & \\
		& & \mu_{r+1}
	\end{matrix} \right),
\]
where $\mu_i$ are all roots of unity.
In particular, there exists $\ell > 0$ such that $\Phi^\ell = \operatorname{id}$ and thus the degree growth of the equation is bounded.

\item
\[
	\Phi
	\sim
	\left( \begin{matrix}
		1 & 1 & & & & & \\
		& 1 & 1 & & & & \\
		& & 1 & & & & \\
		& & & \mu_1 & & \\
		& & & & \ddots & \\
		& & & & & \mu_{r-2}
	\end{matrix} \right),
\]
where $\mu_i$ are all roots of unity.
In this case, the degree grows quadratically.
The dominant eigenvector is isotropic.

\item
\[
	\Phi
	\sim
	\left( \begin{matrix}
		\lambda & & & & & \\
		& \frac{1}{\lambda} & & & & \\
		& & \mu_1 & & \\
		& & & \ddots & \\
		& & & & \mu_{r-1}
	\end{matrix} \right),
\]
where $\lambda$ is a reciprocal quadratic integer greater than $1$ or a Salem number, and $|\mu_i| = 1$.
In this case, the entropy of the equation is $\log \lambda > 0$.
The two eigenvectors corresponding to $\lambda$ and $1/\lambda$ are both isotropic.
\end{itemize}
\end{proposition}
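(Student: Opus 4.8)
The plan is to treat $\Phi$ entirely as an automorphism of the Lorentzian lattice $(\mathbb{Z}^{1,r},(-,-))$, whose bilinear form has signature $(1,r)$; recall that $\Phi$ is an isometry of $(-,-)$, and note that nothing below uses the third condition of Definition~\ref{def:sic}. The first step I would record is that $\Phi$ preserves the forward light cone. Indeed, Lemma~\ref{lem:degreeintersection} gives $(\Phi^n e^{(0)},e^{(0)})=\deg\varphi^n\ge 1>0$ for all $n\ge 0$, while $(\Phi^n e^{(0)},\Phi^n e^{(0)})=(e^{(0)},e^{(0)})=1>0$; since two vectors of positive self-intersection pair positively exactly when they lie in the same component of the positive cone, every $\Phi^n e^{(0)}$ lies in the component containing $e^{(0)}$, so $\Phi$ does not interchange the two components. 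Hence $\Phi$ preserves a salient closed cone $\overline{\mathcal{C}}$ with nonempty interior and induces an isometry of hyperbolic space $\mathbb{H}^r$.

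Next I would invoke the classification of such isometries; this is the purely lattice-theoretic core of the Diller--Favre results \cite{df}, and I would recall its proof. Write $\rho$ for the spectral radius of $\Phi$; since $\det\Phi=\pm1$ the product of the eigenvalue moduli is $1$, so $\rho\ge 1$, and exactly one of three cases occurs. (a) $\rho=1$ and $\Phi$ is semisimple: all eigenvalues have modulus $1$, and as each is an algebraic integer whose Galois conjugates are again eigenvalues of $\Phi$ (hence of modulus $1$), Kronecker's theorem makes every eigenvalue a root of unity, so $\Phi$ has finite order. (b) $\rho=1$ and $\Phi$ is not semisimple: again all eigenvalues are roots of unity, so a suitable power $\Phi^{\ell}$ is unipotent and $\ne\operatorname{id}$, and the signature constraint forces $\Phi$ to have a single nontrivial Jordan block, of size exactly $3$; its eigenvector is isotropic (any eigenvector lying in a Jordan block of size $\ge 2$ of an isometry is), and its eigenvalue is $1$, not merely a root of unity, because the limiting direction of $\Phi^n e^{(0)}$ as $n\to+\infty$ spans that eigenline and lies in $\overline{\mathcal{C}}$, which is salient, so $\Phi$ scales it by a positive number. (c) $\rho>1$: by the Perron--Frobenius theorem for maps preserving a closed salient cone, $\lambda:=\rho$ is an eigenvalue with eigenvector $v_{\lambda}\in\overline{\mathcal{C}}$, and $(\Phi v_{\lambda},\Phi v_{\lambda})=(v_{\lambda},v_{\lambda})$ forces $(v_{\lambda},v_{\lambda})=0$; the eigenvector $v_{1/\lambda}$ for $1/\lambda$ is likewise isotropic, and $(v_{\lambda},v_{1/\lambda})\ne 0$ since otherwise these would span a $2$-dimensional totally isotropic subspace, impossible in signature $(1,r)$. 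Then $H=\operatorname{span}(v_{\lambda},v_{1/\lambda})$ is a hyperbolic plane, $\mathbb{R}^{1,r}=H\oplus H^{\perp}$ orthogonally with $H^{\perp}$ negative definite and $\Phi$-invariant, so $\Phi|_{H^{\perp}}$ is semisimple with all eigenvalues on the unit circle; the same no-totally-isotropic-plane argument shows $\lambda$ is simple and the unique eigenvalue of modulus $>1$. Thus $\Phi$ is diagonalizable with spectrum $\{\lambda,1/\lambda\}\cup\{\mu_i\}$, $|\mu_i|=1$.

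The degree growth then drops out of Lemma~\ref{lem:degreeintersection} in each case. In case (a) the set $\{\Phi^n e^{(0)}\}$ is finite, so $\deg\varphi^n$ is bounded. In case (b), $e^{(0)}\notin\ker(\Phi^{\ell}-\operatorname{id})^{2}$ since that kernel is negative semidefinite whereas $(e^{(0)},e^{(0)})=1>0$; hence the component of $e^{(0)}$ along the top of the size-$3$ block is nonzero, $(\Phi^{\ell n}e^{(0)},e^{(0)})$ is a quadratic polynomial in $n$ with positive leading coefficient, and the degree grows quadratically, with isotropic dominant eigenvector $\propto(\Phi^{\ell}-\operatorname{id})^{2}e^{(0)}$. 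In case (c), writing $e^{(0)}=a\,v_{\lambda}+b\,v_{1/\lambda}+h$ with $h\in H^{\perp}$, the relation $(e^{(0)},e^{(0)})=2ab\,(v_{\lambda},v_{1/\lambda})+(h,h)$ together with negative definiteness of $H^{\perp}$ forces $a,b\ne 0$, so $(\Phi^n e^{(0)},e^{(0)})=ab\,(v_{\lambda},v_{1/\lambda})\,\lambda^{n}+O(1)$ as $n\to+\infty$ and the entropy is $\log\lambda>0$. Finally $\lambda$ is a root of the monic integer characteristic polynomial of $\Phi$, whose other roots are $1/\lambda$ and numbers on the unit circle; hence $\lambda$ is either quadratic over $\mathbb{Q}$, i.e.\ a reciprocal quadratic integer $>1$, or of degree $\ge 4$ with $1/\lambda$ a conjugate and all remaining conjugates on the unit circle, i.e.\ a Salem number.

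The step I expect to be the main obstacle is the signature argument behind case (b): that a non-semisimple isometry of a signature-$(1,r)$ lattice has exactly one nontrivial Jordan block and it has size $3$. I would argue as in \cite{df}. Let $v$ span the isotropic line fixed by the unipotent power $\Phi^{\ell}$. Since the image of $N:=\Phi^{\ell}-\operatorname{id}$ is orthogonal to $v$ it lies in $v^{\perp}$, and since $\Phi^{\ell}$ acts as the identity on the negative definite quotient $v^{\perp}/\mathbb{R}v$ we get $N(v^{\perp})\subseteq\mathbb{R}v$; hence $N^{3}=0$ and $\operatorname{im}N^{2}\subseteq\mathbb{R}v$ is at most one-dimensional, which already bounds the block structure. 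That $N^{2}\ne0$ — so that the block has size $3$ rather than $2$ — follows cleanly from the degree identity: if $N^{2}=0$ then $m\mapsto(\Phi^{\ell m}e^{(0)},e^{(0)})=1+m\,(Ne^{(0)},e^{(0)})$ is affine in $m\in\mathbb{Z}$, and it is bounded below by $1$ because $\Phi^{\ell m}e^{(0)}$ and $e^{(0)}$ are unit vectors on the same sheet of the hyperboloid; an affine function bounded below is constant, so it equals $1$ identically, which forces $\Phi^{\ell m}e^{(0)}=e^{(0)}$ for all $m$; then $\Phi^{\ell}$ fixes the positive vector $e^{(0)}$ and, being unipotent, equals $\operatorname{id}$, contradicting non-semisimplicity. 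The remaining signature bookkeeping — that no size-$2$ block can coexist and that the intrinsic form of the size-$3$ block has signature $(1,2)$ — I would carry out exactly as in the autonomous treatment \cite{df}.
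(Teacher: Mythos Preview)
Your proof is correct and reaches the same conclusion, but the route differs noticeably from the paper's. The paper isolates the linear-algebra content into Lemma~\ref{lem:symmetricjordan}, whose proof (Appendix~\ref{sec:proofoflemma}) is a direct, eleven-step case analysis using only the identity $(fv,fw)=(v,w)$ and Lemma~\ref{lem:lem1}; no cones, no Perron--Frobenius. The paper then plugs in Lemma~\ref{lem:degreeintersection} and Kronecker's theorem exactly as you do, and excludes $\nu=-1$ and $\lambda<-1$ by computing $\lim_{n}\deg\varphi^{n}/(\nu^{n}n^{2})$ and $\lim_{n}\deg\varphi^{n}/\lambda^{n}$ and invoking positivity of the degree. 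By contrast, you first establish that $\Phi$ preserves the forward light cone and then organise everything around that: Perron--Frobenius locates $v_{\lambda}\in\overline{\mathcal{C}}$ in case~(c), and salience of $\overline{\mathcal{C}}$ replaces the paper's positivity-of-degree argument for $\nu=1$. Your exclusion of a size-$2$ block in case~(b) via ``affine and bounded below $\Rightarrow$ constant $\Rightarrow$ $\Phi^{\ell}e^{(0)}=e^{(0)}$ $\Rightarrow$ $\Phi^{\ell}=\operatorname{id}$'' is a genuinely different (and rather clean) argument from the paper's Step~9, which instead shows $v_{1}$ would be orthogonal to the whole Jordan basis and hence to $V$, contradicting nondegeneracy. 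Your approach is more conceptual and closer in spirit to \cite{df}; the paper's is more elementary and entirely self-contained, at the cost of the lengthy appendix. Two small points to tighten: state explicitly why $1/\lambda$ is an eigenvalue (e.g.\ $\Phi^{T}=A\Phi^{-1}A^{-1}$ for the Gram matrix $A$, so $\Phi$ and $\Phi^{-1}$ are similar), and note that the ``no coexisting size-$2$ block'' step follows immediately from your own observation $\operatorname{im}N\subseteq v^{\perp}$ together with Lemma~\ref{lem:lem1}, so you need not defer it to \cite{df}.
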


These three cases correspond to the classes $1, 3$ and $4$ in Theorem~\ref{thm:df}, respectively.

\begin{corollary}[Takenawa \cite{takenawa2}]\label{cor:entropyeigenvalue}
The dynamical degree of an equation is given by the maximum eigenvalue of $\Phi$ and the entropy by its logarithm.
\end{corollary}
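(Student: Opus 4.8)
The plan is to combine Lemma~\ref{lem:degreeintersection}, which gives $d_n := \deg \varphi^{n} = (\Phi^{n} e^{(0)}, e^{(0)})$, with the trichotomy of Proposition~\ref{prop:jordan}, and to read off the growth rate of $d_n$ in each of the three cases. Observe first that $d_n$ is expressed purely in terms of $\Phi$ and $e^{(0)}$, hence does not depend on the starting index, so the limit defining the entropy exists (Remark~\ref{rem:degreeexist}) and it suffices to show $\lim_{n} d_n^{1/n}$ equals the largest modulus of an eigenvalue of $\Phi$.

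In the first case of Proposition~\ref{prop:jordan} some power $\Phi^{\ell}$ is the identity, so $d_n$ is bounded and $\lim d_n^{1/n} = 1$, which is the modulus of every eigenvalue. In the second case $\Phi$ has a single $3\times 3$ Jordan block at the eigenvalue $1$ and roots of unity otherwise, so $d_n$ is a quasi-polynomial of degree $2$; in particular $1 \le d_n = O(n^2)$ and again $\lim d_n^{1/n} = 1$, the maximal modulus. So in both of these cases the statement is immediate.

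The substantive case is the third. Here $\Phi$ is diagonalisable over $\mathbb{C}$, with eigenvalues $\lambda > 1$, $\lambda^{-1}$ and $\mu_1, \dots, \mu_{r-1}$ of modulus $1$, and the eigenvectors $\theta^{+}, \theta^{-}$ for $\lambda, \lambda^{-1}$ are isotropic. Diagonalisability gives at once $d_n = O(\lambda^n)$, so $\limsup d_n^{1/n} \le \lambda$. For the matching lower bound I would decompose $e^{(0)} = a \theta^{+} + b \theta^{-} + w$, where $w$ lies in the span of the remaining eigenvectors, which is $\langle \theta^{+}, \theta^{-}\rangle^{\perp}$, is negative definite and $\Phi$-invariant, and on which $\Phi$ is power-bounded. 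Since $\theta^{\pm}$ are isotropic and orthogonal to $w$, writing $\gamma := (\theta^{+}, \theta^{-})$ one gets $(\theta^{+}, e^{(0)}) = b\gamma$, $(\theta^{-}, e^{(0)}) = a\gamma$, and hence
\[
	d_n = (\Phi^{n} e^{(0)}, e^{(0)}) = a b \gamma (\lambda^{n} + \lambda^{-n}) + O(1).
\]
It therefore remains to show $ab\gamma > 0$. This is where the geometry enters: passing to the Sakai picture via Proposition~\ref{prop:corresp}, both $\Phi$ and $\Phi^{-1}$ are realised by Cremona isometries, which preserve the nef cone (Lemma~\ref{lem:cremonacone}) and the effective cone; a Perron--Frobenius argument applied to the action on the nef cone lets one choose $\theta^{+}$ and $\theta^{-}$ nef. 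Then $\theta^{\pm}$ are nonzero nef classes with vanishing self-intersection, while $e^{(0)} = \pi_n^{*}\mathcal{O}_{\mathbb{P}^2}(1)$ is nef with $(e^{(0)})^2 = 1 > 0$; the light-cone inequality for the Lorentzian form on $\operatorname{Pic}$ gives $(\theta^{\pm}, e^{(0)}) > 0$, which forces $\gamma > 0$ (a Lorentzian lattice contains no isotropic plane, so $\theta^{+}$ and $\theta^{-}$ cannot be orthogonal) and then $a = (\theta^{-}, e^{(0)})/\gamma > 0$, $b = (\theta^{+}, e^{(0)})/\gamma > 0$. Hence $d_n \sim ab\gamma\,\lambda^n$ and $\lim d_n^{1/n} = \lambda$, which identifies the dynamical degree with the largest eigenvalue of $\Phi$ and the entropy with its logarithm.

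The main obstacle is exactly this last point: ruling out that the nef and big class $e^{(0)}$ is orthogonal to the dominant isotropic eigendirection, equivalently that $\Phi^n e^{(0)}$ fails to grow at the maximal rate. Everything else is linear-algebra bookkeeping with the Jordan form of $\Phi$; the positivity $ab\gamma > 0$ is the one place where one genuinely uses that $\Phi$ arises from an automorphism, so that the relevant eigenvectors may be taken nef and pair positively with $e^{(0)}$.
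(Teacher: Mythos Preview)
Your argument is correct, but it takes a noticeably heavier route than the paper. In the paper the corollary is an immediate consequence of Proposition~\ref{prop:jordan}, whose proof already handles the crucial nonvanishing in case~(\ref{eq:jordanexp}) by pure linear algebra: since $v_1$ (your $\theta^+$) is isotropic, Lemma~\ref{lem:lem1} says $v_1^{\perp}$ is negative semi-definite; but $(e^{(0)},e^{(0)})=1>0$, so $e^{(0)}\notin v_1^{\perp}$ and hence $(e^{(0)},v_1)\ne 0$. The same applies to $v_2$. Positivity of the limit and $\lambda>1$ then follow simply from $\deg\varphi^n>0$. No nef cones, no Perron--Frobenius, no passage to the Sakai picture are needed.

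By contrast, you establish $ab\gamma>0$ geometrically: you pass through Proposition~\ref{prop:corresp} to realise $\Phi^{\pm1}$ as Cremona isometries, use Lemma~\ref{lem:cremonacone} to preserve the nef cone, run a Perron--Frobenius/limit argument to put $\theta^{\pm}$ in the nef cone, and then invoke the light-cone inequality. This is valid, but two points are worth flagging. First, it is strictly more machinery for the same conclusion; the signature argument above is a one-liner. Second, your route goes through Proposition~\ref{prop:corresp} and therefore uses the third condition in Definition~\ref{def:sic}, whereas the paper explicitly remarks at the start of \S\ref{sec:degree} that all results there (including this corollary) hold without that condition. So your proof, while correct under the full Definition~\ref{def:sic}, yields a weaker statement than the paper's. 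If you want to match the paper's generality, replace your nef-cone argument by the signature observation above.
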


\begin{corollary}
Theorem~\ref{thm:sicgrowth} still holds in the nonautonomous case.
\end{corollary}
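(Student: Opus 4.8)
The plan is to read off all three assertions of Theorem~\ref{thm:sicgrowth} directly from the trichotomy established in Proposition~\ref{prop:jordan}, together with the identification of the dynamical degree in Corollary~\ref{cor:entropyeigenvalue}. Since the equation is assumed to possess a space of initial conditions in the sense of Definition~\ref{def:sic}, the second condition there guarantees that the isometry $\Phi$ of $\mathbb{Z}^{1, r}$ is independent of $n$, so Proposition~\ref{prop:jordan} applies and places the Jordan normal form of $\Phi$ into exactly one of its three listed forms. Moreover, by Lemma~\ref{lem:degreeintersection} the degree $\deg \varphi^n = (\Phi^n e^{(0)}, e^{(0)})$ does not depend on the starting index $\ell$, so the dynamical degree is well-defined notwithstanding the pathologies of Remark~\ref{rem:degreeexist}, and by Corollary~\ref{cor:entropyeigenvalue} it equals the maximum eigenvalue of $\Phi$.

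First I would treat the value of the dynamical degree. In the first and second Jordan cases every eigenvalue is a root of unity, so the maximum eigenvalue has modulus $1$ and the dynamical degree is $1$. In the third case the maximum eigenvalue is $\lambda$, which Proposition~\ref{prop:jordan} already identifies as a reciprocal quadratic integer greater than $1$ or a Salem number. Combining the three cases shows that the dynamical degree is always $1$, a reciprocal quadratic integer greater than $1$, or a Salem number, which is the first assertion of Theorem~\ref{thm:sicgrowth}.

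Next I would handle the claim about degree growth when the dynamical degree equals $1$. If the dynamical degree is $1$, then $\Phi$ cannot be in the third Jordan case (there the dynamical degree is $\lambda > 1$), so $\Phi$ lies in the first or second case; by Proposition~\ref{prop:jordan} the degree growth is then bounded or quadratic, respectively. Finally, for the ``in particular'' statement I would argue by contradiction: if an equation with linearly growing degree possessed a space of initial conditions, Proposition~\ref{prop:jordan} would force its degree growth to be bounded, quadratic, or exponential, none of which is linear; hence such an equation cannot have a space of initial conditions.

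I expect no serious obstacle, since all of the analytic content---the spectral classification of $\Phi$ and the number-theoretic description of $\lambda$---has already been carried out in Proposition~\ref{prop:jordan}, and the corollary reduces to matching each of the three Jordan cases to the corresponding conclusion of Theorem~\ref{thm:sicgrowth}. The only point requiring a word of care is the well-definedness of the dynamical degree in the nonautonomous setting, which is secured by the $\ell$-independence of $\deg \varphi^n$ coming from Lemma~\ref{lem:degreeintersection}.
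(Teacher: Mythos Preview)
Your proposal is correct and follows exactly the approach the paper intends: the corollary is stated without proof immediately after Proposition~\ref{prop:jordan} and Corollary~\ref{cor:entropyeigenvalue} because it is meant to be read off directly from the three Jordan cases, which is precisely what you do. Your added remark on the $\ell$-independence of $\deg\varphi^n$ via Lemma~\ref{lem:degreeintersection} is a nice clarification that the paper leaves implicit.
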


\begin{remark}
We have already seen in Example~\ref{ex:nonpower} that Theorem~\ref{thm:autonomousgrowth} does {\em not} hold in general nonautonomous cases.
To extend Theorem~\ref{thm:autonomousgrowth} to the nonautonomous case, it is necessary to apply some conditions on the mapping $\varphi_n$ itself.
However, since there exist too many possible artificial equations in the nonautonomous case, it would be extremely difficult to describe such conditions in all generality.
\end{remark}

It is easy to prove Proposition~\ref{prop:jordan} if we admit the following two lemmas in linear algebra.

\begin{lemma}\label{lem:lem1}
Let $V$ be an $(r+1)$-dimensional $\mathbb{C}$-vector space with a Hermitian form $(-, -)$ of signature $(1, r)$.
If $v \in V$ is isotropic, i.e.\,$(v, v) = 0$ and $v \ne 0$, then the signature of $(-, -)\big|_{v^{\perp}}$ is $(0, r - 1)$ and its kernel is generated by $v$.
In particular, if $v_1, v_2$ satisfy $(v_1, v_1) = (v_1, v_2) = (v_2, v_2) = 0$, then $v_1$ and $v_2$ are linearly dependent.
\end{lemma}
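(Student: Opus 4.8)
The plan is to prove Lemma~\ref{lem:lem1} by a direct orthogonalization argument based on the signature hypothesis. First I would use the fact that a Hermitian form of signature $(1, r)$ admits an orthogonal basis $f_0, f_1, \ldots, f_r$ with $(f_0, f_0) = 1$ and $(f_i, f_i) = -1$ for $i \ge 1$. Writing $v = \sum_{i=0}^r c_i f_i$, the isotropy condition $(v, v) = 0$ reads $|c_0|^2 = \sum_{i \ge 1} |c_i|^2$, and since $v \ne 0$ we must have $c_0 \ne 0$ (otherwise $(v,v) = -\sum |c_i|^2 < 0$). After rescaling we may assume $c_0 = 1$.

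Next I would analyze $(-,-)\big|_{v^{\perp}}$. Since $(v, v) = 0$, the vector $v$ itself lies in $v^{\perp}$, so the restriction is degenerate and $v \in \ker\big((-,-)\big|_{v^{\perp}}\big)$. To see that the kernel is exactly $\mathbb{C}v$ and the signature is $(0, r-1)$, I would complete $v$ to a basis of $v^{\perp}$: choose the vector $w = \sum_{i \ge 1} \bar{c}_i f_i$ (or a suitable normalization), check $(v, w) \ne 0$, and observe that $\{v, w\}$ spans a hyperbolic plane $H$ inside $\operatorname{span}(f_0, \ldots, f_r)$ on which the form is nondegenerate of signature $(1,1)$. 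Then $V = H \oplus H^{\perp}$ with $H^{\perp}$ negative definite of dimension $r - 1$, so the form has signature $(0, r-1)$ there. Since $v^{\perp} = \mathbb{C}v \oplus H^{\perp}$ (a dimension count: $\dim v^{\perp} = r$, and $\mathbb{C}v \cap H^{\perp} = 0$ because $v \notin H^{\perp}$), the restriction to $v^{\perp}$ has radical exactly $\mathbb{C}v$ and signature $(0, r-1)$ on the complement, as claimed.

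For the final assertion, suppose $v_1, v_2$ are both isotropic with $(v_1, v_2) = 0$. Then $v_2 \in v_1^{\perp}$, and the image of $v_2$ in the quotient $v_1^{\perp} / \mathbb{C}v_1$ is an isotropic vector for the induced negative-definite form; hence that image is zero, i.e.\ $v_2 \in \mathbb{C}v_1$, so $v_1$ and $v_2$ are linearly dependent.

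I expect the only mildly delicate point to be bookkeeping the orthogonal decomposition $V = H \oplus H^{\perp}$ and the identification $v^{\perp} = \mathbb{C}v \oplus H^{\perp}$ cleanly, since $v$ is isotropic and does not sit inside the standard orthogonal basis; everything else is routine linear algebra over $\mathbb{C}$ with a Hermitian form. An alternative that avoids choosing $w$ explicitly is to note that $v^{\perp}/\mathbb{C}v$ inherits a well-defined Hermitian form whose signature is $(1,r)$ minus the hyperbolic plane contributed by a complement of $v^{\perp}$, giving $(0, r-1)$ directly; I would pick whichever phrasing is shortest.
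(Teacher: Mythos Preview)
The paper does not actually prove Lemma~\ref{lem:lem1}; it explicitly omits the proof, calling it ``a well-known fact in linear algebra.'' So there is no argument in the paper to compare your proposal against.

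Your outline is sound and would yield a complete proof. One small caveat: the particular auxiliary vector $w = \sum_{i \ge 1} \bar{c}_i f_i$ need not satisfy $(v, w) \ne 0$ under the convention that the Hermitian form is linear in the first argument (then $(v,w) = -\sum_{i \ge 1} c_i^2$, which can vanish, e.g.\ $c_1 = 1$, $c_2 = i$). The simplest fix is to take $w = f_0$: then $(v, f_0) = c_0 = 1 \ne 0$, the plane $H = \operatorname{span}(v, f_0)$ has Gram matrix $\begin{pmatrix} 0 & 1 \\ 1 & 1 \end{pmatrix}$, hence is nondegenerate of signature $(1,1)$, and the rest of your argument ($V = H \oplus H^{\perp}$, $v^{\perp} = \mathbb{C}v \oplus H^{\perp}$, kernel exactly $\mathbb{C}v$) goes through verbatim. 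Your quotient argument for the final assertion is correct as stated.
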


\begin{lemma}\label{lem:symmetricjordan}
Let $V$ be an $(r+1)$-dimensional $\mathbb{R}$-vector space with a symmetric bilinear form $(-, -)$ of signature $(1, r)$, and let $f$ be a linear transformation on $V$ which preserves $(-, -)$.

\begin{itemize}
\item[(1)]
The Jordan normal form of $f$ must be one of the following:
\begin{equation}\label{eq:jordanbounded}
	\left( \begin{matrix}
		\mu_1 & & \\
		& \ddots & \\
		& & \mu_{r+1}
	\end{matrix} \right)
	\quad
	(|\mu_i| = 1),
\end{equation}
\begin{equation}\label{eq:jordanquadratic}
	\left( \begin{matrix}
		\nu & 1 & & & & & \\
		& \nu & 1 & & & & \\
		& & \nu & & & & \\
		& & & \mu_1 & & \\
		& & & & \ddots & \\
		& & & & & \mu_{r-2}
	\end{matrix} \right)
	\quad
	(\nu = \pm 1, |\mu_i| = 1),
\end{equation}
\begin{equation}\label{eq:jordanexp}
	\left( \begin{matrix}
		\lambda & & & & & \\
		& \frac{1}{\lambda} & & & & \\
		& & \mu_1 & & \\
		& & & \ddots & \\
		& & & & \mu_{r-1}
	\end{matrix} \right)
	\quad
	(\lambda \in \mathbb{R}, |\lambda| > 1, |\mu_i| = 1).
\end{equation}

\item[(2)]
Consider the case where the Jordan normal form of $f$ is (\ref{eq:jordanquadratic}).
If $(v_1, v_2, v_3, u_1, \ldots, u_{r-2})$ is the corresponding Jordan basis on $V_{\mathbb{C}} = V \otimes \mathbb{C}$, then $v_1$ is isotropic and
\[
	\lim_{n \to +\infty} \frac{1}{\nu^n n^2} f^n w = \frac{(w, v_1)}{2 (v_3, v_1)} v_1
\]
for any $w \in V_{\mathbb{C}}$.

\item[(3)]
Consider the case where the Jordan normal form of $f$ is (\ref{eq:jordanexp}).
If $(v_1, v_2, u_1, \ldots, u_{r-1})$ is the corresponding Jordan basis, then $v_1$ and $v_2$ are both isotropic and
\[
	\lim_{n \to +\infty} \frac{1}{\lambda^n} f^n w = \frac{(w, v_2)}{(v_1, v_2)} v_1
\]
for any $w \in V_{\mathbb{C}}$.
\end{itemize}
\end{lemma}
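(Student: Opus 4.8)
The plan is to prove Lemma~\ref{lem:symmetricjordan} by exploiting the fact that a linear map preserving a form of signature $(1,r)$ acts on the associated hyperbolic space, and that its behaviour is governed by how its eigenvalues interact with the light cone. I would first record the elementary spectral constraint: if $f$ preserves $(-,-)$ then $f^{\vee} = f^{-1}$ with respect to the form, so the characteristic polynomial is palindromic (self-reciprocal), hence the multiset of eigenvalues is invariant under $\lambda \mapsto 1/\lambda$, and each eigenvalue $\lambda$ with $|\lambda| \ne 1$ is paired with $1/\lambda$ of the same Jordan block structure. Then I would observe that on any $f$-invariant subspace spanned by the generalized eigenspaces for a pair $\{\lambda, 1/\lambda\}$ with $|\lambda| \ne 1$, the form is nondegenerate and (by a short computation pairing the two generalized eigenspaces) must have a hyperbolic part; since the whole form has signature $(1,r)$, there can be at most one such pair and each of its Jordan blocks must be $1\times 1$ — this already forces shape~(\ref{eq:jordanexp}) whenever some $|\lambda|>1$.

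For the remaining case, all eigenvalues lie on the unit circle. Here the key point is to bound the size of Jordan blocks. If $f$ had a Jordan block of size $\ge 2$ for an eigenvalue $\mu$ with $|\mu|=1$, I would use the standard fact that on the real form an eigenvalue on the unit circle for a $(-,-)$-isometry with a nontrivial Jordan block creates an isotropic vector $v$ (the top of the chain) together with a vector $v'$ with $(v,v') \ne 0$, and the form restricted to the span of such a chain is again forced to be (partly) indefinite. Counting signatures: a single Jordan block of size $k \ge 2$ contributes a $(1,k-1)$ (or larger indefinite) piece in a way that, combined with signature $(1,r)$, leaves no room for two independent large blocks, and a size-$3$ block uses up exactly the $(1,2)$ budget. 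A size-$\ge 4$ block, or two size-$\ge 2$ blocks, would force signature $(2,\cdot)$, a contradiction. For real $f$ one also checks $\mu$ with a size-$2$ or size-$3$ block must be real, i.e.\ $\nu = \pm 1$: a genuinely complex $\mu$ on the circle would, with its conjugate, again overspend the indefinite budget. This yields either (\ref{eq:jordanbounded}) (all blocks trivial) or (\ref{eq:jordanquadratic}) (one size-$3$ block at $\nu = \pm 1$). I would also note that $|\mu_i|=1$ alone is the content here; the sharper ``root of unity'' statement in Proposition~\ref{prop:jordan} comes separately from $\Phi$ being an \emph{integral} matrix with a reciprocal characteristic polynomial.

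For part~(2), with Jordan basis $(v_1, v_2, v_3, u_1, \dots, u_{r-2})$ and $f v_1 = \nu v_1$, $f v_2 = \nu v_2 + v_1$, $f v_3 = \nu v_3 + v_2$, I would compute $f^n$ on the three-dimensional block: $f^n v_3 = \nu^n v_3 + n\nu^{n-1} v_2 + \binom{n}{2}\nu^{n-2} v_1$, so the $v_1$-coefficient grows like $\tfrac12 \nu^{n} n^2$ while the contributions from $v_2, v_3$ and from the unimodular part $u_i$ are $O(\nu^n n)$ or $O(1)$. Hence $\nu^{-n} n^{-2} f^n w \to c\, v_1$ where $c$ is the $v_3$-component of $w$. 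To identify $c$ intrinsically as $(w,v_1)/2(v_3,v_1)$, I would first show $v_1$ is isotropic: $f$ preserves the form and $fv_1 = \nu v_1$ with $\nu^2 = 1$ forces $(v_1,v_1) = \nu^2 (v_1,v_1)$, no information, so instead use that $v_1 \in$ (the span of the block)$^{\perp}$ modulo itself, or directly $(v_1,v_3) = (fv_1, fv_3) = \nu^2(v_1,v_3) + \nu(v_1,v_2)$, giving $(v_1,v_2) = 0$, and then $(v_1,v_1)=(fv_1,fv_1)$ combined with expanding $(fv_2,fv_2)$ and $(fv_2,fv_3)$ yields $(v_1,v_1)=0$ and $(v_2,v_2) = -2(v_1,v_3)$ etc.; pairing $w$ against $v_1$ then isolates exactly the $v_3$-coefficient times $(v_3,v_1)$. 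Part~(3) is the analogous but easier computation: $f^n v_1 = \lambda^n v_1$ dominates, $(v_1,v_1)=(v_2,v_2)=0$ follows from $\lambda \ne \pm1$ and form-preservation, $(v_1,v_2)\ne 0$ by nondegeneracy of the form on their span, and pairing $\lambda^{-n} f^n w$ against $v_2$ picks out the $v_1$-coefficient. The main obstacle is the signature bookkeeping in part~(1): one must carefully track how each Jordan block (and each conjugate pair of complex-unimodular eigenvalues) contributes to the positive and negative indices of inertia, and rule out all configurations except the three listed — this is precisely the ``elementary but rather involved'' argument the author defers to Appendix~\ref{sec:proofoflemma}.
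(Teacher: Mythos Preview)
Your overall strategy---palindromic characteristic polynomial, then signature bookkeeping on indecomposable $(f,(-,-))$-pieces---is a legitimate alternative to the paper's approach, which instead works in $V_{\mathbb{C}}$ with the Hermitian extension of $(-,-)$ and repeatedly applies Lemma~\ref{lem:lem1} to derive contradictions one Jordan block at a time. However, there is a genuine gap in your argument for part~(1): you never exclude a \emph{single} size-$2$ Jordan block at $\nu=\pm 1$. Your list of forbidden configurations covers ``size $\ge 4$'' and ``two or more blocks of size $\ge 2$'', but not this case, and your claim that a size-$k$ block contributes a $(1,k-1)$ piece is simply false at $k=2$. Indeed, from $(v_1,v_2)=(fv_1,fv_2)$ and $(v_2,v_2)=(fv_2,fv_2)$ with $fv_1=\nu v_1$, $fv_2=\nu v_2+v_1$ one gets $(v_1,v_1)=(v_1,v_2)=0$, so the form on $\mathrm{span}(v_1,v_2)$ is \emph{degenerate}, not hyperbolic---no $(1,1)$ contribution is produced.

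The paper closes this in its Step~9 by checking further that $(v_1,u_i)=0$ for every remaining eigenvector $u_i$ (separating the cases $\mu_i\ne\nu$ and $\mu_i=\nu$), whence $v_1$ lies in the radical of the full form, contradicting nondegeneracy. In your framework the correct statement is that for an isometry of a nondegenerate symmetric bilinear form, even-size Jordan blocks at $\pm 1$ never occur as standalone indecomposable pieces; they must pair up, and a pair of size-$2$ blocks carries signature $(2,2)$, which your budget argument does then exclude. Either route works, but the case must be addressed explicitly. Your computations for parts~(2) and~(3) agree with the paper's; just be sure to also record $(v_1,u_j)=0$ for all $j$ before concluding that pairing $w$ with $v_1$ isolates the $v_3$-coefficient.
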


Although we shall use Lemma~\ref{lem:lem1} throughout the paper, we omit its proof since it is a well-known fact in linear algebra.
The proof of Lemma~\ref{lem:symmetricjordan} will be given in Appendix~\ref{sec:proofoflemma} since it is long and not often stated explicitly in the literature.

\begin{proof}[Proof of Proposition~\ref{prop:jordan}]

By Lemma~\ref{lem:symmetricjordan}, the Jordan normal form of $\Phi$ is (\ref{eq:jordanbounded}), (\ref{eq:jordanquadratic}) or (\ref{eq:jordanexp}).

\noindent
\underline{Case: (\ref{eq:jordanbounded})}

It is sufficient to show that every eigenvalue of $\Phi$ is a root of unity.
Since $\Phi$ preserves the lattice $\mathbb{Z}^{1, r}$, its characteristic polynomial has integer coefficients.
Since all roots of this polynomial have modulus $1$, they are all roots of unity by Kronecker's theorem \cite{rootofunity}.

\noindent
\underline{Case: (\ref{eq:jordanquadratic})}

It is clear that the degree growth is at most quadratic, and the reason why the $\mu_i$ are roots of unity is the same as above.
Therefore, it is sufficient to show that $\nu = 1$ and that the degree growth is actually quadratic.

Using Lemma~\ref{lem:degreeintersection} and Lemma~\ref{lem:symmetricjordan} (2), we have
\begin{align*}
	\lim_{n \to +\infty} \frac{\deg \varphi^n}{\nu^n n^2}
	&= \left( \lim_{n \to +\infty} \frac{1}{\nu^n n^2} \Phi^n e^{(0)}, e^{(0)} \right) \\
	&= \left( \frac{(e^{(0)}, v_1)}{2 (v_3, v_1)} v_1, e^{(0)} \right) \\
	&= \frac{(e^{(0)}, v_1)^2}{2 (v_3, v_1)}.
\end{align*}
Since $v_1$ is isotropic, $(e^{(0)}, v_1)$ is not $0$ and thus $\deg \varphi^n / \nu^n$ grows quadratically.
Since $\deg \varphi^n$ is always positive, we have $\nu = 1$.

\noindent
\underline{Case: (\ref{eq:jordanexp})}

Since $\lambda$ has modulus greater than $1$, as in the case of (\ref{eq:jordanquadratic}), we have
\[
	\lim_{n \to +\infty} \frac{\deg \varphi^n}{\lambda^n} = \frac{(e^{(0)}, v_1)(e^{(0)}, v_2)}{(v_1, v_2)}.
\]
We can prove $(e^{(0)}, v_1) \ne 0$, $(e^{(0)}, v_2) \ne 0$ and $\lambda > 1$ in the same way as above.
\end{proof}

The following proposition shows the relation between the Jordan normal form of $\Phi$ and the Picard number $\rho(X_n)$.

\begin{proposition}\label{prop:picardgrowthna}
\begin{itemize}
\item[(1)]
If $\rho(X_n) < 10$, then the degree growth of the equation is bounded.

\item[(2)]
If $\rho(X_n) \le 10$, then the degree growth of the equation is bounded or quadratic.

\end{itemize}
\end{proposition}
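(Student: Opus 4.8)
The plan is to deduce both statements from Proposition~\ref{prop:jordan} together with the classical constraint relating the Picard number of a rational surface to the geometry of the lattice $\mathbb{Z}^{1,r}$. Recall that $\rho(X_n) = r+1$, where $\mathbb{Z}^{1,r}$ is the lattice of Remark~\ref{rem:phi}, equipped with the intersection form of signature $(1,r)$, and with $\Phi$ a linear isometry fixing the canonical class $K$. So both parts are really statements about how small $r$ forces the Jordan type of $\Phi$ to be.

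First I would handle part~(1). Suppose, for contradiction, that the degree growth is \emph{not} bounded; by Proposition~\ref{prop:jordan} the Jordan normal form of $\Phi$ is then either~(\ref{eq:jordanquadratic}) or~(\ref{eq:jordanexp}). In the quadratic case there is a $3\times 3$ Jordan block with eigenvalue $1$, whose top vector $v_1$ is isotropic; in the exponential case there is an isotropic eigenvector $v_1$ with eigenvalue $\lambda>1$ (and also $v_2$ with eigenvalue $1/\lambda$). The key classical input is that the Weyl group $W$ acting on $\mathbb{Z}^{1,r}$ (generated by reflections in the simple roots of the lattice $K^\perp \cong E_r$, with the standard labeling $A_1\times A_2 = E_3$, $A_4 = E_4$, $D_5 = E_5$, $E_6$, $E_7$, $E_8$, $E_8^{(1)}$ for $r = 3,\dots,9$) is finite precisely when $r \le 8$, i.e.\ when $\rho(X_n) = r+1 \le 9 < 10$. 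Since $\Phi$ is an isometry of $\mathbb{Z}^{1,r}$ fixing $K$, it lies in (a finite extension by graph automorphisms of) this Weyl group; when $r \le 8$ this group is finite, so $\Phi$ has finite order, forcing the Jordan form to be diagonalizable with all eigenvalues roots of unity — that is, case~(\ref{eq:jordanbounded}) — contradicting the assumption. Hence $\rho(X_n) < 10$ implies bounded degree growth. (Alternatively, and perhaps more in the spirit of the paper, one argues directly in the lattice: a unipotent non-semisimple isometry produces a nonzero isotropic fixed vector $v$, and then $v^\perp/\langle v\rangle$ carries a negative definite form of rank $r-1$ that must contain a copy of the root system of $E_{r-1}$ realized by the exceptional curves, which is impossible for $r < 9$; I would cite Proposition~\ref{app:irrednega} and the structure of $K^\perp$ from the Sakai-type discussion for the bookkeeping.)

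For part~(2), assume $\rho(X_n) \le 10$, i.e.\ $r \le 9$, and suppose the degree growth is neither bounded nor quadratic. Then by Proposition~\ref{prop:jordan} the Jordan form of $\Phi$ is~(\ref{eq:jordanexp}) with $\lambda$ a reciprocal quadratic integer $>1$ or a Salem number. The characteristic polynomial $\chi_\Phi(t) \in \mathbb{Z}[t]$ has degree $r+1 \le 10$, is divisible by $(t-1)$ (since $\Phi K = K$), and has a factor, the minimal polynomial of $\lambda$, which is either a reciprocal quadratic $t^2 - at + 1$ or a Salem polynomial. A Salem polynomial has degree at least $4$ and its product with $(t-1)$ and the cyclotomic factors accounting for the $\mu_i$ must still have total degree $\le 10$; the real obstruction is that the smallest degree of a Salem number is $4$, and the relevant lattice-theoretic fact (essentially Diller--Favre, Theorem~\ref{thm:sicgrowth}, in the autonomous case, and here reproven via $\Phi$) is that in $\mathbb{Z}^{1,r}$ with $r = 9$ the only isometries fixing $K$ with spectral radius $>1$ that can occur are those whose dynamical degree is a reciprocal quadratic integer (the $E_8^{(1)}$ case is parabolic and cannot give a Salem number of degree $\ge 4$ inside a rank-$10$ lattice while still fixing $K$), and for $r \le 8$ no such isometry exists at all by finiteness of the Weyl group as in part~(1). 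Either way, the existence of the isotropic eigenvector $v_1$ with eigenvalue $\lambda > 1$ together with $v^\perp/\langle v\rangle$ being negative definite of rank $r-1 \le 8$ forces $\lambda$ to satisfy a reciprocal quadratic equation (this is exactly the content of the $r = 9$, $E_8^{(1)}$ analysis: the induced action on the finite-index sublattice spanned by the nodal classes is, up to finite order, a translation, whose "length" is governed by a quadratic form — giving a reciprocal quadratic $\lambda$), contradicting that we are not in the quadratic-growth case and that $\lambda$ is not reciprocal quadratic unless growth is exactly of that controlled type. Tracing the contradiction back, the assumption that growth is exponential must fail, so growth is bounded or quadratic.

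The main obstacle I expect is part~(2): unlike part~(1), which is essentially just "the Weyl group of $E_r$ is finite for $r \le 8$", part~(2) requires the genuinely nontrivial input that rank-$10$ isometries of $\mathbb{Z}^{1,9}$ fixing $K$ cannot have a Salem spectral radius — equivalently, that the smallest Salem numbers of degree $\le 8$ realized by Coxeter-type elements do not arise here because $\Phi$ must fix $K$ rather than merely preserve $K^\perp$. The clean way to close this is to invoke Theorem~\ref{thm:sicgrowth} (Diller--Favre): having established in Proposition~\ref{prop:jordan} that $\Phi$ behaves exactly like the induced map on the Picard group of a space of initial conditions in the autonomous case, the dichotomy "$\rho \le 10 \Rightarrow$ bounded or quadratic" is the nonautonomous shadow of Proposition~\ref{prop:picardgrowth}, and I would deduce it formally from Proposition~\ref{prop:jordan} by ruling out case~(\ref{eq:jordanexp}) when $r+1 \le 10$ via the negative-definiteness of $v_1^\perp/\langle v_1\rangle$ and a rank count against the root lattice structure of $K^\perp$, citing Lemma~\ref{lem:lem1} for the signature bookkeeping and the Sakai-type description of $K^\perp$ recalled earlier in the excerpt.
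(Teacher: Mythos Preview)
Your approach to part~(1) can be made to work, but it is far more elaborate than necessary. The paper's argument is a two-line observation: since $(K,K) = K_{X_n}^2 = 10 - \rho(X_n) > 0$, the form on $K^\perp$ is negative definite; any lattice isometry of a negative definite lattice has finite order (it permutes the finitely many vectors of each given norm), so $\Phi|_{K^\perp}$ is periodic and hence $\Phi$ is. No Weyl group classification, no root-system bookkeeping. Your ``alternative'' paragraph gestures toward the lattice argument but then wanders into isotropic fixed vectors and $E_{r-1}$ root systems, which is again unnecessary.

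Part~(2) has a genuine gap. You are trying to rule out case~(\ref{eq:jordanexp}) by degree-counting Salem polynomials and invoking $E_8^{(1)}$ structure, but the argument never closes: a degree-$4$ Salem factor together with $(t-1)$ and cyclotomic factors fits comfortably in degree $\le 10$, so the rank count alone proves nothing. Worse, you seem to conflate ``$\lambda$ is a reciprocal quadratic integer'' with ``quadratic growth'': a reciprocal quadratic integer $\lambda > 1$ still gives \emph{exponential} growth (it is precisely one of the two possibilities in case~(\ref{eq:jordanexp})), so showing that $\lambda$ must be reciprocal quadratic would not yield the desired contradiction. Appealing to Theorem~\ref{thm:sicgrowth} is circular in spirit, since the present proposition is exactly the nonautonomous analogue you are trying to establish.

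The paper's proof of~(2) is again elementary and uses only Lemma~\ref{lem:lem1}. Suppose $\lambda > 1$ with eigenvector $v$. From $(v,K) = (\Phi v, \Phi K) = \lambda (v,K)$ one gets $K \in v^\perp$. By Lemma~\ref{lem:lem1}, the form on $v^\perp$ is semi-negative definite with kernel $\mathbb{C}v$; since $K$ and $v$ are eigenvectors for distinct eigenvalues ($1$ versus $\lambda$), $K \notin \mathbb{C}v$, hence $(K,K) < 0$. But $(K,K) = 10 - \rho(X_n) \ge 0$, a contradiction. The entire content of both parts is thus the single identity $(K,K) = 10 - \rho(X_n)$ together with the signature lemma; you should replace the Weyl-group and Salem-polynomial machinery with this.
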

\begin{proof}
The key to the proof is that $\Phi$ preserves $K = 3e^{(0)} - e^{(1)} - \cdots - e^{(r)}$.

(1)
Since
\[
	(K, K) = K^2_{X_n} = 10 - \rho(X_n) > 0,
\]
the bilinear form $(-, -)$ is negative definite on $K^{\perp}$.
Since $\Phi \big|_{K^{\perp}}$ preserves the lattice $K^{\perp}$ with a negative definite bilinear form, there exists $\ell > 0$ such that $\left( \Phi \big|_{K^{\perp}} \right)^{\ell} = \operatorname{id}$.

Note that the above also implies that $K$ (and $-K$) is isotropic if and only if the Picard number of $X_n$ is $10$.

(2)
Let us assume that the dynamical degree $\lambda$ is greater than $1$ and show that $(K, K) < 0$.
Let $v \in \mathbb{R}^{1, r}$ be the eigenvector corresponding to $\lambda$.
Since
\[
	(v, K) = (\lambda v, K) = \lambda (v, K),
\]
we have $K \in v^{\perp}$.
Lemma~\ref{lem:lem1} says that $(-, -) \big|_{v^{\perp}}$ is semi-negative definite and its kernel is generated by $v$.
Since $v$ and $K$ are eigenvectors corresponding to different eigenvalues, we have $K \notin \mathbb{C}v$, and thus $(K, K) < 0$.
\end{proof}

Since all generalized Halphen surfaces have Picard number $10$, one immediately obtains the following corollary, which was first shown by Takenawa on a case-by-case basis \cite{takenawa2}.

\begin{corollary}
The degree growth of any discrete Painlev\'{e} equation is quadratic.
In particular, all discrete Painlev\'{e} equations are integrable.
\end{corollary}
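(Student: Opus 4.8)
The plan is to combine Proposition~\ref{prop:picardgrowthna} with Proposition~\ref{prop:jordan} and the defining properties of a discrete Painlev\'e equation. Let $(X,\sigma)$ be a discrete Painlev\'e equation in the sense of Definition~\ref{def:discretepainleve}: $X$ is a generalized Halphen surface and $\sigma$ is a Cremona isometry on $\operatorname{Pic} X$ of infinite order. By Proposition~\ref{prop:corresp} this equation fits into the framework of Definition~\ref{def:sic} with $X_n = X$ for every $n$, and with the transformation $\Phi$ of Remark~\ref{rem:phi} conjugate to $\sigma^{-1}$; in particular $\Phi$ has infinite order because $\sigma$ does. Since $X$ is a generalized Halphen surface, its orthogonal lattice $K_X^{\perp}$ is an affine root lattice of type $E^{(1)}_8$, which has rank $9$, so $\rho(X) = 10$.

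Applying Proposition~\ref{prop:picardgrowthna}(2) with $\rho(X_n) = \rho(X) = 10$, the degree growth of the equation is either bounded or quadratic. To exclude the bounded case, suppose it occurred. Then we would be in the first case of Proposition~\ref{prop:jordan}, so $\Phi^{\ell} = \operatorname{id}$ for some $\ell > 0$, hence also $\sigma^{\ell} = \operatorname{id}$, contradicting the fact that $\sigma$ has infinite order. Therefore the degree of $\varphi^n$ grows quadratically, which is the first assertion.

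For the second assertion, quadratic degree growth puts us in the second case of Proposition~\ref{prop:jordan}: the dominant eigenvalue of $\Phi$ has modulus $1$, so by Corollary~\ref{cor:entropyeigenvalue} the dynamical degree equals $1$ and the algebraic entropy is $\log 1 = 0$. By the convention adopted in \textsection\ref{sec:intro}, an equation with zero algebraic entropy is integrable, and hence every discrete Painlev\'e equation is integrable.

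There is essentially no serious obstacle here, since the statement is a direct assembly of results already established. The only point that requires a little care is the bookkeeping in the first paragraph: one must check that passing from Sakai's data $(X,\sigma)$ to the setting of Definition~\ref{def:sic} via Proposition~\ref{prop:corresp} really does make $\Phi$ conjugate to $\sigma^{\pm 1}$, so that the infinite order of $\sigma$ is precisely what rules out the bounded (first) case of Proposition~\ref{prop:jordan}. One also uses implicitly that a generalized Halphen surface has Picard number exactly $10$, not less, which is exactly what lets us invoke part~(2) rather than part~(1) of Proposition~\ref{prop:picardgrowthna}.
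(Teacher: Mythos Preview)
Your proof is correct and follows essentially the same approach as the paper, which treats the corollary as immediate from the sentence preceding it (``Since all generalized Halphen surfaces have Picard number $10$\ldots'') together with Proposition~\ref{prop:picardgrowthna}(2). You make explicit two points the paper leaves implicit: that $\rho(X)=10$ (which the paper simply asserts as a known fact about generalized Halphen surfaces, whereas you derive it via the rank of $K_X^\perp$), and that the bounded case of Proposition~\ref{prop:jordan} is excluded by the infinite-order hypothesis on $\sigma$ in Definition~\ref{def:discretepainleve}.
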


As shown in the following example, the direct converse of Proposition~\ref{prop:picardgrowthna} does not hold, even in the autonomous case.

\begin{example}
Let $\varphi$ be an automorphism on a rational surface $X$ and let $P \in X$ be a fixed point of $\varphi$.
Let $\epsilon \colon \widetilde{X} \to X$ be the blow-up at $P$.
Then, $\varphi$ is lifted to an automorphism on $\widetilde{X}$.

This procedure does not change the algebraic entropy of an equation but increases the Picard number of a space of initial conditions.
\end{example}

When we consider the classification of equations with a space of initial conditions, it is sometimes necessary to perform a minimization of the space.
A concrete approach to such minimizations was considered by Carstea and Takenawa in \cite{ct}, where they gave an example of a minimization that contracts a curve passing through $\mathbb{C}^2$ (the finite region in $\mathbb{P}^2$ or $\mathbb{P}^1 \times \mathbb{P}^1$).
However, a general theory in the nonautonomous case was not yet known.
In the following section, we will consider a minimization of a space of initial conditions in the general case, in order to classify all nonautonomous integrable equations with unbounded degree growth that possess a space of initial conditions.


\section{Minimization of a space of initial conditions}\label{sec:minimalization}

Let us consider a minimization of a space of initial conditions for a nonautonomous mapping.
In this section, we consider the situation in Definition~\ref{def:sicsakai} and think of $(X, \sigma)$ as defining an equation.

We first recall the process of minimization in the autonomous case.

\begin{definition}\label{def:minimalautonomous}
Let $\varphi$ be an autonomous equation (automorphism) on a rational surface $X$.
Then, $\varphi$ is minimal on $X$ if there is no rational surface $X'$, no automorphism $\varphi'$ (on $X'$) and no birational morphism $\epsilon \colon X \to X'$, such that $\rho(X) > \rho(X')$ and $\epsilon \circ \varphi = \varphi' \circ \epsilon$:
\[
\xymatrix{
	X \ar[r]^{\sim}_{\varphi} \ar[d]^{\epsilon} & X \ar[d]^{\epsilon} \\
	X' \ar[r]^{\sim}_{\varphi'} & X'.
}
\]
\end{definition}

It is known that an automorphism $\varphi$ on $X$ is minimal if and only if there are no mutually disjoint exceptional curves of the first kind on $X$ that are permuted by $\varphi$.
The proof of this statement is almost exactly the same as that of Lemma~\ref{lem:minimalitycurves}.

Hence, what we call a minimization of an autonomous equation is first of all the process of finding such contractible curves and then to actually realize the contraction.

\begin{definition}\label{def:minimalna}
Let $X$ be a rational surface and let $\sigma$ be a Cremona isometry on $X$.
A nonautonomous equation $(X, \sigma)$ is minimal if there is no rational surface $X'$, no birational morphism $\epsilon \colon X \to X'$ and no Cremona isometry $\sigma'$ on $\operatorname{Pic} X'$, such that $\rho(X) > \rho(X')$ and $\epsilon_{*} \sigma = \sigma'  \epsilon_{*}$:
 \[
\xymatrix{
	\operatorname{Pic} X \ar[r]_{\sigma} \ar[d]^{\epsilon_{*}} & \operatorname{Pic} X \ar[d]^{\epsilon_{*}} \\
	\operatorname{Pic} X' \ar[r]_{\sigma'} & \operatorname{Pic} X'.
}
\]
\end{definition}

As in the autonomous case, it is possible to explicitly verify the minimality with specific types of curves.

\begin{lemma}\label{lem:minimalitycurves}
Let $X$ be a rational surface and $\sigma$ a Cremona isometry on $\operatorname{Pic} X$.
The equation $(X, \sigma)$ is minimal if and only if there are no mutually disjoint exceptional curves of the first kind $C_1, \ldots, C_N \subset X$ that are permuted by $\sigma$.
\end{lemma}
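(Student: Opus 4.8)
The plan is to mirror the classical argument for minimality of automorphisms (the one alluded to just after Definition~\ref{def:minimalautonomous}), adapting each step so that it only uses the Cremona isometry $\sigma$ and not an actual automorphism. The statement is an ``if and only if,'' so I would split it into two implications.

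For the easy direction, suppose there \emph{are} mutually disjoint exceptional curves of the first kind $C_1,\ldots,C_N\subset X$ permuted by $\sigma$. By Lemma~\ref{lem:crirred} the classes $\sigma[C_i]$ are again represented by irreducible curves, and since the $[C_i]$ form a $\sigma$-invariant set on which $\sigma$ acts as a permutation, the curves themselves are permuted; here ``disjoint'' is needed so that the $C_i$ can be contracted simultaneously. Let $\epsilon\colon X\to X'$ be the simultaneous blow-down of $C_1,\ldots,C_N$; then $\rho(X')=\rho(X)-N<\rho(X)$, and $X'$ is again a rational surface. It remains to produce $\sigma'$ on $\operatorname{Pic}X'$ with $\epsilon_*\sigma=\sigma'\epsilon_*$. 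Since $\epsilon_*\colon\operatorname{Pic}X\to\operatorname{Pic}X'$ is surjective with kernel $\bigoplus_i\mathbb{Z}[C_i]$, and this kernel is $\sigma$-invariant, $\sigma$ descends to a well-defined $\mathbb{Z}$-linear map $\sigma'$ on the quotient $\operatorname{Pic}X'$. One then checks that $\sigma'$ is a Cremona isometry: it preserves the intersection form because $\epsilon_*$ is compatible with intersection numbers on the orthogonal complement of the contracted classes (equivalently, use $\epsilon^*$ to identify $\operatorname{Pic}X'$ with $(\bigoplus_i\mathbb{Z}[C_i])^{\perp}$); it preserves $K_{X'}$ because $\epsilon_*K_X=K_{X'}$ and $\sigma K_X=K_X$; and it preserves effective classes because $\epsilon_*$ sends effective classes to effective classes and $\sigma$ preserves effectivity on $X$. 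Hence $(X,\sigma)$ is not minimal.

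For the converse, suppose $(X,\sigma)$ is not minimal, so there exist $X'$, a birational morphism $\epsilon\colon X\to X'$, and a Cremona isometry $\sigma'$ on $\operatorname{Pic}X'$ with $\rho(X)>\rho(X')$ and $\epsilon_*\sigma=\sigma'\epsilon_*$. Factor $\epsilon$ as a composition of blow-downs of exceptional curves of the first kind; let $C_1,\ldots,C_N$ be the irreducible curves on $X$ contracted by $\epsilon$ (their total transforms in $\operatorname{Pic}X$ span $\ker\epsilon_*$), and note $N=\rho(X)-\rho(X')\ge 1$. I would first show that $\ker\epsilon_*$ is $\sigma$-invariant: if $F$ is contracted by $\epsilon$ then $\epsilon_*F=0$, so $\epsilon_*(\sigma F)=\sigma'\epsilon_*F=0$, i.e. $\sigma F\in\ker\epsilon_*$. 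The sublattice $\ker\epsilon_*$ is generated by the classes of the $C_i$, which are exactly the classes of self-intersection $-1$ lying in it and being irreducible; since $\sigma$ preserves this set (by Lemma~\ref{lem:crirred}, applied to each $C_i$, whose negative self-intersection forces $\sigma[C_i]$ to again be an irreducible curve, necessarily one of self-intersection $-1$ inside $\ker\epsilon_*$), $\sigma$ permutes $\{C_1,\ldots,C_N\}$. Finally, disjointness of the $C_i$ is automatic: they are precisely the curves contracted by a single birational morphism $\epsilon\colon X\to X'$, and a standard fact about birational morphisms of smooth surfaces is that such a configuration, after reordering, consists of mutually disjoint exceptional curves of the first kind on the intermediate surfaces — but since the full contraction is done in one go we may take them mutually disjoint on $X$ itself. (If one prefers to be careful, one contracts them one at a time and tracks the images; $\sigma$ still permutes the resulting exceptional curves at each stage.)

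\textbf{Main obstacle.} The routine parts are the descent of $\sigma$ through $\epsilon_*$ and the verification that $\sigma'$ (resp.\ the restriction of $\sigma$ to $\ker\epsilon_*$) satisfies the three Cremona-isometry axioms. The one genuinely delicate point is the passage ``$\sigma$ permutes the classes $[C_i]$'' $\Rightarrow$ ``$\sigma$ permutes the curves $C_i$ \emph{and they are mutually disjoint}'': one must be sure that the contracted locus of a birational morphism is a $\sigma$-stable configuration of disjoint $(-1)$-curves, and that $\sigma$ acting by a permutation on their classes (via Lemma~\ref{lem:crirred}) is enough to conclude it acts by a permutation on the curves and that this permutation preserves the disjointness needed to perform a simultaneous contraction. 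I expect this bookkeeping — essentially, ``a $\sigma$-invariant contraction exists iff a $\sigma$-permuted disjoint family of $(-1)$-curves exists'' — to be the heart of the argument, and everything else to be formal.
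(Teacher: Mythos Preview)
Your first direction is fine and matches the paper's argument; the paper packages the verification that $\sigma'=\epsilon_*\sigma\epsilon^*$ is a Cremona isometry into a separate Lemma~\ref{lem:minimalcremona}, but the content is the same as your direct check.

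The second direction has a genuine gap. You take $C_1,\ldots,C_N$ to be \emph{all} irreducible curves contracted by $\epsilon$, and then assert both that they are $(-1)$-curves and that they are mutually disjoint on $X$. Neither is true in general: if $\epsilon$ factors through the blow-up of a point followed by the blow-up of an infinitely near point, the contracted locus on $X$ is a $(-2)$-curve meeting a $(-1)$-curve. So the ``standard fact'' you invoke is false as stated (it only says the curves become disjoint $(-1)$-curves successively on the \emph{intermediate} surfaces, not on $X$), and your parenthetical fallback does not repair it, since after contracting one $(-1)$-curve the induced map on Picard groups need not come from a Cremona isometry compatible with the remaining data.

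The paper sidesteps this entirely by taking a \emph{single} exceptional curve of the first kind $C$ contracted by $\epsilon$ and looking only at its $\sigma$-orbit $\{C,\sigma C,\ldots,\sigma^{N-1}C\}$. Each $\sigma^\ell C$ is again an irreducible $(-1)$-curve (Lemma~\ref{lem:crirred} plus preservation of the form and of $K_X$), and each lies in $\ker\epsilon_*$ because $\epsilon_*\sigma^\ell[C]=\sigma'^\ell\epsilon_*[C]=0$; finiteness of the contracted locus forces the orbit to close up. Disjointness then follows from the fact that $\ker\epsilon_*$ is negative definite: for distinct $(-1)$-curves $C,C'$ in it, $(C+C')^2<0$ gives $C\cdot C'\le 0$, hence $C\cdot C'=0$. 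This is exactly the ``delicate point'' you flagged, and the resolution is to restrict to a single $\sigma$-orbit rather than to the whole contracted configuration.
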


\begin{lemma}\label{lem:minimalcremona}
Let $X, X'$ be rational surfaces and $\epsilon \colon X \to X'$ a birational morphism.
If a Cremona isometry $\sigma$ on $\operatorname{Pic} X$ preserves the sublattice $\epsilon^{*}(\operatorname{Pic} X') \subset \operatorname{Pic} X$, then $\epsilon_{*} \sigma \epsilon^{*}$ is also a Cremona isometry on $\operatorname{Pic} X'$.
\end{lemma}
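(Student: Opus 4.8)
The plan is to verify the three defining conditions of a Cremona isometry (Definition~\ref{def:cremona}) for the map $\tau := \epsilon_{*} \sigma \epsilon^{*}$ on $\operatorname{Pic} X'$, using the standard properties of a birational morphism recalled in the appendix. First I would note that $\epsilon^{*} \colon \operatorname{Pic} X' \to \operatorname{Pic} X$ is an injective lattice homomorphism preserving the intersection form and that $\epsilon_{*} \epsilon^{*} = \operatorname{id}_{\operatorname{Pic} X'}$, while $\epsilon^{*} \epsilon_{*}$ is the orthogonal projection of $\operatorname{Pic} X$ onto the sublattice $L := \epsilon^{*}(\operatorname{Pic} X')$. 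The hypothesis that $\sigma$ preserves $L$ (together with the fact that a Cremona isometry is invertible and preserves the intersection form, so it also preserves $L^{\perp}$ and hence $\sigma|_L$ is again invertible) will be what makes everything go through cleanly; in particular $\tau$ is invertible with inverse $\epsilon_{*} \sigma^{-1} \epsilon^{*}$.

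For the first condition, for $F_1, F_2 \in \operatorname{Pic} X'$ I would write $\tau F_i = \epsilon_{*} \sigma \epsilon^{*} F_i$ and use that $\sigma \epsilon^{*} F_i \in L$, so that $\epsilon^{*} \epsilon_{*} (\sigma \epsilon^{*} F_i) = \sigma \epsilon^{*} F_i$; then
\[
	(\tau F_1) \cdot (\tau F_2) = (\epsilon^{*} \tau F_1) \cdot (\epsilon^{*} \tau F_2) = (\sigma \epsilon^{*} F_1) \cdot (\sigma \epsilon^{*} F_2) = (\epsilon^{*} F_1) \cdot (\epsilon^{*} F_2) = F_1 \cdot F_2,
\]
where the outer equalities use that $\epsilon^{*}$ is an isometric embedding and the middle one uses that $\sigma$ preserves the intersection form. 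For the second condition, I would use the standard relation $K_X = \epsilon^{*} K_{X'} + E$ where $E$ is the (effective, exceptional) difference, note that $\epsilon^{*} K_{X'} \in L$ is fixed by $\sigma$ because $\sigma K_X = K_X$ and $\sigma$ preserves both $L$ and $L^{\perp}$ (and the decomposition of $K_X$ along $L \oplus L^{\perp}$ is forced), and conclude $\tau K_{X'} = \epsilon_{*} \sigma \epsilon^{*} K_{X'} = \epsilon_{*} \epsilon^{*} K_{X'} = K_{X'}$.

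The third condition, preservation of effective classes, is the step I expect to require the most care. The direction I would argue is: if $F \in \operatorname{Pic} X'$ is effective, then $\epsilon^{*} F$ is effective on $X$ (pullback of an effective divisor along a morphism), hence $\sigma \epsilon^{*} F$ is effective since $\sigma$ is a Cremona isometry, hence its pushforward $\tau F = \epsilon_{*} (\sigma \epsilon^{*} F)$ is effective on $X'$ (pushforward of an effective divisor). The same argument applied to $\sigma^{-1}$ gives effectiveness of $\tau^{-1} F$. The only subtlety is making sure the bookkeeping between total transforms and strict transforms is handled correctly so that ``$\epsilon^{*}$ of effective is effective'' and ``$\epsilon_{*}$ of effective is effective'' are applied to the right objects; since $\epsilon$ is a composition of blow-downs these are both standard. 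Assembling the three verified conditions gives that $\tau = \epsilon_{*} \sigma \epsilon^{*}$ is a Cremona isometry on $\operatorname{Pic} X'$, which is the claim.
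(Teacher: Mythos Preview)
Your proposal is correct and follows essentially the same route as the paper's proof: both verify the three axioms of a Cremona isometry using that $\epsilon^{*}$ is an isometric embedding with $\epsilon_{*}\epsilon^{*}=\operatorname{id}$ and that $\sigma\epsilon^{*}F$ stays in $L=\epsilon^{*}(\operatorname{Pic}X')$, and both dispose of effectiveness by noting that $\epsilon^{*}$, $\sigma$, and $\epsilon_{*}$ each preserve effective classes. The only point of difference is the canonical-class step: the paper argues that $\sigma$ permutes the exceptional classes $E^{(i)}$ (using that they are the effective norm $-1$ vectors in $L^{\perp}$) and then computes $\epsilon_{*}\sigma(K_X-\sum E^{(i)})$ directly, whereas you use the orthogonal decomposition $K_X=\epsilon^{*}K_{X'}+E$ along $L\oplus L^{\perp}$ and conclude $\sigma(\epsilon^{*}K_{X'})=\epsilon^{*}K_{X'}$ from $\sigma K_X=K_X$; your version is marginally cleaner since it avoids the permutation claim, but the two are equivalent in substance.
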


\begin{proof}
Let $\epsilon = \epsilon^{(1)} \circ \cdots \circ \epsilon^{(L)}$ be a decomposition into blow-ups and let $E^{(i)}$ be the total transform of the class of the exceptional curve of $\epsilon^{(i)}$ for $i = 1, \ldots, L$.

Let $F, F' \in \operatorname{Pic} X'$.
Using $\sigma \epsilon^{*} F, \sigma \epsilon^{*} F' \in \epsilon^{*}(\operatorname{Pic} X')$, we have
\begin{align*}
	(\epsilon_{*} \sigma \epsilon^{*} F) \cdot (\epsilon_{*} \sigma \epsilon^{*} F')
	&= (\sigma \epsilon^{*} F) \cdot (\sigma \epsilon^{*} F') \\
	&= F \cdot F'.
\end{align*}

Since $K_X = \epsilon^{*}K_{X'} + E^{(1)} + \cdots E^{(L)}$, $\epsilon_{*}E^{(i)} = 0$ and since the $E^{(i)}$ are permuted by $\sigma$ we have that
\begin{align*}
	\epsilon_{*} \sigma \epsilon^{*} K_{X'}
	&= \epsilon_{*} \sigma (K_X - E^{(1)} - \cdots E^{(L)}) \\
	&= \epsilon_{*} (K_X - E^{(1)} - \cdots E^{(L)}) \\
	&= K_{X'}.
\end{align*}

The third condition in Definition~\ref{def:cremona} is trivial since $\epsilon^{*}, \sigma$ and $\epsilon_{*}$ all preserve the effective class.
\end{proof}

\begin{proof}[Proof of Lemma~\ref{lem:minimalitycurves}]
First, let $C_1, \ldots, C_N$ be irreducible curves of the first kind that are permuted by $\sigma$.
It follows from Castelnuovo's contraction theorem that there exist a rational surface $X'$ and a birational morphism $\epsilon \colon X \to X'$ such that $\epsilon$ contracts $C_1, \ldots, C_N$ and is an isomorphism outside $C_1 \cup \cdots \cup C_N$.
Let $\sigma' = \epsilon_{*} \sigma \epsilon^{*}$.
By Lemma~\ref{lem:minimalcremona}, $\sigma'$ is a Cremona isometry on $\operatorname{Pic} X'$ and thus we obtain an equation $(X', \sigma')$.
It is clear, by construction, that $\epsilon, X', \sigma'$ satisfy the conditions in Definition~\ref{def:minimalna}.

Next we show the converse.
Let $\epsilon, X', \sigma'$ satisfy the conditions in Definition~\ref{def:minimalna}:
\[
\xymatrix{
	\operatorname{Pic} X \ar[r]_{\sigma} \ar[d]^{\epsilon_{*}} & \operatorname{Pic} X \ar[d]^{\epsilon_{*}} \\
	\operatorname{Pic} X' \ar[r]_{\sigma'} & \operatorname{Pic} X'
}
\]
and take an exceptional curve of the first kind $C$ that is contracted by $\epsilon$.
Since
\[
	\epsilon_{*} \sigma^{\ell}[C] = \sigma'^{\ell} \epsilon_{*}[C] = 0,
\]
$\sigma^{\ell}C$ is contracted by $\epsilon$ for all $\ell$.
However, $\epsilon$ contracts only a finite number of curves.
Thus, there exists $N > 0$ such that $\sigma^N C = C$.
Hence $\sigma$ acts as a permutation on $\{ C, \sigma C, \ldots, \sigma^{N-1}C \}$.
Since these curves are exceptional curves of the first kind and are contracted by $\sigma$, they are mutually disjoint.
\end{proof}

As in the autonomous case, one must first verify if there are such contractible curves.
If so, then we obtain an equation $(X', \sigma')$ by contracting these curves.
It is clear that the degree growth of $(X, \sigma)$ is the same as that of $(X', \sigma')$.
Replacing $(X, \sigma)$ with $(X', \sigma')$ and repeating this procedure, we finally obtain a surface on which the equation is minimal.

As shown in the following example, a minimization is not unique in general.

\begin{example}\label{ex:twominimalization}
Let $X$ be the surface obtained by blowing up $\mathbb{P}^1 \times \mathbb{P}^1$ at $(\infty, \infty)$, and let $\varphi(x, y) = (y, x)$.
It is clear that $\varphi$ is an automorphism on $X$.

$X$ has three exceptional curves of the first kind: $C, \{ x = \infty \}$ and $\{ y = \infty \}$ (Figure~\ref{fig:twominimalization}).
This mapping has two minimizations.

The first possibility is $\mathbb{P}^1 \times \mathbb{P}^1$.
Since $C$ is fixed by $\varphi$, we can minimize $\varphi$ from $X$ to $\mathbb{P}^1 \times \mathbb{P}^1$, and it is trivial that $\varphi$ is an automorphism on $\mathbb{P}^1 \times \mathbb{P}^1$.

The second possibility is $\mathbb{P}^2$.
Since two curves $\{ y = \infty \}$ and $\{ x = \infty \}$ are permuted by $\varphi$, we can minimize $\varphi$ from $X$ to $\mathbb{P}^2$ by contracting these curves.
\begin{figure}
\begin{picture}(320, 280)

	{\thicklines
	\put(10, 30){\line(1, 0){100}}
	\put(30, 10){\line(0, 1){100}}
	\put(10, 90){\line(1, 0){100}}
	\put(90, 10){\line(0, 1){100}}
	}
	
	\put(90, 90){\circle*{5}}

	\put(90, 130){$\swarrow$}

	{\thicklines
	\put(110, 180){\line(1, 0){100}}
	\put(130, 160){\line(0, 1){100}}
	\put(110, 240){\line(1, 0){50}}
	\put(190, 160){\line(0, 1){50}}
	
	\qbezier(160, 240)(180, 240)(190, 260)
	\qbezier(190, 210)(190, 230)(210, 240)
	
	}
	
	\put(160, 260){\line(1, -1){50}}

	\put(60, 237){$\{ y = \infty \}$}
	\put(215, 240){$\{ x = \infty \}$}
	\put(215, 205){$C$}

	{\thicklines
	\put(210, 30){\line(1, 0){100}}
	\put(230, 10){\line(0, 1){100}}
	\put(210, 110){\line(1, -1){100}}
	}
	
	\put(230, 90){\circle*{5}}
	\put(290, 30){\circle*{5}}
	
	\put(210, 130){$\searrow$}

\end{picture}
\caption{
The mapping in Example~\ref{ex:twominimalization} permutes two axes $x$ and $y$.
If we consider this mapping on the upper surface, it has two minimizations.
}
\label{fig:twominimalization}
\end{figure}
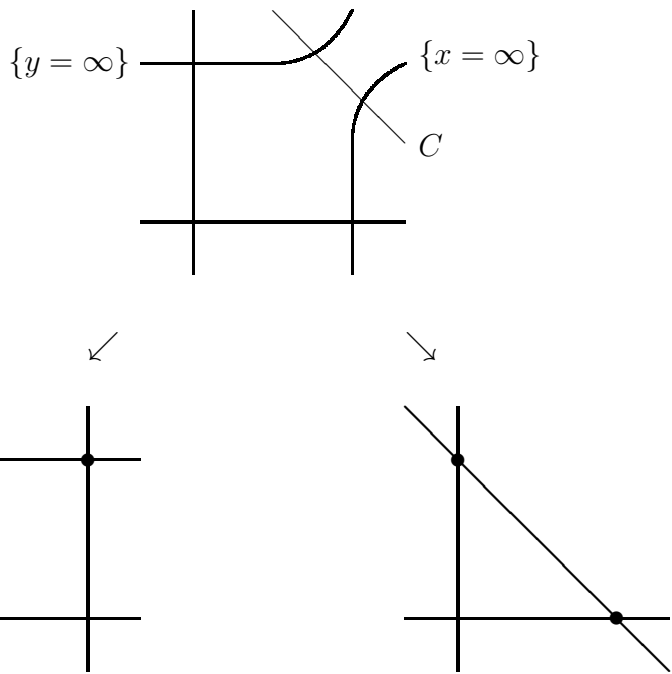
\end{example}

We will show in Proposition~\ref{prop:unique} (for the integrable case) and Proposition~\ref{prop:expunique} (in the nonintegrable case) that if the degree growth is unbounded, i.e.\,$\sigma$ is of infinite order, then the minimization is unique.


\subsection{Integrable case}\label{subsec:main}

In this subsection, we consider a minimization in the case of integrable equations.

In the autonomous case, a minimal space of initial conditions is always an elliptic surface and the equation preserves the elliptic fibration \cite{rationalgsurfaces}.
Thus, the theory of rational elliptic surfaces is relevant to the problem of minimization.
In the nonautonomous case, however, a space of initial conditions does not have an elliptic fibration in general \cite{sakai}.
Therefore, we will have to find the contractible curves in Lemma~\ref{lem:minimalitycurves} in a different way.

The following is our main theorem in this paper:

\begin{theorem}\label{thm:main}
Consider an equation $(X, \sigma)$ and assume that its degree growth is quadratic.
Then, this equation can be minimized to a generalized Halphen surface.

In particular, if a mapping of the plane with unbounded degree growth and zero algebraic entropy has a space of initial conditions, then it must be one of the discrete Painlev\'{e} equations.
\end{theorem}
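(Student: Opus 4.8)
The plan is to minimize the equation as far as possible and then to identify the minimal model with a generalized Halphen surface. Since the degree growth is quadratic, $\sigma$ has infinite order (a finite power equal to the identity would force bounded growth, by Corollary~\ref{cor:entropyeigenvalue}). Whenever the current equation is not minimal, Lemma~\ref{lem:minimalitycurves} furnishes a nonempty $\sigma$-stable set of mutually disjoint exceptional curves of the first kind; contracting it (Lemma~\ref{lem:minimalcremona}) strictly lowers the Picard number, so after finitely many steps we reach a minimal equation $(X_0,\sigma_0)$, with $\sigma_0$ a Cremona isometry, with the same (quadratic) degree growth --- in particular $\sigma_0$ still has infinite order --- and defining the same equation as $(X,\sigma)$, by the discussion of \textsection\ref{sec:minimalization}. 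By Proposition~\ref{prop:picardgrowthna} we already have $\rho(X_0)\ge 10$, so it remains to prove $\rho(X_0)=10$ and that $X_0$ is a generalized Halphen surface: the first assertion then follows from Definition~\ref{def:discretepainleve}, and the ``in particular'' statement follows because, by Proposition~\ref{prop:jordan}, an equation with unbounded degree growth and zero algebraic entropy has quadratic growth (and, by Proposition~\ref{prop:corresp}, may be presented as such an $(X,\sigma)$).

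The key object is the class $\delta\in\operatorname{Pic}X_0$ spanning the isotropic line fixed by $\sigma_0$. By Proposition~\ref{prop:jordan} (quadratic case) $\sigma_0$ has a single Jordan block of size $3$ with eigenvalue $1$, and its eigenvector is isotropic; let $\delta$ be the primitive integral generator of the line it spans, so $\sigma_0\delta=\delta$, $(\delta,\delta)=0$, $\delta\ne 0$. I would establish two further properties. First, $\delta$ is nef: the nef cone of $X_0$ is closed and $\sigma_0$-invariant (Lemma~\ref{lem:cremonacone}), and the Jordan structure makes $\frac{1}{n^2}\sigma_0^n(H)$ converge to a positive multiple of $\delta$ for any ample $H$, so $\delta$ lies in the nef cone. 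Second, $\delta\cdot K_{X_0}=0$: otherwise $\delta$ and $K_{X_0}$ span a nondegenerate plane of signature $(1,1)$ (its Gram determinant equals $-(\delta\cdot K_{X_0})^2$) fixed pointwise by $\sigma_0$, so $\sigma_0$ acts on its negative-definite orthogonal complement preserving the full-rank lattice $\operatorname{Pic}X_0\cap\langle\delta,K_{X_0}\rangle^{\perp}$, hence with finite order --- a contradiction.

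Now I would show $\rho(X_0)=10$. Suppose not, so $K_{X_0}^2=10-\rho(X_0)<0$, and set $E_0:=K_{X_0}+\delta$. Then $E_0\cdot\delta=0$, $E_0\cdot K_{X_0}=K_{X_0}^2<0$, and Riemann--Roch gives $\chi(\mathcal O_{X_0}(E_0))=1$ while $h^2(E_0)=h^0(\mathcal O_{X_0}(-\delta))=0$ (a nonzero nef class on a rational surface is not anti-effective); hence $E_0$ is effective, $E_0=\sum_i a_iC_i$ with $C_i$ irreducible and $a_i>0$. Since $\delta$ is nef and $E_0\cdot\delta=0$, every $C_i\cdot\delta=0$; since $E_0\cdot K_{X_0}<0$, some $C_j$ has $C_j\cdot K_{X_0}<0$, so $C_j\notin\mathbb Q\delta$, so by the Hodge index theorem $C_j^2<0$, and then adjunction with $C_j\cdot K_{X_0}\le -1$ forces $C_j^2=-1$ and $C_j\cdot K_{X_0}=-1$: $C_j$ is an exceptional curve of the first kind orthogonal to $\delta$. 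Any two such curves $C,C'$ are disjoint --- if $C\cdot C'\ge 1$ then $(C+C')^2\ge 0$, while $(C+C')\cdot\delta=0$ forces $(C+C')^2\le 0$, hence $C+C'\in\mathbb Q\delta$ and $(C+C')\cdot K_{X_0}=0$, contradicting $(C+C')\cdot K_{X_0}=-2$ --- and there are only finitely many of them, since (using this disjointness) $C\mapsto[C]\bmod\langle\delta\rangle$ is injective on this set and takes values among the vectors of self-intersection $-1$ of the negative-definite lattice $\delta^{\perp}/\langle\delta\rangle$. By Lemma~\ref{lem:crirred} and $\sigma_0\delta=\delta$, $\sigma_0$ permutes this finite set, so the $\sigma_0$-orbit of $C_j$ is a nonempty finite family of mutually disjoint exceptional curves of the first kind permuted by $\sigma_0$, contradicting minimality (Lemma~\ref{lem:minimalitycurves}). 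Hence $\rho(X_0)=10$.

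With $\rho(X_0)=10$ we have $K_{X_0}^2=0$, so Riemann--Roch gives $\chi(\mathcal O_{X_0}(-K_{X_0}))=1$ and $h^2(-K_{X_0})=h^0(\mathcal O_{X_0}(2K_{X_0}))=0$, whence $-K_{X_0}$ is effective. Moreover $-K_{X_0}\in\delta^{\perp}$ and $(-K_{X_0})^2=0$, so by Lemma~\ref{lem:lem1} the class $-K_{X_0}$ lies in the radical $\langle\delta\rangle$ of $(-,-)\big|_{\delta^{\perp}}$, i.e.\ $-K_{X_0}=c\,\delta$ for some rational $c$, with $c>0$ since $c\le 0$ would make $K_{X_0}$ a non-positive multiple of the nef class $\delta$, hence nef, which is impossible on a rational surface. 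Writing $-K_{X_0}=\sum_i m_iD_i$ with $D_i$ irreducible, nefness of $\delta$ gives $D_i\cdot(-K_{X_0})=c\,(D_i\cdot\delta)\ge 0$, while $\sum_i m_i\bigl(D_i\cdot(-K_{X_0})\bigr)=(-K_{X_0})^2=0$; hence $D_i\cdot(-K_{X_0})=0$ for all $i$. Thus $-K_{X_0}$ is effective with every component orthogonal to it, so $X_0$ is a generalized Halphen surface, and since $\sigma_0$ is a Cremona isometry of infinite order on $X_0$, the equation $(X_0,\sigma_0)$ --- hence the original equation --- is a discrete Painlev\'e equation by Definition~\ref{def:discretepainleve}. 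I expect the main work to lie in setting up the minimization correctly (termination and preservation of the equation and of the Cremona-isometry structure) and in the properties of $\delta$, especially deriving $\delta\cdot K_{X_0}=0$ and nefness from the infinite order of $\sigma_0$; the passage to a $(-1)$-curve orthogonal to $\delta$ and the final Halphen computation are comparatively short.
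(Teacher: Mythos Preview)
Your proof is correct and follows the paper's approach: minimize, show $\rho=10$ by producing $\sigma$-permuted disjoint $(-1)$-curves in $\delta^\perp$ whenever $\rho>10$ (via effectiveness of $\delta+K_X$), then identify $-K_X$ with a positive multiple of $\delta$ to conclude that the minimal surface is generalized Halphen. The paper's organization differs only in packaging (separate Lemmas~\ref{lem:v1nef}--\ref{lem:keylemma}) and in a couple of shortcuts---for instance it obtains $\delta\cdot K_X=0$ in one line from $(\sigma v_2,\sigma K_X)=(v_2,K_X)$ with $\sigma v_2=v_2+\delta$, rather than via your signature/finite-order argument, and it gets finiteness of the relevant $(-1)$-curves from the fact that they are components of $\delta+K_X$ rather than from the negative-definite quotient lattice $\delta^\perp/\langle\delta\rangle$.
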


Note that in this paper, ``discrete Painlev\'{e} equation'' should be understood in Sakai's sense, as defined in Definition~\ref{def:discretepainleve}.

\begin{lemma}\label{lem:nefhalphen}
Let $X$ be a rational surface with $\rho(X) = 10$.
Then $X$ is a generalized Halphen surface if and only if $-K_X$ is nef.
\end{lemma}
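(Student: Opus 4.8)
The plan is to exploit the single numerical fact that for a rational surface with $\rho(X) = 10$ one has $K_X^2 = 10 - \rho(X) = 0$, so that $-K_X$ is an isotropic class. The whole lemma then reduces to the elementary observation that a nef class of self-intersection zero meets every irreducible component of any of its effective representatives with intersection number zero.

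First I would treat the implication that a generalized Halphen surface has nef anticanonical class. Fix a member $\sum_i m_i D_i \in |- K_X|$ with the $D_i$ distinct irreducible curves and $m_i > 0$, and let $C \subset X$ be an arbitrary irreducible curve. If $C$ coincides with some $D_j$, then $(- K_X) \cdot C = 0$ by the defining orthogonality condition of a generalized Halphen surface. Otherwise $C$ is distinct from every $D_i$, so $(- K_X) \cdot C = \sum_i m_i (D_i \cdot C) \ge 0$ because distinct irreducible curves on a surface meet non-negatively. Hence $(- K_X) \cdot C \ge 0$ for every irreducible curve $C$, i.e.\ $-K_X$ is nef.

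Conversely, suppose $-K_X$ is nef. I would first show $-K_X$ is effective: by Riemann--Roch on a rational surface, $\chi(\mathcal{O}_X(-K_X)) = \chi(\mathcal{O}_X) + K_X^2 = 1$, while Serre duality gives $h^2(\mathcal{O}_X(-K_X)) = h^0(\mathcal{O}_X(2 K_X)) = 0$ since a rational surface has vanishing plurigenera; therefore $h^0(\mathcal{O}_X(-K_X)) \ge 1$. Now let $\sum_i m_i D_i \in |- K_X|$ be any member, with $m_i > 0$ and $D_i$ irreducible. Nefness gives $(- K_X) \cdot D_i \ge 0$ for every $i$, whereas $\sum_i m_i \bigl( (- K_X) \cdot D_i \bigr) = (- K_X)^2 = K_X^2 = 0$; since every summand is non-negative and every $m_i$ is positive, each $(- K_X) \cdot D_i$ vanishes. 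Thus $-K_X$ is effective and all components of every anticanonical member are orthogonal to $-K_X$, i.e.\ $X$ is a generalized Halphen surface.

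The argument is short, and the only external input is the effectiveness of $-K_X$ for a rational surface with $K_X^2 \ge 0$, which is a standard Riemann--Roch computation that could be quoted from Appendix~\ref{sec:surf}. The one point that needs a little care is that the orthogonality condition must be checked for \emph{every} member of $|- K_X|$, not just a single one, when $\dim |- K_X| > 0$; but the computation in the previous paragraph applies verbatim to each member, so this causes no real difficulty, and I do not expect any serious obstacle.
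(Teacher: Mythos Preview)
Your argument is correct and matches the paper's proof essentially line for line: both directions split on whether the test curve is a component of an anticanonical divisor, and the converse uses $(-K_X)^2=0$ to force each nonnegative summand $m_i\,D_i\cdot(-K_X)$ to vanish. The only cosmetic difference is that you spell out the Riemann--Roch/Serre-duality computation for effectiveness of $-K_X$, whereas the paper simply cites Sakai's Proposition~2 for this fact.
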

\begin{proof}
Suppose $X$ is a generalized Halphen surface.
Let $C \subset X$ be an irreducible curve.
If $C$ is a component of $-K_X$, then $- K_X \cdot C = 0$, by definition.
On the other hand, if $C$ is not a component of $-K_X$, then $- K_X \cdot C \ge 0$ since $-K_X$ is effective.
In both cases we have $-K_X \cdot C \ge 0$ and thus $-K_X$ is nef.

Let us prove the converse.
Suppose $-K_X$ is nef.
Since $\rho(X) = 10$, we have $(-K_X)^2 = 0$ and $-K_X$ is effective (\cite{sakai}, Proposition~2).
Thus it is sufficient to show that every component of $-K_X$ is orthogonal to $-K_X$.
Let $\sum_i a_i D_i \in |- K_X|$.
Since $-K_X$ is nef, we have $a_i D_i \cdot (- K_X) \ge 0$.
Summing, we obtain
\[
	\sum_i a_i D_i \cdot (- K_X) \ge 0.
\]
Since the left hand side is equal to $(-K_X)^2$, $D_i \cdot (- K_X)$ must be $0$ for all $i$.
Hence $X$ is a generalized Halphen surface.
\end{proof}

\begin{lemma}\label{lem:v1nef}
Let $X$ be a basic rational surface and let $\sigma$ be a Cremona isometry on $\operatorname{Pic} X$ with quadratic growth.
Let $v_1, v_2, v_3 \in \operatorname{Pic}_{\mathbb{Q}} X \setminus \{ 0 \}$ satisfy
\begin{align*}
	\sigma v_1 &= v_1, \\
	\sigma v_2 &= v_2 + v_1, \\
	\sigma v_3 &= v_3 + v_2.
\end{align*}
Then, we have
\begin{itemize}
\item
$v_1$ is isotropic,

\item
either $v_1$ or $- v_1$ is nef,

\item
$v_1 \cdot K_X = 0$.
\end{itemize}
\end{lemma}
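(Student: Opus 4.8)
The plan is to work inside $\operatorname{Pic}_{\mathbb{R}} X$ equipped with the intersection form, which has signature $(1,\rho(X)-1)$. Since $\sigma$ has quadratic growth, Proposition~\ref{prop:jordan} (together with Lemma~\ref{lem:symmetricjordan}) tells us that the Jordan form of $\sigma$ is of type~(\ref{eq:jordanquadratic}) with $\nu=1$, so there is a length-$3$ Jordan chain $(v_1,v_2,v_3)$ (unique up to the obvious rescalings/shifts) satisfying exactly the relations in the statement, and Lemma~\ref{lem:symmetricjordan}(2) already asserts that $v_1$ is isotropic; that disposes of the first bullet. For the third bullet, note $\sigma$ preserves $K_X$ and the intersection form, so from $\sigma v_2 = v_2 + v_1$ we get $K_X\cdot v_2 = K_X\cdot(\sigma v_2) = K_X\cdot v_2 + K_X\cdot v_1$, hence $K_X\cdot v_1 = 0$. (One can also phrase this as: $v_1$ spans the kernel of the form restricted to $v_1^{\perp}$ by Lemma~\ref{lem:lem1}, and $K_X\in v_1^{\perp}$ by the same eigenvalue argument, so $K_X$ lies in that codimension-one degenerate subspace — either way $v_1\cdot K_X=0$.)

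The substantive point is the middle bullet: that $v_1$ or $-v_1$ is nef. First I would normalize so that $v_1$ lies in the closure of the positive cone (the component of $\{v : v^2>0\}\cup\{0\}$ containing ample classes) — this is possible because $v_1$ is a nonzero isotropic vector and the two "nappes" of the light cone are interchanged by $v_1\mapsto -v_1$. With this normalization the claim is that $v_1$ is nef, i.e. $v_1\cdot C\ge 0$ for every irreducible curve $C\subset X$. The key is that $\sigma$ is a Cremona isometry, so by Lemma~\ref{lem:cremonacone} it preserves the nef cone, and in particular $\sigma$ and all its powers preserve the (closed, $\sigma$-invariant) convex cone generated by nef classes. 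I would use the explicit limit formula of Lemma~\ref{lem:symmetricjordan}(2): for the ample class $H$,
\[
	\lim_{n\to+\infty}\frac{1}{n^2}\,\sigma^n H = \frac{(H,v_1)}{2(v_3,v_1)}\,v_1 .
\]
Since $\sigma^n H$ is nef (ample, even) for every $n$, and the nef cone is closed, the limiting ray $\mathbb{R}_{\ge 0}\,v_1$ (after checking the scalar $(H,v_1)/2(v_3,v_1)$ is positive with our normalization, which follows because $(H,v_1)>0$ for $H$ ample and $v_1$ in the closed positive cone, and $(v_3,v_1)$ has a definite sign forced by the Jordan/signature structure) is contained in the nef cone, giving nefness of $v_1$.

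The main obstacle I expect is pinning down the sign of $(v_3,v_1)$ and hence making the normalization argument airtight: one must be careful that the Jordan chain $(v_1,v_2,v_3)$ can always be chosen in $\operatorname{Pic}_{\mathbb{Q}}X$ (it can, since $\sigma$ is defined over $\mathbb{Z}$ and the eigenvalue $1$ is rational), and that rescaling $v_1$ to land in the correct nappe is compatible with the chain relations — rescaling $v_1\mapsto c v_1$ forces $v_2\mapsto c v_2 + (\text{multiple of }v_1)$ and similarly for $v_3$, but $(v_3,v_1)$ scales by $c^2$ so only its sign matters, and that sign is determined by positivity of $\deg\varphi^n/n^2$ exactly as in the proof of Proposition~\ref{prop:jordan}. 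Once the sign is fixed, the rest is the soft closedness-of-the-nef-cone argument above. A cleaner alternative for the nef claim, which I would fall back on if the limit bookkeeping gets awkward: show directly that $v_1\cdot C\ge 0$ for each irreducible $C$ with $C^2<0$ (these have $\sigma$-finite orbits by Lemma~\ref{lem:crirred}, so $v_1\cdot C = v_1\cdot\sigma^k C$ for a periodic $k$, and averaging over the orbit together with $v_1$ being in the closed positive cone forces nonnegativity), while for $C$ with $C^2\ge 0$ one has $v_1\cdot C\ge 0$ automatically since both lie in the closed positive cone; combining these covers all irreducible curves.
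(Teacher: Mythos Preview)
Your main argument is essentially the paper's: take the limit $\lim_{n\to\infty} n^{-2}\sigma^n e^{(0)}$ (or $\sigma^n H$ for any nef class $H$ with $H^2>0$), observe it is a nonzero multiple of $v_1$, and conclude by closedness of the nef cone that this multiple of $v_1$ is nef. The paper sidesteps your sign bookkeeping entirely: since the statement only asserts that \emph{either} $v_1$ or $-v_1$ is nef, one simply writes $v_1 = a\cdot(\text{limit})$ for some $a\in\mathbb{Q}^\times$ and concludes $a^{-1}v_1$ is nef---no need to normalize $v_1$ into a particular nappe or to chase the sign of $(v_3,v_1)$.

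Your fallback argument, however, has a gap: Lemma~\ref{lem:crirred} does \emph{not} assert that an irreducible curve $C$ with $C^2<0$ has a finite $\sigma$-orbit; it only says that $\sigma[C]$ is again represented by a unique irreducible curve. There can be infinitely many such curves (indeed, on a basic rational surface with $\rho>10$ there are typically infinitely many $(-1)$-curves), so the ``averaging over the orbit'' step is not justified. Since the limit argument already works, you should drop the fallback.
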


\begin{proof}
That $v_1$ is isotropic follows immediately from Proposition~\ref{prop:jordan} .

Let $e = (e^{(0)}, \ldots, e^{(r)})$ be a geometric basis.
Then, by Proposition~\ref{prop:jordan} and Lemma~\ref{lem:symmetricjordan}, there exists $a \in \mathbb{Q}^{\times}$ such that
\[
	v_1 = a \lim_{n \to +\infty} \frac{1}{n^2} \sigma^n e^{(0)}.
\]
Since $e^{(0)}$ is nef and $\sigma$ preserves the nef cone (Lemma~\ref{lem:cremonacone}), $\frac{1}{n^2} \sigma^n e^{(0)}$ is nef for all $n$.
Therefore, Proposition~\ref{app:nefclosed} implies that $\frac{1}{a}v_1$ is nef.

Since
\begin{align*}
	v_2 \cdot K_X &= (\sigma v_2) \cdot (\sigma K_X) \\
	&= (v_2 + v_1) \cdot K_X \\
	&= v_2 \cdot K_X + v_1 \cdot K_X,
\end{align*}
we have $v_1 \cdot K_X = 0$.
\end{proof}

Note that while $v_2, v_3$ above are not unique, $v_1$ is unique up to scaling.
$v_1$ is determined by
\[
	\mathbb{Q}v_1 = \operatorname{Ker}(\sigma_{\mathbb{Q}} - \operatorname{id}) \cap \operatorname{Im}(\sigma_{\mathbb{Q}} - \operatorname{id})^2,
\]
where $\sigma_{\mathbb{Q}}$ is the $\mathbb{Q}$-extension of $\sigma$ to $\operatorname{Pic}_{\mathbb{Q}} X$.

\begin{definition}
Let us normalize $v_1$ so that
\begin{itemize}
\item
$v_1$ is nef,
\item
$v_1 \in \operatorname{Pic} X$,
\item
$v_1$ is primitive in $\operatorname{Pic} X$, i.e.\,if a rational number $a$ satisfies $a v_1 \in \operatorname{Pic} X$, then $a$ is an integer.
\end{itemize}
We shall call this $v_1$ the {\em normalized dominant eigenvector} of $\sigma$.
\end{definition}

\begin{lemma}\label{lem:pic10halphen}
Let $X$ be a rational surface of Picard number $10$.
If $X$ has a Cremona isometry which grows quadratically, then $X$ must be a generalized Halphen surface and $-K_X$ coincides with the normalized dominant eigenvector.
\end{lemma}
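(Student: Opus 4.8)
The plan is to pin the anticanonical class $-K_X$ down as a positive multiple of the dominant eigenvector of $\sigma$ and then to invoke Lemma~\ref{lem:nefhalphen}. Since $\sigma$ has quadratic growth, Proposition~\ref{prop:jordan} provides vectors $v_1, v_2, v_3 \in \operatorname{Pic}_{\mathbb{Q}} X \setminus \{0\}$ with $\sigma v_1 = v_1$, $\sigma v_2 = v_2 + v_1$ and $\sigma v_3 = v_3 + v_2$ (here $X$ is basic, since we are in the situation of Definition~\ref{def:sicsakai}). Lemma~\ref{lem:v1nef} then shows that $v_1$ is isotropic, that $v_1$ or $-v_1$ is nef --- I normalize the sign so that $v_1$ itself is nef --- and that $v_1 \cdot K_X = 0$.

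Next I would compare $v_1$ with $K_X$. Because $\rho(X) = 10$ we have $K_X^2 = 10 - \rho(X) = 0$, so $K_X$ is a nonzero isotropic class (as observed in the proof of Proposition~\ref{prop:picardgrowthna}). Hence $v_1$ and $K_X$ are two nonzero isotropic vectors of $\operatorname{Pic}_{\mathbb{R}} X$ --- whose intersection form has signature $(1, 9)$ --- with $v_1 \cdot K_X = 0$, so Lemma~\ref{lem:lem1} forces $K_X = c\, v_1$ for some $c \in \mathbb{Q}^{\times}$. To fix the sign of $c$, choose a geometric basis $e = (e^{(0)}, \ldots, e^{(9)})$, so that $K_X = -3e^{(0)} + e^{(1)} + \cdots + e^{(9)}$ and $K_X \cdot e^{(0)} = -3 < 0$; on the other hand $e^{(0)}$ is the total transform of a line, hence nef with $(e^{(0)})^2 = 1 > 0$, so $v_1 \cdot e^{(0)} \ge 0$, and equality would place the nonzero isotropic vector $v_1$ into the negative-definite lattice $(e^{(0)})^{\perp}$, which is impossible. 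Thus $v_1 \cdot e^{(0)} > 0$, so $c < 0$, and therefore $-K_X = -c\, v_1$ is a positive multiple of the nef class $v_1$ and is itself nef. By Lemma~\ref{lem:nefhalphen}, $X$ is a generalized Halphen surface.

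Finally, $-K_X$ is nef, lies in $\operatorname{Pic} X$, is primitive (the coefficient of $e^{(1)}$ in $3e^{(0)} - e^{(1)} - \cdots - e^{(9)}$ is $-1$), and is a positive rational multiple of $v_1$; since a nef, integral, primitive class proportional to $v_1$ is unique, $-K_X$ is by definition the normalized dominant eigenvector of $\sigma$. I expect the only delicate point to be the sign analysis ruling out $c > 0$ --- equivalently, showing that $v_1$ and $-K_X$ lie on the same nappe of the light cone --- which is why I route that step through the pairing with the nef, big class $e^{(0)}$ rather than trying to exclude directly the possibility that $K_X$ itself be nef.
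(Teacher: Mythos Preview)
Your proof is correct and follows essentially the same route as the paper: use Lemma~\ref{lem:v1nef} to see that $v_1$ is nef, isotropic, and orthogonal to $K_X$, then Lemma~\ref{lem:lem1} to conclude $v_1$ and $K_X$ are proportional, fix the sign, and finish with Lemma~\ref{lem:nefhalphen}. The only difference is the sign step: the paper simply observes that $K_X$ cannot be nef on a rational surface (so $v_1 = -K_X$, both being primitive), whereas you compute the pairing with $e^{(0)}$ --- your route works fine, but the paper's is shorter and is exactly the alternative you mention in your last sentence.
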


\begin{proof}
Let $\sigma$ be a Cremona isometry on $\operatorname{Pic} X$ that grows quadratically and let $v_1$ be the normalized dominant eigenvector of $\sigma$.
By Lemma~\ref{lem:v1nef}, $v_1$ is isotropic and $v_1 \cdot K_X = 0$.
However, $K_X$ is also isotropic since $\rho(X) = 10$.
Therefore, by Lemma~\ref{lem:lem1}, $v_1$ and $K_X$ are linearly dependent.
Since $v_1$ and $K_X$ are both primitive in $\operatorname{Pic} X$, we have $v_1 = \pm K_X$.
While $v_1$ is nef by Lemma~\ref{lem:v1nef}, $K_X$ cannot be nef since $X$ is rational.
Thus we have $v_1 = -K_X$ and Lemma~\ref{lem:nefhalphen} implies that $X$ is a generalized Halphen surface.
\end{proof}

The following lemma is the key to the proof of Theorem~\ref{thm:main}

\begin{lemma}\label{lem:keylemma}
Let $X$ be a rational surface with $\rho(X) > 10$ and let $\sigma$ be a Cremona isometry on $\operatorname{Pic} X$ with quadratic growth.
Then, $X$ is not minimal for the equation $(X, \sigma)$.
\end{lemma}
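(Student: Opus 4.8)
The plan is to prove $(X,\sigma)$ is not minimal by producing a non-empty, finite collection of mutually disjoint exceptional curves of the first kind permuted by $\sigma$, and then applying Lemma~\ref{lem:minimalitycurves}. The whole argument will revolve around the normalized dominant eigenvector $v_1$ of $\sigma$, which by Lemma~\ref{lem:v1nef} is nef, isotropic, $\sigma$-invariant and satisfies $v_1\cdot K_X=0$. Concretely, I would show that the set $\mathcal{E}_0$ of exceptional curves of the first kind $C\subset X$ with $v_1\cdot C=0$ is non-empty, is finite, consists of mutually disjoint curves, and is permuted by $\sigma$.

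The last three properties are lattice-theoretic and should be routine. Since $v_1$ is isotropic and $\operatorname{Pic}X$ has signature $(1,\rho(X)-1)$, Lemma~\ref{lem:lem1} (applied over $\mathbb{R}$, as in the proof of Proposition~\ref{prop:picardgrowthna}) shows $v_1^{\perp}$ is negative semi-definite with radical $\mathbb{Z}v_1$, so $L:=v_1^{\perp}/\mathbb{Z}v_1$ is negative definite and self-intersection descends to it. Each $C\in\mathcal{E}_0$ has image of norm $-1$ in $L$; a definite lattice has only finitely many such, and two curves with the same image differ by a multiple of $v_1$, which for distinct prime divisors $C\neq C'$ would force $C\cdot C'=-1$ — impossible; hence $\mathcal{E}_0$ is finite. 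For distinct $C,C'\in\mathcal{E}_0$ their images in $L$ are independent (a relation lifts to $aC+bC'=cv_1$, and pairing with $K_X$ and then with $C$ again yields the impossible $C\cdot C'=-1$), so the rank-$2$ sublattice they span, with Gram diagonal $-1$ and off-diagonal $C\cdot C'\ge0$, must be negative definite, forcing $C\cdot C'=0$. Finally $\sigma$ fixes $v_1$ and, by Lemma~\ref{lem:crirred}, permutes the exceptional curves of the first kind preserving intersection numbers, so $\sigma(\mathcal{E}_0)=\mathcal{E}_0$.

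The real content is $\mathcal{E}_0\neq\emptyset$. Riemann--Roch gives $\chi(\mathcal{O}_X(v_1))=1+\tfrac12(v_1^{2}-v_1\cdot K_X)=1$, hence $h^{0}(v_1)+h^{0}(K_X-v_1)\ge1$, so at least one of $v_1$, $K_X-v_1$ is effective. Suppose first $K_X-v_1$ is effective, say $K_X-v_1\sim D\ge0$. Then $D\cdot v_1=0$ (since $K_X\cdot v_1=v_1^{2}=0$), so every irreducible component $\Gamma$ of $D$ satisfies $\Gamma\cdot v_1=0$ (as in the proof of Lemma~\ref{lem:nefhalphen}), and any such $\Gamma$ with $\Gamma^{2}\ge0$ is numerically proportional to $v_1$, hence has $\Gamma\cdot K_X=0$. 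But $D\cdot K_X=K_X^{2}<0$ because $\rho(X)>10$, so some component $\Gamma$ has $\Gamma\cdot K_X<0$; this component then has $\Gamma^{2}<0$, and adjunction forces $\Gamma^{2}=\Gamma\cdot K_X=-1$, i.e. $\Gamma\in\mathcal{E}_0$. If instead $K_X-v_1$ is not effective, then $v_1$ is effective, $v_1\sim\sum_i m_i\Gamma_i$ with the $\Gamma_i$ irreducible, again $\Gamma_i\cdot v_1=0$ for all $i$; from $\sum_i m_i(\Gamma_i\cdot K_X)=v_1\cdot K_X=0$, either some $\Gamma_i\cdot K_X<0$ — and as above $\Gamma_i\in\mathcal{E}_0$ — or every $\Gamma_i\cdot K_X=0$, in which case each $\Gamma_i$ is a $(-2)$-curve or $v_1$ is itself an irreducible curve of arithmetic genus $1$. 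In that last case $v_1$ is supported on an effective divisor of arithmetic genus $1$ carrying the nef primitive isotropic class $v_1$, so $|mv_1|$ should define a genus-$1$ fibration $f\colon X\to\mathbb{P}^{1}$; since a relatively minimal rational surface carrying a genus-$1$ fibration has Picard number exactly $10$, the hypothesis $\rho(X)>10$ forces $f$ to be non-relatively-minimal, whence some fibre contains an exceptional curve of the first kind, which lies in $\mathcal{E}_0$.

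The step I expect to be the main obstacle is this last one: showing that, when $v_1$ is nef, primitive, isotropic and supported on a $(-2)$-curve configuration (or is itself an arithmetic-genus-$1$ curve), the linear system $|mv_1|$ actually moves and cuts out a genus-$1$ fibration. Because $v_1\cdot(-K_X)=0$, the usual base-point-free/Reider arguments do not apply directly, so I would either appeal to the classification of genus-$1$ pencils on rational surfaces (the $(-2)$-configuration being forced to be a Kodaira fibre) or use that $\sigma$ fixes $v_1$, hence permutes the finitely many components of $v_1$, to rule out the degenerate possibilities. Granting $\mathcal{E}_0\neq\emptyset$, the previous paragraphs show $\mathcal{E}_0$ is a non-empty finite set of mutually disjoint exceptional curves of the first kind permuted by $\sigma$, so Lemma~\ref{lem:minimalitycurves} gives that $X$ is not minimal for $(X,\sigma)$.
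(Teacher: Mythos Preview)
Your strategy matches the paper's --- produce a nonempty finite $\sigma$-stable family of mutually disjoint $(-1)$-curves orthogonal to $v_1$ and apply Lemma~\ref{lem:minimalitycurves} --- and your finiteness, disjointness and $\sigma$-stability arguments are correct. The problem is non-emptiness. First observe that your case ``$K_X-v_1$ effective'' is actually vacuous: since $v_1$ is nef, nonzero and isotropic while $(e^{(0)})^2>0$, one has $v_1\cdot e^{(0)}>0$ (else $v_1\in(e^{(0)})^\perp$, which is negative definite, forcing $v_1=0$), and hence $(K_X-v_1)\cdot e^{(0)}<0$. So you are always in the case ``$v_1$ effective'', and the subcase you yourself flag --- every irreducible component of a chosen representative of $v_1$ satisfying $\Gamma_i\cdot K_X=0$ --- cannot be excluded. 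There you rely on $|mv_1|$ cutting out a genus-one fibration, but Riemann--Roch gives only $h^0(mv_1)=1+h^1(mv_1)$ with no a priori control on $h^1$; the usual basepoint-free arguments fail since $v_1\cdot(-K_X)=0$, and neither the combinatorics of the $(-2)$-configuration nor the $\sigma$-action on its components obviously produces a $(-1)$-curve. This is a genuine gap.

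The paper avoids the difficulty with one idea: work with $v_1+K_X$ rather than $v_1$. The same Riemann--Roch/Serre-duality step (now with $h^2(v_1+K_X)=h^0(-v_1)=0$) shows $v_1+K_X$ is effective, and crucially $(v_1+K_X)^2=K_X^2<0$ because $\rho(X)>10$. Hence some irreducible component $C$ has $C\cdot(v_1+K_X)<0$. Since $v_1\cdot(v_1+K_X)=0$ and $v_1$ is nef, every component --- in particular $C$ --- satisfies $C\cdot v_1=0$; then $C^2<0$ by semi-negative definiteness of $v_1^\perp$ together with $C\notin\mathbb{Q}v_1$ (as $C\cdot K_X<0$). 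The genus formula now forces $g_a(C)=0$, $C^2=C\cdot K_X=-1$, so $C\in\mathcal{E}_0$. No case split on the support of $v_1$, and no fibration theory, are needed.
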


\begin{proof}
We will try to find mutually disjoint exceptional curves of the first kind that are permuted by $\sigma$ (Lemma~\ref{lem:minimalitycurves}).

\noindent
\underline{Step~1}

Let $v_1 \in \operatorname{Pic} X$ be the normalized dominant eigenvector of $\sigma$.
We first show that $v_1 + K_X$ is effective and nonzero.

By the Riemann-Roch inequality, we have
\[
	h^0(v_1 + K_X) + h^2(v_1 + K_X) \ge 1 + \frac{1}{2} (v_1 + K_X) \cdot v_1 = 1.
\]
Using Serre duality we have
\[
	h^2(v_1 + K_X) = h^0(-v_1) = 0.
\]
Hence, $h^0(v_1 + K_X) \ge 1$ and $v_1 + K_X$ is effective.
It immediately follows from $(v_1 + K_X)^2 = 10 - \rho(X) < 0$ that $v_1 + K_X \ne 0$.

\noindent
\underline{Step~2}

Let
\[
	\mathcal{C} = \{ C \subset X \colon \text{irreducible} \, | \, C \cdot (v_1 + K_X) < 0 \}.
\]
We show that $\mathcal{C}$ is a nonempty finite set.

By Step~1, we can express $v_1 + K_X$ as
\[
	v_1 + K_X = \left[ \sum^{\ell}_{i=1} a_i C_i \right],
\]
where the $C_i$ are irreducible and $a_i > 0$.
Since $(v_1 + K_X)^2 < 0$, at least one of $C_1, \ldots, C_{\ell}$ satisfy $C_i \cdot (v_1 + K_X) < 0$.
Thus $\mathcal{C}$ is not empty.

On the other hand, if an irreducible curve $C$ is different from $C_1, \ldots, C_{\ell}$, then it satisfies $C \cdot (v_1 + K_X) \ge 0$.
Hence $\mathcal{C}$ is finite.

\noindent
\underline{Step~3}

We show that if $C \in \mathcal{C}$, then
\[
	C^2 = -1, \quad
	C \cdot K_X = -1, \quad
	C \cdot v_1 = 0, \quad
	C \cong \mathbb{P}^1.
\]

By the genus formula, we have
\begin{align*}
	g_a(C) &= 1 + \frac{1}{2} C \cdot (C + K_X) \\
	&= 1 + \frac{1}{2} C^2 + \frac{1}{2} C \cdot (v_1 + K_X) - \frac{1}{2} C \cdot v_1.
\end{align*}
Since $g_a(C) \ge 0$, $C^2 < 0$, $C \cdot (v_1 + K_X) < 0$ and $C \cdot v_1 \ge 0$, the only possible case is
\[
	g_a(C) = 0, \quad
	C^2 = -1, \quad
	C \cdot (v_1 + K_X) = -1, \quad
	C \cdot v_1 = 0.
\]
It follows from Proposition~\ref{app:p1arith} that $C \cong \mathbb{P}^1$.

\noindent
\underline{Step~4}

Since $\sigma$ is a Cremona isometry, Lemma \ref{lem:crirred} implies that $\sigma$ acts on $\mathcal{C}$ as a permutation.

\noindent
\underline{Step~5}

Let $C, C' \in \mathcal{C}$ satisfy $C \ne C$.
We show that $C \cap C' = \emptyset$.

Let $m = C \cdot C'$.
Since $(C + C') \cdot v_1 = 0$, Lemma~\ref{lem:lem1} implies that
\[
	0 \ge (C + C')^2 = 2m - 2
\]
and therefore $m = 0$ or $m = 1$.
Assume that $m = 1$.
In this case, $v_1$ and $C + C'$ are orthogonal and both isotropic.
Thus, again by Lemma~\ref{lem:lem1}, there exists $a \in \mathbb{Q}^{\times}$ such that $[C + C'] = a v_1$.
Since $v_1$ and $[C + C']$ are both primitive and effective, we have $a = 1$.
On the other hand, since $C$ and $C'$ are two different components of $v_1 + K_X$, there exists an effective class $F$ such that $[C] + [C'] + F = v_1 + K_X$.
Thus we have $F = K_X$, which is a contradiction since $K_X$ cannot be effective when $X$ is rational.
Hence we have $C \cdot C' = 0$.
\end{proof}

\begin{proof}[Proof of Theorem~\ref{thm:main}]
Let $X$ be a rational surface and let $\sigma$ be a Cremona isometry on $\operatorname{Pic} X$ with quadratic growth.
We show that one can minimize $\sigma$ from $X$ to a generalized Halphen surface.

It follows from Proposition~\ref{prop:picardgrowthna} that $\rho(X) \ge 10$.
If $\rho(X) = 10$, then Lemma~\ref{lem:pic10halphen} implies that $X$ is a generalized Halphen surface, and thus $(X, \sigma)$ is a discrete Painlev\'{e} equation.

Consider the case $\rho(X) > 10$.
By Lemma~\ref{lem:keylemma}, the equation $(X, \sigma)$ is not minimal.
Let $\epsilon \colon X \to X'$ be a minimization and let $\sigma' = \epsilon_{*} \sigma \epsilon^{*}$.
The minimality of $(X', \sigma')$ implies that $\rho(X') \le 10$.
However, it follows from Proposition~\ref{prop:picardgrowthna} that $\rho(X') \ge 10$ since the degree grows quadratically.
Thus Lemma~\ref{lem:pic10halphen} implies that $X$ is a generalized Halphen surface and hence the equation $(X', \sigma')$ is a discrete Painlev\'{e} equation.
\end{proof}

Although the proofs of Lemma~\ref{lem:keylemma} and Theorem~\ref{thm:main} define a program to minimize $(X, \sigma)$, it could be a little difficult to describe the set $\mathcal{C}$ in Step~2 of the proof of Lemma~\ref{lem:keylemma} explicitly.
The following proposition tells us how to find a minimization only by linear algebra and, at the same time, shows the uniqueness of the minimization.

\begin{proposition}\label{prop:unique}
Let $X$ be a rational surface with $\rho(X) = r + 1 > 10$ and $\sigma$ a Cremona isometry on $\operatorname{Pic} X$ that grows quadratically.
Let $\epsilon \colon X \to X'$ be a minimization of $(X, \sigma)$.
Decompose $\epsilon$ into a composition of blow-ups
\[
	\epsilon = \epsilon^{(1)} \circ \cdots \circ \epsilon^{(r-9)}
\]
and let $E^{(i)} \in \operatorname{Pic} X$ be the total transform of the exceptional class of $\epsilon^{(i)}$ for $i = 1, \ldots, r - 9$.
Let $v_1 \in \operatorname{Pic} X$ be the normalized dominant eigenvector and $e = (e^{(0)}, \ldots, e^{(r)})$ an arbitrary geometric basis on $\operatorname{Pic} X$.
Then the set $\{ E^{(1)}, \ldots E^{(r-9)} \}$ can be written as
\begin{equation}\label{eq:excep}
	\mathcal{E} = \{ E \in \operatorname{Pic} X \, | \, E^2 = -1, E \cdot v_1 = 0, E \cdot K_X = -1, E \cdot e^{(0)} \ge 0, (v_1 - E) \cdot e^{(0)} \ge 3 \}.
\end{equation}
In particular, a minimization $\epsilon \colon X \to X'$ is unique.
\end{proposition}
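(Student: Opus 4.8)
The plan is to pin down the set $\{ E^{(1)}, \ldots, E^{(r-9)} \}$ by the intrinsic numerical conditions in (\ref{eq:excep}), and then to read off uniqueness of the minimization from that description.  First I would fix a minimization $\epsilon \colon X \to X'$ together with its decomposition into blow-ups, so that $\operatorname{Pic} X = \epsilon^{*}(\operatorname{Pic} X') \oplus \bigoplus_i \mathbb{Z} E^{(i)}$ is an orthogonal direct sum with $(E^{(i)})^2 = -1$, $E^{(i)} \cdot E^{(j)} = 0$ for $i \ne j$, $\ker \epsilon_{*} = \bigoplus_i \mathbb{Z} E^{(i)}$, $\epsilon^{*}(\operatorname{Pic} X') = (\ker \epsilon_{*})^{\perp}$ and $K_X = \epsilon^{*} K_{X'} + \sum_i E^{(i)}$.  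The crucial preliminary step is the identity $v_1 = - K_X + \sum_i E^{(i)}$.  Since $\epsilon_{*} \sigma = \sigma' \epsilon_{*}$, the sublattice $\ker \epsilon_{*}$ is $\sigma$-invariant, so $\sigma$ restricts to an isometry of the negative definite lattice $\bigoplus_i \mathbb{Z} E^{(i)}$; as every $E^{(i)}$ is effective and $\sigma$ preserves effectivity (Definition~\ref{def:cremona}), $\sigma$ simply permutes $E^{(1)}, \ldots, E^{(r-9)}$, hence acts semisimply there.  Therefore the $3 \times 3$ Jordan block of $\sigma$ (Proposition~\ref{prop:jordan}) lies in the $\epsilon^{*}$-summand, so $v_1 \in \epsilon^{*}(\operatorname{Pic}_{\mathbb{Q}} X')$; since $\epsilon^{*}$ is split injective (with retraction $\epsilon_{*}$) it preserves primitivity, so $v_1 = \epsilon^{*} v_1'$ for the normalized dominant eigenvector $v_1'$ of $\sigma'$.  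Because $(X', \sigma')$ is minimal with quadratic growth, Proposition~\ref{prop:picardgrowthna} together with Lemma~\ref{lem:keylemma} gives $\rho(X') = 10$, and then $v_1' = - K_{X'}$ by Lemma~\ref{lem:pic10halphen}; pulling back yields the identity (and in particular $\epsilon^{*} K_{X'} = - v_1$).

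Granting this, the inclusion $\{ E^{(1)}, \ldots, E^{(r-9)} \} \subseteq \mathcal{E}$ is routine: $(E^{(i)})^2 = -1$ and $E^{(i)} \cdot K_X = -1$ are immediate, $E^{(i)} \cdot v_1 = 0$ because $v_1 \in \epsilon^{*}(\operatorname{Pic} X') = (\ker \epsilon_{*})^{\perp}$, $E^{(i)} \cdot e^{(0)} \ge 0$ because $E^{(i)}$ is effective and $e^{(0)}$ is nef, and, using $- K_X \cdot e^{(0)} = 3$ for a geometric basis, $(v_1 - E^{(i)}) \cdot e^{(0)} = 3 + \sum_{j \ne i} E^{(j)} \cdot e^{(0)} \ge 3$.

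The reverse inclusion $\mathcal{E} \subseteq \{ E^{(1)}, \ldots, E^{(r-9)} \}$ is the real content, and I expect it to be the main obstacle.  Given $E \in \operatorname{Pic} X$ satisfying the five conditions, I would decompose $E = \epsilon^{*} F' + \sum_i c_i E^{(i)}$ with $F' \in \operatorname{Pic} X'$ and $c_i \in \mathbb{Z}$.  Using $\epsilon^{*} K_{X'} = - v_1$, the conditions $E \cdot v_1 = 0$, $E \cdot K_X = -1$ and $E^2 = -1$ become $F' \cdot K_{X'} = 0$, $\sum_i c_i = 1$ and $(F')^2 - \sum_i c_i^2 = -1$.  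Since $\rho(X') = 10$ makes $K_{X'}$ isotropic, $(-,-)$ is negative semidefinite on $K_{X'}^{\perp}$ with kernel $\mathbb{Q} K_{X'}$ (Lemma~\ref{lem:lem1}), so $(F')^2 \le 0$, whence $\sum_i c_i^2 \le 1$; combined with $\sum_i c_i = 1$ and integrality this forces exactly one $c_{i_0} = 1$ and the rest to vanish.  Then $(F')^2 = 0$ and $F' \in K_{X'}^{\perp}$ give $F' \in \mathbb{Q} K_{X'}$, i.e.\ $E = b v_1 + E^{(i_0)}$ for some $b \in \mathbb{Q}$, and $b \in \mathbb{Z}$ by primitivity of $v_1$.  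Finally, writing $M = v_1 \cdot e^{(0)} = 3 + \sum_i E^{(i)} \cdot e^{(0)}$ and $m = E^{(i_0)} \cdot e^{(0)}$, so that $0 \le m \le M - 3$, the remaining conditions $E \cdot e^{(0)} = b M + m \ge 0$ and $(v_1 - E) \cdot e^{(0)} = (1-b) M - m \ge 3$ force $-1 < b < 1$, hence $b = 0$ and $E = E^{(i_0)}$.  I expect the delicate points to be exactly the identification $v_1 = \epsilon^{*}(-K_{X'})$ in the first paragraph and the bookkeeping around this orthogonal decomposition.

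For uniqueness, the key observation is that the set $\mathcal{E}$ in (\ref{eq:excep}) is defined purely in terms of $X$, $\sigma$ (through $v_1$ and $K_X$) and an auxiliary geometric basis, with no reference to any minimization, yet the above shows it coincides with the set of total transforms of whichever minimization we chose; hence that set---and therefore the sublattice $\ker \epsilon_{*} = \bigoplus_{E \in \mathcal{E}} \mathbb{Z} E$---does not depend on the minimization (nor, a posteriori, on the geometric basis).  Since an irreducible curve of $X$ is contracted by $\epsilon$ precisely when its class lies in $\ker \epsilon_{*}$, every minimization contracts exactly the same curves and is an isomorphism off their union; by Castelnuovo's contraction theorem and induction on the number of curves contracted, this determines the minimization up to a unique isomorphism of the target, which is the asserted uniqueness.
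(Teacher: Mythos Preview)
Your proof is correct, and your derivation of the key identity $v_1 = \epsilon^{*}(-K_{X'})$ (equivalently $v_1 + K_X = \sum_i E^{(i)}$) is more explicit than in the paper, which simply invokes it in Step~1.  The forward inclusion and the uniqueness argument match the paper's Steps~1 and~5 essentially verbatim.

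The genuine difference is in the reverse inclusion $\mathcal{E} \subseteq \{E^{(1)},\ldots,E^{(r-9)}\}$.  The paper (Steps~2--4) first uses Riemann--Roch and Serre duality to show that every $E \in \mathcal{E}$ and its ``complement'' $v_1 + K_X - E$ are effective; it then proves that distinct elements of $\mathcal{E}$ are mutually orthogonal by computing $(E \pm E')^2$ and ruling out $E \cdot E' = \pm 1$ via those effectivity statements; finally, an $E \in \mathcal{E}$ orthogonal to all $E^{(i)}$ would satisfy $E \cdot K_X = E \cdot (-v_1 - \sum_i E^{(i)}) = 0$, contradicting $E \cdot K_X = -1$.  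Your route is purely lattice-theoretic: you write $E = \epsilon^{*}F' + \sum_i c_i E^{(i)}$, use negative semidefiniteness on $K_{X'}^{\perp}$ to get $(F')^2 \le 0$, force $\sum c_i = 1$ and $\sum c_i^2 \le 1$ (hence a single $c_{i_0}=1$), conclude $E = b v_1 + E^{(i_0)}$, and then pin down $b=0$ from the two inequalities involving $e^{(0)}$.  Your argument avoids Riemann--Roch and Serre duality altogether for this direction, which is a real simplification; the paper's approach, in exchange, produces the intermediate facts that each $E \in \mathcal{E}$ is effective and that $\mathcal{E}$ is pairwise orthogonal, which you recover only a posteriori from $\mathcal{E} = \{E^{(i)}\}$.
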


\begin{proof}
\noindent
\underline{Step~1}

We show that $E^{(i)} \in \mathcal{E}$ for $i = 1, \ldots, r - 9$.
It is sufficient to show that $(v_1 - E^{(i)}) \cdot e^{(0)} \ge 3$ since the other conditions are trivial.
Since $e^{(0)}$ is nef and
\[
	v_1 + K_X - E^{(i)} = E^{(1)} + \cdots E^{(i-1)} + E^{(i+1)} + \cdots E^{(r-9)}
\]
is effective, we have
\begin{align*}
	0 & \le (v_1 + K_X - E^{(i)}) \cdot e^{(0)} \\
	&= - 3 + (v_1 - E^{(i)}) \cdot e^{(0)}.
\end{align*}

\noindent
\underline{Step~2}

Let $E \in \mathcal{E}$.
We show that $E$ and $K_X + v_1 - E$ are both effective.

By the Riemann-Roch inequality, we have
\[
	h^0(E) + h^2(E) \ge 1 + E \cdot (E - K_X) = 1.
\]
Using Serre duality, we have
\[
	h^2(E) = h^0(K_X - E).
\]
Since
\[
	(K_X - E) \cdot e^{(0)} \le - 3,
\]
$K_X - E$ is not effective and thus $h^0(K_X - E) = 0$.
Therefore we have $h^0(E) > 0$.

By the Riemann-Roch inequality and Serre duality, we have
\begin{align*}
	h^0(K_X + v_1 - E) & \ge 1 + (K_X + v_1 - E) \cdot (v_1 - E) - h^2(K_X + v_1 - E) \\
	& = 1 - h^0(-v_1 + E).
\end{align*}
It follows from $(-v_1 + E) \cdot e^{(0)} < 0$ that $h^0(-v_1 + E) = 0$.
Thus we have $h^0(K_X + v_1 - E) > 0$.

\noindent
\underline{Step~3}

We show that if $E, E' \in \mathcal{E}$ and $E \ne E'$, then $E \cdot E' = 0$.
It is important to note that $E, E' \in v^{\perp}_1$ and that the intersection is semi-negative definite on $v^{\perp}_1$ and its kernel is generated by $v_1$.

Let $m = E \cdot E'$.
Since
\[
	0 \ge (E \pm E')^2 = - 2 \pm 2m,
\]
we have $m = 0, \pm 1$.
We can exclude the cases $m = \pm 1$ as follows.

Assume that $m = 1$.
In this case, $E + E'$ is isotropic and thus there exists $\alpha$ such that $E + E' = \alpha v_1$.
However, this leads to the contradiction:
\[
	0 = \alpha v_1 \cdot K_X = (E + E') \cdot K_X = -2.
\]

Assume that $m = -1$.
As in the case of $m = 1$, there exists $\alpha$ such that $E - E' = \alpha v_1$.
Since $v_1$ is primitive, $\alpha$ is a nonzero integer.
We may assume $\alpha > 0$.
Thus we have
\[
	E' + (K_X + v_1 - E) = (1 - \alpha)v_1 + K_X.
\]
However, while the left hand side is effective, the right hand side is not.
Hence we conclude that $E \cdot E' = 0$.

\noindent
\underline{Step~4}

We show that $\mathcal{E} \subset \{ E^{(1)}, \ldots, E^{(9-r)} \}$.

Assume that there exists $E \in \mathcal{E} \setminus \{ E^{(1)}, \ldots, E^{(9-r)} \}$.
It follows from Steps 1 and 3 that $E \cdot E^{(i)} = 0$ for $i = 1, \ldots, E^{(r-9)}$.
However, this leads to the contradiction:
\[
	-1 = E \cdot K_X = E \cdot (-v_1 - E^{(1)} - \cdots - E^{(r-9)}) = 0.
\]

\noindent
\underline{Step~5}

The uniqueness of the minimization follows from the fact that the set $\mathcal{E}$ does not depend on $\epsilon$.
\end{proof}

The normalized dominant eigenvector $v_1$ is determined by
\[
	\mathbb{Z} v_1 = \operatorname{Ker}(\sigma_{\mathbb{Q}} - \operatorname{id}) \cap \operatorname{Im}(\sigma_{\mathbb{Q}} - \operatorname{id})^2 \cap \operatorname{Pic} X \quad
\text{ and } \quad
	v_1 \cdot e^{(0)} > 0.
\]
Thus, in principle, we can calculate $v_1$ and therefore $\mathcal{E}$ explicitly.
Hence, this proposition allows us to obtain $(X', \sigma')$ from $(X, \sigma)$ by mere linear algebra.

\begin{example}
Let us consider the equation in Example~\ref{ex:bupnormal}.

We have already constructed a space of initial conditions $X_n$ for this equation.
While the equation is integrable, $X_n$ has Picard number $11$.
Therefore, this space of initial conditions is not minimal and should have one contractible curve.
Let us explicitly find the contractible curve.
Instead of introducing $\iota_n$ and $\Phi$ as in Remark~\ref{rem:phi}, we identify all $\operatorname{Pic} X_n$ by using the basis $D^{(1)}_n, D^{(2)}_n, C^{(1)}_n, \ldots, C^{(6)}_n, \widetilde{C}^{(1)}_n, \widetilde{C}^{(2)}_n, \widetilde{C}^{(3)}_n$ for all $n$.

First we calculate the class $v_1 + K_{X_n}$, where $v_1 \in \operatorname{Pic} X_n$ is the normalized dominant eigenvector.
Using the matrix (\ref{eq:exmatrix}), we find that $v_1$ is written as
\[
	v_1 = 3 D^{(1)} + 3 D^{(2)} + 2 C^{(1)} + \cdots + 2 C^{(6)} - 2 \widetilde{C}^{(1)} - 2 \widetilde{C}^{(2)} - 2 \widetilde{C}^{(3)}.
\]
Let $H_x, H_y \in \operatorname{Pic} X_n$ be the total transforms of the curves $\{ x = \text{const} \}$ and $\{ y = \text{const} \}$, respectively.
Then, these classes can be written as
\[
	H_x = D^{(1)} + C^{(2)} + C^{(4)} + C^{(6)}, \quad
	H_y = D^{(2)} + C^{(1)} + C^{(3)} + C^{(5)},
\]
and thus we have
\[
	v_1 = 3 H_x + 3 H_y - C^{(1)} - \cdots - C^{(6)} - 2 \widetilde{C}^{(1)} - 2 \widetilde{C}^{(2)} - 2 \widetilde{C}^{(3)}.
\]
Since
\[
	K_{X_n} = -2 D^{(1)} -2 D^{(2)} - C^{(1)} - \cdots - C^{(6)} + \widetilde{C}^{(1)} + \widetilde{C}^{(2)} + \widetilde{C}^{(3)},
\]
we have
\[
	v_1 + K_X = H_x + H_y - \widetilde{C}^{(1)} - \widetilde{C}^{(2)} - \widetilde{C}^{(3)}.
\]

Next, let us find the contractible curve $C \subset X_n$.
Since $C$ is the only contractible curve in $X_n$, the class $v_1 + K_{X_n}$ must represent $C$.
Thus, the above expression implies that the image of $C$ in $\mathbb{P}^1 \times \mathbb{P}^1$ has bidegree $(1, 1)$ and passes through $Q^{(1)}, Q^{(3)}, Q^{(3)}$.
A direct calculation shows that the image of $C$ in $\mathbb{P}^1 \times \mathbb{P}^1$ is
\[
	\{ y_n - x_n = 2 a_n - \alpha \}.
\]
In fact, one can find that
\[
	\varphi_n \left( \{y_n - x_n = 2 a_n - \alpha \} \right) = \{y_{n+1} - x_{n+1} = 2 a_{n+1} - \alpha \},
\]
which means that, in a sense, the curve $\{ y - x = 2 a - \alpha \}$ is invariant under the equation.
Contracting this curve, we obtain the minimal space of initial conditions.
\end{example}


\subsection{Nonintegrable case}\label{subsec:exp}

In this subsection, we consider a minimization of a space of initial conditions in the nonintegrable case.
We will give in Proposition~\ref{prop:expminimal} a minimality criterion for a space of initial conditions and we show in Proposition~\ref{prop:expunique} that the minimization of a space of initial conditions is unique.

In this subsection, we consider the following situation:
\begin{itemize}
\item
$X$: a basic rational surface with $\rho(X) = r + 1 > 10$.
\item
$\sigma$: a Cremona isometry on $\operatorname{Pic} X$ with exponential growth.
\item
$\lambda > 1$: the maximum eigenvalue of $\sigma$.
\item
$v \in \operatorname{Pic}_{\mathbb{R}} X$: the dominant eigenvector of $\sigma$, which is isotropic.
\end{itemize}

\begin{lemma}
$v$ or $-v$ is nef.
\end{lemma}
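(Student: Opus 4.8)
The plan is to reproduce, with the obvious modifications, the argument given for Lemma~\ref{lem:v1nef} in the quadratic case, using the exponential rate $\lambda^n$ in place of $n^2$.

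Since $\sigma$ has exponential growth, Proposition~\ref{prop:jordan} tells us that its Jordan normal form is of type (\ref{eq:jordanexp}), and Lemma~\ref{lem:symmetricjordan}(3) provides a Jordan basis $(v, v', u_1, \ldots, u_{r-1})$ of $\operatorname{Pic}_{\mathbb{R}} X$, with $\sigma v = \lambda v$ and $\sigma v' = \lambda^{-1} v'$, in which $v$ and $v'$ are both isotropic and
\[
	\lim_{n \to +\infty} \frac{1}{\lambda^n} \sigma^n w = \frac{(w, v')}{(v, v')}\, v
\]
for every $w$. I would fix a geometric basis $e = (e^{(0)}, \ldots, e^{(r)})$ and apply this identity with $w = e^{(0)}$. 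The class $e^{(0)}$ is nef (it is the pullback of $\mathcal{O}_{\mathbb{P}^2}(1)$ under a birational morphism) and $\sigma$ preserves the nef cone by Lemma~\ref{lem:cremonacone}, so each $\lambda^{-n} \sigma^n e^{(0)}$ is nef; since the nef cone is closed (Proposition~\ref{app:nefclosed}), the limit $c\, v$, with $c := (e^{(0)}, v')/(v, v')$, is nef.

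It then remains to check that $c \neq 0$, after which nefness of $c\, v$ forces $v$ (if $c > 0$) or $-v$ (if $c < 0$) to be nef, the nef cone being stable under multiplication by positive scalars. For the denominator, if $(v, v') = 0$ then $v$ and $v'$ would be two orthogonal isotropic classes, hence linearly dependent by Lemma~\ref{lem:lem1}, which is impossible since they are eigenvectors for the distinct eigenvalues $\lambda$ and $\lambda^{-1}$. For the numerator, if $(e^{(0)}, v') = 0$ then $e^{(0)} \in (v')^{\perp}$, on which the intersection form is negative semidefinite by Lemma~\ref{lem:lem1}, contradicting $(e^{(0)})^2 = 1$. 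This completes the plan; I do not expect any genuine obstacle, the only real content being, exactly as in Lemma~\ref{lem:v1nef}, the recognition of $v$ as a scalar multiple of a limit of nef classes.
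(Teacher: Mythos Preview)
Your proof is correct and follows exactly the approach the paper indicates: the paper's own proof simply says to repeat the argument of Lemma~\ref{lem:v1nef} with the limit $\lim_{n\to+\infty}\lambda^{-n}\sigma^n e^{(0)}$, which is precisely what you do. If anything, you are more careful than the paper in explicitly verifying $c\neq 0$, which the paper leaves implicit.
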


\begin{proof}
The proof is the same as that of Lemma~\ref{lem:v1nef}.
Take a geometric basis $e = (e^{(0)}, \ldots, e^{(r)})$ and consider the limit
\[
	\lim_{n \to +\infty}\frac{1}{\lambda^n}\sigma^n e^{(0)}.
\]
\end{proof}

As the sign can be changed at will, we may assume that $v$ is nef.

We shall use the following lemma throughout this subsection.

\begin{lemma}\label{lem:negativedefinite}
The intersection number is negative definite on the lattice $v^{\perp} \cap \operatorname{Pic} X$.
\end{lemma}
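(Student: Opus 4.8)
The plan is to show that $v^{\perp} \cap \operatorname{Pic} X$ contains no nonzero isotropic vector, from which negative definiteness follows by the signature hypothesis. First I would invoke Lemma~\ref{lem:lem1} applied to the $\mathbb{R}$-vector space $\operatorname{Pic}_{\mathbb{R}} X$ with its intersection form of signature $(1, r)$: since $v$ is isotropic, the form restricted to $v^{\perp}$ is negative semi-definite with kernel exactly $\mathbb{R} v$. So it suffices to check that $v \notin \operatorname{Pic}_{\mathbb{Q}} X$, i.e.\ that $v$ is not a rational (hence not an integral) class. If it were, then $v$ would be fixed up to scaling by $\sigma$, so $\mathbb{Q} v$ would be a $\sigma$-invariant line inside $\operatorname{Pic}_{\mathbb{Q}} X$ on which $\sigma$ acts by the eigenvalue $\lambda$; but $\sigma$ preserves the lattice $\operatorname{Pic} X$, so its characteristic polynomial has integer coefficients, which would force $\lambda$ to be a rational algebraic integer, hence a rational integer. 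This contradicts the fact that $\lambda > 1$ is a reciprocal quadratic integer greater than $1$ or a Salem number (Proposition~\ref{prop:jordan}), both of which are irrational. Therefore $v \notin \operatorname{Pic}_{\mathbb{Q}} X$, so $\mathbb{R} v \cap (\operatorname{Pic}_{\mathbb{R}} X)$ meets $\operatorname{Pic} X$ only in $\{0\}$.

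Consequently, for any nonzero $w \in v^{\perp} \cap \operatorname{Pic} X$ we have $w \notin \ker\big((-,-)\big|_{v^{\perp}}\big) = \mathbb{R} v$, and since the form is negative semi-definite on $v^{\perp}$ this gives $w \cdot w < 0$. Hence the intersection form is negative definite on $v^{\perp} \cap \operatorname{Pic} X$, as claimed.

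I expect the only subtle point to be the rationality argument: one must be careful that $v$ was only defined as a real eigenvector, and the cleanest way to rule out $v$ being proportional to a lattice vector is precisely via the irrationality of the dominant eigenvalue $\lambda$, which is recorded in Proposition~\ref{prop:jordan} together with the Diller--Favre restrictions. Everything else is a direct application of Lemma~\ref{lem:lem1}. In fact this argument is almost identical to the reasoning in the proof of Proposition~\ref{prop:picardgrowthna}(2), where the same conclusion about $K^{\perp}$ versus a dominant eigenvector was reached; here we simply phrase it for $v^{\perp} \cap \operatorname{Pic} X$.
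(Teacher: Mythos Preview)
Your proof is correct and follows essentially the same route as the paper's own argument: invoke Lemma~\ref{lem:lem1} to get that the form on $v^{\perp}$ is negative semi-definite with kernel $\mathbb{R}v$, then rule out any nonzero lattice vector lying on $\mathbb{R}v$ by the irrationality of $\lambda$ (which the paper states as ``a scalar multiple of $v$ does not belong to $\operatorname{Pic} X$ since $\lambda$ is irrational''). Your detour through the characteristic polynomial is a bit more elaborate than needed---it suffices to note that if some nonzero multiple $cv$ lay in $\operatorname{Pic} X$ then $\sigma(cv)=\lambda cv$ would force $\lambda\in\mathbb{Q}$---but the substance is the same.
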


\begin{proof}
Lemma~\ref{lem:lem1} says that the intersection number has signature $(0, r - 1)$ on $v^{\perp}$ and its kernel is generated by $v$.
However, a scalar multiple of $v$ does not belong to $\operatorname{Pic} X$ since $\lambda$ is irrational.
Thus the intersection number is negative definite on $v^{\perp} \cap \operatorname{Pic} X$.
\end{proof}

\begin{lemma}\label{lem:expdisjoint}
Two different exceptional curves of the first kind that belong to $v^{\perp}$ are always orthogonal to each other.
\end{lemma}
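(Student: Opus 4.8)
The plan is to control the intersection number $m = C \cdot C'$ of two distinct exceptional curves $C, C'$ of the first kind whose classes lie in $v^{\perp}$, by exploiting the negativity of the intersection form on $v^{\perp}$. Since $C$ and $C'$ are distinct irreducible curves, $m = C \cdot C' \ge 0$, so it suffices to rule out $m \ge 1$; in fact it will be enough to show $m < 1$.

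First I would consider the class $[C] + [C']$. As $C \cdot v = C' \cdot v = 0$, this class lies in $v^{\perp} \cap \operatorname{Pic} X$, and it is nonzero because it is the class of a nonzero effective divisor on a smooth projective surface. By Lemma~\ref{lem:negativedefinite} the intersection form is negative definite on $v^{\perp} \cap \operatorname{Pic} X$, so $([C] + [C'])^2 < 0$.

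Now I would simply expand: since $C$ and $C'$ are exceptional curves of the first kind, $C^2 = C'^2 = -1$, so $([C] + [C'])^2 = C^2 + 2\, C \cdot C' + C'^2 = 2m - 2$. Hence $2m - 2 < 0$, which forces $m = 0$; that is, $C \cdot C' = 0$, and since $C$ and $C'$ are distinct irreducible curves this means they are disjoint.

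I do not anticipate a genuine obstacle: the argument is entirely formal once Lemma~\ref{lem:negativedefinite} is in hand, and the only point needing a word of justification is that $[C] + [C'] \ne 0$ in $\operatorname{Pic} X$. If one wished to avoid Lemma~\ref{lem:negativedefinite} altogether, the same conclusion follows from Lemma~\ref{lem:lem1}: that lemma gives $([C]+[C'])^2 \le 0$, hence $m \le 1$; and if $m = 1$ then $[C] + [C']$ would be isotropic and orthogonal to the isotropic class $v$, hence proportional to $v$ by the last assertion of Lemma~\ref{lem:lem1}, whereas pairing with $K_X$ gives $([C]+[C']) \cdot K_X = -2 \ne 0 = v \cdot K_X$ --- the equality $v \cdot K_X = 0$ holding because the Cremona isometry $\sigma$ fixes $K_X$ and multiplies $v$ by $\lambda \ne 1$ --- a contradiction.
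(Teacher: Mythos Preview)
Your proof is correct and follows essentially the same approach as the paper: compute $(C+C')^2 = -2 + 2\,C\cdot C'$ and invoke Lemma~\ref{lem:negativedefinite} to force $C\cdot C' = 0$. The paper's version is more terse, omitting the remark that $[C]+[C']\ne 0$ and the alternative argument via Lemma~\ref{lem:lem1}, but the core computation is identical.
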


\begin{proof}
Let $C, C' \in v^{\perp}$ be two different exceptional curves of the first kind.
Using Lemma~\ref{lem:negativedefinite}, we have
\[
	0 > (C + C')^2 = -2 + 2 C \cdot C'
\]
and thus $C \cdot C' = 0$.
\end{proof}

\begin{proposition}\label{prop:expminimal}
$(X, \sigma)$ is minimal if and only if there exist no exceptional curves of the first kind that are orthogonal to $v$.
\end{proposition}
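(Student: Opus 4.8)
The plan is to prove Proposition~\ref{prop:expminimal} by combining the minimality criterion of Lemma~\ref{lem:minimalitycurves} with the structural lemmas just established about $v^\perp$. Recall that Lemma~\ref{lem:minimalitycurves} says $(X,\sigma)$ is minimal if and only if there are no mutually disjoint exceptional curves of the first kind $C_1,\dots,C_N$ that are permuted by $\sigma$. So the task reduces to showing that such a $\sigma$-orbit of disjoint $(-1)$-curves exists precisely when there is an exceptional curve of the first kind orthogonal to $v$.

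For the ``only if'' direction (contrapositive: if some exceptional curve of the first kind lies in $v^\perp$, then $(X,\sigma)$ is not minimal), I would start with an exceptional curve $C$ of the first kind with $C\cdot v=0$. Since $\sigma$ is a Cremona isometry, Lemma~\ref{lem:crirred} tells us $\sigma$ permutes the set of exceptional curves of the first kind, so $\sigma^\ell C$ is again such a curve for every $\ell$; moreover $(\sigma^\ell C)\cdot v = (\sigma^\ell C)\cdot(\sigma^\ell v)\cdot(\text{adjustment}) $ — more precisely, since $\sigma v = \lambda v$ we get $\sigma^\ell C \cdot v = \lambda^{-\ell}\,\sigma^\ell C\cdot \sigma^\ell v = \lambda^{-\ell}(C\cdot v)=0$, so the whole forward orbit of $C$ lies in $v^\perp$. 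The key finiteness point is that all these curves have class in the lattice $v^\perp\cap\operatorname{Pic}X$, on which the intersection form is negative definite by Lemma~\ref{lem:negativedefinite}; since each $\sigma^\ell C$ has self-intersection $-1$, they live in a bounded region of a negative definite lattice, so only finitely many distinct classes occur, and hence the orbit $\{C,\sigma C,\sigma^2 C,\dots\}$ is finite — say $\sigma^N C=C$. Then $C,\sigma C,\dots,\sigma^{N-1}C$ is a finite $\sigma$-invariant set of exceptional curves of the first kind all lying in $v^\perp$, and Lemma~\ref{lem:expdisjoint} shows any two distinct ones are orthogonal, i.e.\ disjoint. By Lemma~\ref{lem:minimalitycurves}, $(X,\sigma)$ is not minimal.

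For the ``if'' direction (contrapositive: if $(X,\sigma)$ is not minimal then there is an exceptional curve of the first kind orthogonal to $v$), suppose $C_1,\dots,C_N$ are mutually disjoint exceptional curves of the first kind permuted by $\sigma$. Consider the class $F=[C_1]+\dots+[C_N]\in\operatorname{Pic}X$; it is fixed by $\sigma$ since $\sigma$ permutes the $C_i$. Now I would use that $v$ (or $-v$) is nef and that $v$ is the dominant eigenvector: because $\sigma F=F$ and $\sigma$ has $\lambda>1$ as a strictly dominant eigenvalue with $1$-dimensional eigenspace $\mathbb{R}v$ spanned by an isotropic vector, and because the pairing $(v,F)=(\sigma v,\sigma F)=\lambda(v,F)$ forces $(v,F)=0$. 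But $F$ is effective and $v$ is nef, and $(v,F)=\sum_i v\cdot C_i$ with each $v\cdot C_i\ge 0$; hence $v\cdot C_i=0$ for every $i$. Thus each $C_i$ is an exceptional curve of the first kind orthogonal to $v$, as required.

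The main obstacle I anticipate is the finiteness argument in the ``only if'' direction — making precise that a $\sigma$-orbit of $(-1)$-curves confined to a negative definite sublattice must be finite. The cleanest route is: all orbit elements have self-intersection $-1$, they are distinct effective classes in the lattice $L=v^\perp\cap\operatorname{Pic}X$, the form on $L$ is negative definite (Lemma~\ref{lem:negativedefinite}), and a negative definite lattice has only finitely many vectors of any fixed norm, so the orbit is finite. One should double-check there is no subtlety about $C$ itself lying in $v^\perp$ requiring $C\cdot v=0$ exactly (it does, by hypothesis), and that $\sigma$ genuinely preserves $v^\perp$ as a set (it does, since $\sigma v=\lambda v$). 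With these in hand the proof is short; essentially it is the exponential-growth analogue of the quadratic-growth analysis in Lemma~\ref{lem:keylemma}, but simpler because here no Riemann--Roch input is needed — the nef/effective pairing and negative-definiteness do all the work.
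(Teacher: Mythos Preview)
Your proof is correct and follows essentially the same approach as the paper: both directions go through Lemma~\ref{lem:minimalitycurves}, use the eigenvalue relation $\sigma v=\lambda v$ with $\lambda>1$ to force orthogonality to $v$, and invoke Lemma~\ref{lem:expdisjoint} for disjointness. The only cosmetic differences are that the paper applies the eigenvalue argument to a single periodic curve $C_1$ (rather than to the $\sigma$-fixed sum $F$), and in the other direction works with the full set $\mathcal{C}$ of $(-1)$-curves in $v^\perp$ (finite since mutually disjoint curves give independent classes) rather than the orbit of one curve; your explicit finiteness argument via the negative definite lattice is equally valid.
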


\begin{proof}
Suppose that $(X, \sigma)$ is not minimal.
Then there exist mutually disjoint exceptional curves of the first kind $C_1, \ldots, C_N$ such that $\sigma$ acts as a permutation on $\{ C_1, \ldots, C_N \}$.
It is therefore sufficient to show that $C_1 \cdot v = 0$.

Taking $\ell \in \mathbb{Z}$ such that $\sigma^{\ell}C_1 = C_1$, we have
\[
	C_1 \cdot v = (\sigma^{\ell}C_1) \cdot (\sigma^{\ell} v) = \lambda^{\ell} C_1 \cdot v,
\]
which shows that $C_1 \cdot v = 0$.

We now show the converse.
Let $\mathcal{C}$ be the set of the exceptional curves of the first kind that are orthogonal to $v$, and assume that $\mathcal{C}$ is nonempty.
It is clear that $\sigma$ acts on $\mathcal{C}$ as a permutation.
Since all elements in $\mathcal{C}$ are mutually disjoint by Lemma~\ref{lem:expdisjoint}, it follows from Lemma~\ref{lem:minimalitycurves} that $(X, \sigma)$ is not minimal.
\end{proof}

While the Picard numbers of the minimal spaces of initial conditions for integrable systems are always $10$, those of nonintegrable systems depend on the detail of the equations.
Therefore, it is impossible to check the minimality only by the Picard number.
We can only say that the Picard numbers are greater than $10$.
However, Proposition~\ref{prop:expminimal} gives us a precise minimality criterion.

The following proposition is an analogue of Proposition~\ref{prop:unique}.

\begin{proposition}\label{prop:expunique}
Let $\epsilon \colon X \to X'$ be a minimization of $(X, \sigma)$.
Decompose $\epsilon$ into a composition of blow-ups
\[
	\epsilon = \epsilon^{(1)} \circ \cdots \circ \epsilon^{(L)}
\]
and let $E^{(i)} \in \operatorname{Pic} X$ be the total transform of the exceptional class of $\epsilon^{(i)}$ for $i = 1, \ldots, L$.
Let $e = (e^{(0)}, \ldots, e^{(r)})$ be an arbitrary geometric basis on $\operatorname{Pic} X$.
Then the set $\{ E^{(1)}, \ldots E^{(L)} \}$ can be written as
\begin{equation*}
	\mathcal{E} = \{ E \in \operatorname{Pic} X \, | \, E^2 = -1, E \cdot v = 0, E \cdot K_X = -1, E \cdot e^{(0)} \ge 0 \}.
\end{equation*}
In particular, the minimization $\epsilon \colon X \to X'$ is unique.
\end{proposition}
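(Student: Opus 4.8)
The plan is to mirror the five-step structure of the proof of Proposition~\ref{prop:unique}, the main simplification being that here $v^{\perp} \cap \operatorname{Pic} X$ is \emph{strictly} negative definite (Lemma~\ref{lem:negativedefinite}) rather than merely semi-negative definite, so several of the delicate case distinctions collapse. As in the integrable case it suffices to prove the two inclusions $\{E^{(1)}, \ldots, E^{(L)}\} \subseteq \mathcal{E}$ and $\mathcal{E} \subseteq \{E^{(1)}, \ldots, E^{(L)}\}$; uniqueness then follows at once, since $\mathcal{E}$ is defined purely in terms of the intersection form, $K_X$, $v$ and an (arbitrary) geometric basis, hence does not depend on the chosen minimization $\epsilon$, while a finite set of classes of negative self-intersection determines the corresponding irreducible curves and therefore determines $\epsilon$ and $X'$ up to isomorphism.

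For the first inclusion I would argue as follows. By Lemma~\ref{lem:minimalitycurves} (iterated as in the discussion following it) a minimization $\epsilon \colon X \to X'$ is a composition of contractions of mutually disjoint exceptional curves of the first kind permuted by the relevant Cremona isometry; because disjoint curves are contracted independently, each $E^{(i)}$ is the class of such a $(-1)$-curve $C_i$, so $(E^{(i)})^2 = -1$, $E^{(i)} \cdot K_X = -1$, and $E^{(i)}$ is effective, whence $E^{(i)} \cdot e^{(0)} \ge 0$ by nefness of $e^{(0)}$. Since $\sigma$ permutes the $C_i$ one has $\sigma^N C_i = C_i$ for some $N > 0$, and then $C_i \cdot v = (\sigma^N C_i) \cdot (\sigma^N v) = \lambda^N (C_i \cdot v)$ forces $E^{(i)} \cdot v = 0$; in particular $v$ is orthogonal to every $E^{(i)}$, so $v = \epsilon^{*} v'$ where $v' := \epsilon_{*} v$ is the dominant (nef, isotropic) eigenvector of the Cremona isometry $\sigma' = \epsilon_{*} \sigma \epsilon^{*}$ on $\operatorname{Pic} X'$. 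Conversely, for $E \in \mathcal{E}$, the Riemann--Roch inequality gives $h^0(E) + h^2(E) \ge 1 + E \cdot (E - K_X) = 1$, while Serre duality and $(K_X - E) \cdot e^{(0)} \le -3 < 0$ give $h^2(E) = h^0(K_X - E) = 0$, so $E$ is effective.

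The heart of the proof is the inclusion $\mathcal{E} \subseteq \{E^{(1)}, \ldots, E^{(L)}\}$. First, for distinct $E, E' \in \mathcal{E}$ one has $E \pm E' \in v^{\perp} \cap \operatorname{Pic} X$, and negative definiteness applied to $(E \pm E')^2 < 0$ immediately yields $E \cdot E' = 0$. Now suppose $E \in \mathcal{E} \setminus \{E^{(1)}, \ldots, E^{(L)}\}$. By Step~1 and the orthogonality just proved, $E$ is orthogonal to every $E^{(i)}$, so $E$ lies in the orthogonal complement of $\bigoplus_i \mathbb{Z} E^{(i)}$, i.e.\ $E = \epsilon^{*} F$ with $F := \epsilon_{*} E \in \operatorname{Pic} X'$; using $K_X = \epsilon^{*} K_{X'} + \sum_i E^{(i)}$ and $v = \epsilon^{*} v'$ one computes $F^2 = -1$, $F \cdot K_{X'} = -1$, $F \cdot v' = 0$, and $F$ is effective because $E$ is. Write $F = \sum_j b_j G_j$ with the $G_j$ distinct irreducible and $b_j > 0$: since $v'$ is nef and $F \cdot v' = 0$, each $G_j \cdot v' = 0$, so every $G_j$ lies in the negative definite lattice $v'^{\perp} \cap \operatorname{Pic} X'$ and hence has $G_j^2 \le -1$; since $\sum_j b_j (G_j \cdot K_{X'}) = -1$, some component $G_1$ satisfies $G_1 \cdot K_{X'} \le -1$, and then the genus formula together with $G_1^2 \le -1$ forces $G_1^2 = G_1 \cdot K_{X'} = -1$ and $p_a(G_1) = 0$, so $G_1 \cong \mathbb{P}^1$ by Proposition~\ref{app:p1arith}. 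Thus $G_1$ is an exceptional curve of the first kind on $X'$ orthogonal to $v'$, contradicting the minimality of $(X', \sigma')$ by Proposition~\ref{prop:expminimal}. This establishes both inclusions, and Step~5 (uniqueness) is as explained in the first paragraph.

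The step I expect to require the most care is this last one, specifically the passage between $X$ and $X'$: verifying that a general, possibly multi-round, minimization contracts only curves orthogonal to $v$, justifying the identification $v = \epsilon^{*} v'$ (and that $\sigma'$ is a Cremona isometry of exponential growth with dominant eigenvector $v'$), and ruling out that $(-2)$-curve debris in the decomposition of $F$ could absorb the $K_{X'}$-degree and prevent a $(-1)$-curve component from appearing. It is precisely the negative definiteness of $v^{\perp} \cap \operatorname{Pic} X$ and of $v'^{\perp} \cap \operatorname{Pic} X'$ that keeps all of these under control.
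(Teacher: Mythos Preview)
Your argument is correct and matches the paper's approach closely: effectiveness via Riemann--Roch and Serre duality, mutual orthogonality from the negative definiteness of $v^\perp \cap \operatorname{Pic} X$, and pushing a hypothetical extra $E \in \mathcal{E}$ down to $X'$ to exhibit a $(-1)$-curve component orthogonal to $v'$, contradicting minimality via Proposition~\ref{prop:expminimal} (the paper locates that component through the summed identity $\sum_j a_j\bigl(2 + C_j^2 - 2g_a(C_j)\bigr) = 1$ rather than your direct $K_{X'}$-degree count, a harmless variant). One small caution: in a multi-round minimization the total transforms $E^{(i)}$ need not be classes of irreducible $(-1)$-curves on $X$, but your periodicity argument $\sigma^N E^{(i)} = E^{(i)} \Rightarrow E^{(i)}\cdot v = 0$ goes through unchanged once you note that $\sigma$, being an isometry of $\ker\epsilon_* = \bigoplus_i \mathbb{Z}E^{(i)}$ that preserves effectiveness, must permute the \emph{classes} $E^{(i)}$.
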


\begin{proof}
It is clear that $E_i \in \mathcal{E}$ for $i = 1, \ldots, L$.
We show that $\mathcal{E} \subset \{ E_1, \ldots, E_L \}$.

\noindent
\underline{Step~1}

We show that every element $E \in \mathcal{E}$ is effective.
Using the Riemann-Roch inequality and Serre duality, we have
\[
	h^0(E) \ge 1 - h^2(E) = 1 - h^0(K_X - E).
\]
It follows from $e^{(0)} \cdot (K_X - E) < 0$ that $h^0(K_X - E) = 0$.
Thus $E$ is effective.

\noindent
\underline{Step~2}

We show that two different elements $E, E' \in \mathcal{E}$ are orthogonal to each other.
Since the intersection number is negative definite on $v^{\perp} \cap \operatorname{Pic} X$, we have
\[
	0 > (E \pm E')^2 = -2 \pm 2 E \cdot E',
\]
and thus $E \cdot E' = 0$.

\noindent
\underline{Step~3}

Assume that there exists $E \in \mathcal{E} \setminus \{ E_1, \ldots, E_L \}$.
Let $E' = \epsilon_{*}E$ and $v' = \epsilon_{*}v$.
Since $E \cdot E_i = 0$ by Step~2, we have
\[
	E'^2 = -1, \quad
	E' \cdot K_{X'} = -1, \quad
	E' \cdot v' = 0.
\]
Since $E$ is effective, so is $E'$.
Let us express $E'$ as a sum of irreducible curves:
\[
	E' = \sum^{\ell}_{j=1} a_i [C_i].
\]

We show that there exists at least one $j$ such that $C_j$ is an exceptional curve of the first kind.
Since $E' \cdot v' = 0$ and $v'$ is nef, we have $C_j \cdot v' = 0$ for $j = 1, \ldots, \ell$.
Since the intersection number is negative definite on $v'^{\perp} \cap \operatorname{Pic} X'$, $C^2_j$ are all negative.
By the genus formula, we have
\[
	g_a(C_j) = 1 + \frac{1}{2}C^2_j + \frac{1}{2}C_j \cdot K_{X'}.
\]
Multiplying with $a_j$ and summing, we obtain
\[
	\sum_j a_j g_a(C_j) = \sum_j a_j + \frac{1}{2} \sum_j a_j C^2_j + \frac{1}{2}\sum_j a_j C_j \cdot K_{X'}.
\]
Using $\sum_j a_j C_j = E'$ and $E' \cdot K_{X'} = -1$, we have
\[
	\sum_j a_j(2 + C^2_j - 2 g_a(C_j)) = 1.
\]
If $C^2_j - 2 g_a(C_j) \le -2$ for all $j$, then the left hand side is not positive, which is a contradiction.
Therefore, there is at least one $j$ such that $C^2_j - 2 g_a(C_j) \ge -1$.
Since $C^2_j < 0$, the only possible case is
\[
	C^2_j = -1, \quad
	g_a(C_j) = 0.
\]
Thus $C_j$ is an exceptional curve of the first kind.

Since $C_j \cdot v' = 0$, Proposition~\ref{prop:expminimal} implies that $(X', \sigma')$ is not minimal, which is a contradiction.
Hence, we have $\mathcal{E} = \{ E_1, \ldots, E_L \}$.
\end{proof}


\section{Conclusion}

In this paper, we studied nonautonomous mappings of the plane with spaces of initial conditions and unbounded degree growth by means of spaces of initial conditions.
Especially, Theorem~\ref{thm:main} shows that if an integrable mapping of the plane with unbounded degree growth has a space of initial conditions, then it must be one of the discrete Painlev\'{e} equations.
Since all discrete Painlev\'{e} equations have already been classified by Sakai \cite{sakai}, this means we have finished the classification of integrable mappings of the plane with a space of initial conditions and unbounded degree growth.
Moreover, we have given a concrete procedure to minimize a space of initial conditions to a generalized Halphen surface in the integrable case, as well as a general procedure to minimize the space of initial conditions for nonintegrable mappings.


\section*{Acknowledgement}

I wish to express my gratitude to Professors R. Willox, A. Ramani and B. Grammaticos for many useful discussions and comments.
I would also like to thank A. Ramani for providing me with the nice examples in Examples~\ref{ex:firstexample} and \ref{ex:bupnormal}.
This work was partially supported by a Grant-in-Aid for Scientific Research of Japan Society for the Promotion of Science ($25 \cdot 3088$, 16H06711); and by the Program for Leading Graduate Schools, MEXT, Japan.

\appendix

\section{Algebraic surfaces}\label{sec:surf}

In this appendix we specify the notation used throughout the paper and we recall some basic results on algebraic surfaces.
We shall not give the proofs of propositions since these can be found in many textbooks such as \cite{hartshorne, beauville, badescu}.

In this paper, a {\em surface} means a smooth projective variety of dimension $2$ over $\mathbb{C}$, which becomes a compact complex manifold of dimension $2$.

\begin{notation}
\hspace{1em}
\begin{itemize}
\item
$\sim$: the linear equivalence of divisors, which is the same as the numerical equivalence if the surface is rational.

\item
$[D]$: the linear equivalence class of $D$.

\item
$\operatorname{Pic} X$: the Picard group of $X$.

\item
$\operatorname{Pic}_{\mathbb{Q}} X = \operatorname{Pic} X \otimes \mathbb{Q}$,
$\operatorname{Pic}_{\mathbb{R}} X = \operatorname{Pic} X \otimes \mathbb{R}$,
$\operatorname{Pic}_{\mathbb{C}} X = \operatorname{Pic} X \otimes \mathbb{C}$.

\item
$\rho(X)$: the Picard number of $X$, which is equal to $\dim_{\mathbb{Q}}(\operatorname{Pic}_\mathbb{Q} X)$ if $X$ is rational.

\item
$H^{i}(X, D) = H^i(D)$: the $i$-th cohomology group of the divisor $D$ (or its class).

\item
$h^{i}(D) = h^{i}(X, D) = \dim_{\mathbb{C}} H^{i}(X, D)$.

\item
$|F|$: the linear system of $F$.

\item
$\dim |F| = h^0(F) - 1$ where $F$ is effective.

\item
$K_X$: the canonical class of $X$.

\item
$D_1 \cdot D_2$: the intersection number of the divisors $D_1$ and $D_2$ (or their classes).

\item
$X \to Y$: a morphism from $X$ to $Y$.

\item
$X \dasharrow Y$: a birational map from $X$ to $Y$.

\item
$\mathcal{O}_{\mathbb{P}^2}(1)$: the class of lines in $\mathbb{P}^2$, which is a generator of $\operatorname{Pic} \mathbb{P}^2$ as a $\mathbb{Z}$-module.

\item
$\operatorname{PGL}(3)$: the set of projective linear transformations on $\mathbb{P}^2$, which coincides with the group of automorphisms on $\mathbb{P}^2$.

\item
$g_a(C) = \dim_{\mathbb{C}} H^1(C, \mathcal{O}_C)$: the arithmetic genus of an irreducible curve $C$.

\end{itemize}
\end{notation}

\begin{theorem}[Hodge index theorem]\label{app:hodge}
The intersection number on a surface $X$ has signature $(1, \rho(X) - 1)$.
\end{theorem}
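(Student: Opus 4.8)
The plan is to exhibit a single positive direction for the intersection pairing and to prove that its orthogonal complement is negative definite, the pairing being nondegenerate. (For a general surface one works on the N\'eron--Severi group, that is, modulo numerical equivalence, where nondegeneracy is a consequence of Poincar\'e duality; for a rational surface this group is already $\operatorname{Pic} X$.)

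First I would fix a very ample divisor $H$, so that $H^2 = \deg X > 0$; thus the pairing is positive on $\mathbb{R}H$, and since $H^2 \neq 0$ there is an orthogonal splitting $\operatorname{Pic}_{\mathbb{R}} X = \mathbb{R}H \oplus H^{\perp}$ with $\dim H^{\perp} = \rho(X) - 1$. It therefore suffices to show that $D^2 < 0$ for every nonzero $D \in H^{\perp}$: combined with nondegeneracy this forces the pairing on $H^{\perp}$ to be negative definite (a class in the radical of $(-,-)|_{H^{\perp}}$ would be orthogonal to all of $H^{\perp}$ and to $H$, hence to everything), which gives signature $(1, \rho(X) - 1)$.

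The heart of the matter is the asymptotic Riemann--Roch estimate: \emph{if $D^2 > 0$, then $nD$ or $-nD$ is linearly equivalent to an effective divisor for all $n \gg 0$}. Riemann--Roch together with Serre duality gives
\[
	h^0(nD) + h^0(K_X - nD) \;\geq\; \chi(\mathcal{O}_X) + \tfrac{1}{2}\,nD\cdot(nD - K_X) \;=\; \tfrac{1}{2}n^2 D^2 + O(n),
\]
which tends to $+\infty$, and likewise with $D$ replaced by $-D$. If $h^0(\pm nD) > 0$ for some large $n$ we are done; otherwise $K_X - nD$ and $K_X + nD$ are effective for all large $n$, and intersecting $K_X \sim nD + (\text{effective})$ with an arbitrary ample class $A$ and letting $n \to \infty$ forces $A\cdot D \leq 0$ and, symmetrically, $A\cdot D \geq 0$, so $D$ is numerically trivial, contradicting $D^2 > 0$. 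Granting this, if some nonzero $D \in H^{\perp}$ had $D^2 > 0$, then after possibly replacing $D$ by $-D$ we would have $nD \sim E$ with $E$ effective and not numerically trivial, whence $0 = n\,(D\cdot H) = E\cdot H > 0$ because $H$ is ample and $E$ a nonzero effective divisor, a contradiction.

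I expect the asymptotic Riemann--Roch step to be the main obstacle: the delicate point is ruling out the pluricanonical alternative, in which $K_X \pm nD$ stays effective for all large $n$ although neither $\pm nD$ is, and the limiting argument with varying ample classes has to be handled with some care. An alternative route, at the cost of invoking Hodge theory, is to read the statement off the Hodge--Riemann bilinear relations on $H^{1,1}(X) \cap H^2(X,\mathbb{R})$ --- negative definite on the primitive part, positive on the K\"ahler class --- but the argument above stays within the tools already used in this paper (Riemann--Roch, Serre duality, ampleness).
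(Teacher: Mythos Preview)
The paper does not prove this theorem: it is listed in Appendix~\ref{sec:surf} among the standard facts on algebraic surfaces that are quoted without proof, with references to Hartshorne, Beauville and B\u{a}descu. So there is nothing to compare your argument against in the paper itself.

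Your proof is the classical one (essentially Hartshorne V.1.9): fix an ample $H$, use asymptotic Riemann--Roch plus Serre duality to show that any class with $D^2>0$ has an effective multiple up to sign, and then observe that a nonzero effective class cannot be orthogonal to an ample one. The logic is sound, including the step ruling out the ``pluricanonical alternative'' by intersecting $K_X\pm nD$ with an ample class and letting $n\to\infty$; note that you need $D\cdot A=0$ for \emph{every} ample $A$ to conclude numerical triviality, which is fine since ample classes span $\operatorname{Pic}_{\mathbb{R}}X$. One small wording issue: you announce that you will show $D^2<0$ for all nonzero $D\in H^\perp$, but the Riemann--Roch argument only rules out $D^2>0$; the passage from semidefinite to definite is exactly your parenthetical remark (for a semidefinite form, isotropic vectors lie in the radical, and nondegeneracy of the full pairing on the N\'eron--Severi group then forces $D=0$). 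It would read more cleanly to state the goal as $D^2\le 0$ and then invoke nondegeneracy separately.
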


\begin{definition}
Let $f \colon X \dasharrow Y$ be a birational map and let $\pi \colon \widetilde{X} \to X$ be a resolution of the indeterminacies of $f$:
\[
\xymatrix{
	\widetilde{X} \ar[d]_{\pi} \ar[rd]^{g} & \\
	X \ar@{.>}[r]_{f} & Y.
}
\]
Then, $f_{*}$ and $f^{*}$ are defined by
\[
	f_{*} = g_{*} \pi^{*}, \quad
	f^{*} = \pi_{*} g^{*}.
\]
It is known that these linear maps do not depend on the choice of $\pi$.

These linear maps do not preserve the intersection number in general.
\end{definition}

\begin{proposition}
The above $f_{*}$ and $f^{*}$ preserve the set of effective classes.
\end{proposition}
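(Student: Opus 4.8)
The plan is to reduce the statement to two elementary facts about \emph{birational morphisms}, for which the behaviour of effective classes is transparent. First I would observe that, since $f$ is birational and $\pi$ is a morphism, the composite $g = f \circ \pi \colon \widetilde{X} \to Y$ is itself a birational morphism; consequently the formulas $f_{*} = g_{*}\pi^{*}$ and $f^{*} = \pi_{*}g^{*}$ express each of $f_{*}$ and $f^{*}$ as a pullback along a birational morphism followed by a pushforward along a birational morphism (indeed $\pi_{*}\pi^{*} = \operatorname{id}$ and $g_{*}g^{*} = \operatorname{id}$ on the relevant sublattices). Hence it suffices to prove the two claims: (i) for a birational morphism $h \colon Z \to W$ of surfaces, $h^{*}$ sends effective classes to effective classes; and (ii) for such an $h$, $h_{*}$ sends effective classes to effective classes.

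For (i) I would argue on global sections: since $h$ is dominant, pulling back sections gives an injection $H^{0}(W,\mathcal{O}_{W}(D)) \hookrightarrow H^{0}(Z,\mathcal{O}_{Z}(h^{*}D))$, so $h^{0}(D)>0$ forces $h^{0}(h^{*}D)>0$; equivalently, the divisor of zeros of the pullback of a nonzero section of $\mathcal{O}_{W}(D)$ is an effective divisor in the class $h^{*}D$. For (ii) I would use that a birational morphism of smooth projective surfaces factors as a finite composition of blow-ups, reducing to a single blow-up $h$ with exceptional curve $E$: writing an effective divisor as $\sum_{i} a_{i} C_{i}$ with the $C_{i}$ prime and $a_{i} \ge 0$, one has $h_{*}C_{i} = 0$ if $C_{i} = E$ and $h_{*}C_{i}$ equal to the reduced irreducible image curve otherwise, so $h_{*}\bigl(\sum_{i} a_{i} C_{i}\bigr)$ is again an effective divisor, hence an effective class. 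Applying (i) and (ii) along the factorizations $f_{*} = g_{*}\pi^{*}$ and $f^{*} = \pi_{*}g^{*}$ then gives the proposition.

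I expect no genuine obstacle here; the argument is bookkeeping rather than geometry. The one point that calls for a little care is to confirm that the pushforward used in (ii) is literally the map $h_{*} \colon \operatorname{Pic}Z \to \operatorname{Pic}W$ appearing in the definition of $f_{*}$ and $f^{*}$ — that is, under the identification of $\operatorname{Pic}$ with the group of Weil divisor classes on a smooth surface, $\pi_{*}$ and $g_{*}$ are the cycle-theoretic pushforwards — and, dually, that $h^{*}D$ in (i) is understood as the pullback Cartier divisor, so that $h^{*}\mathcal{O}_{W}(D) = \mathcal{O}_{Z}(h^{*}D)$. Both compatibilities are standard for smooth surfaces, so once they are noted the proof is immediate.
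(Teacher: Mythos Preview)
Your argument is correct. Note, however, that the paper does not actually supply a proof of this proposition: it is listed in Appendix~\ref{sec:surf} among the basic facts on algebraic surfaces for which, as the paper explicitly says, ``we shall not give the proofs of propositions since these can be found in many textbooks.'' So there is nothing to compare against beyond the observation that your reduction to the two elementary facts about birational morphisms---pullback preserves effectiveness via global sections, pushforward preserves effectiveness via the factorization into blow-ups---is precisely the standard textbook argument the paper is implicitly invoking. The parenthetical about $\pi_{*}\pi^{*}=\operatorname{id}$ and $g_{*}g^{*}=\operatorname{id}$ is unnecessary for the proof and could be dropped; otherwise the write-up is fine.
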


\begin{remark}
If $f \colon Z \to Y$ and $g \colon Y \to X$ are birational morphisms, then
\[
	(g \circ f)_{*} = g_{*} f_{*} \quad
	\text{ and } \quad
	(g \circ f)^{*} = f^{*} g^{*}.
\]
However, these relations do {\em not} hold if $f, g$ are simply birational maps.
\end{remark}

\begin{definition}[degree]\label{app:p2degree}
Using the homogeneous coordinate $(z_1 : z_2 : z_3) \in \mathbb{P}^2$, a birational automorphism $\varphi$ on $\mathbb{P}^2$ can be written as
\[
	\varphi(z_1 : z_2 : z_3) = (\varphi_1(z_1, z_2, z_3) : \varphi_2(z_1, z_2, z_3) : \varphi_3(z_1, z_2, z_3)),
\]
where $\varphi_1, \varphi_2, \varphi_3$ are homogeneous polynomials of $z_1, z_2, z_3$ of the same degree with no common factors.
The {\em degree} of $\varphi$ (as a birational automorphism on $\mathbb{P}^2$) is defined by $\deg \varphi_i$.

It is known that $\deg \varphi$ can be calculated as follows:
\[
	\deg \varphi = 
	(\varphi_{*}\mathcal{O}_{\mathbb{P}^2}(1)) \cdot \mathcal{O}_{\mathbb{P}^2}(1) = 
	(\varphi^{*}\mathcal{O}_{\mathbb{P}^2}(1)) \cdot \mathcal{O}_{\mathbb{P}^2}(1).
\]
\end{definition}

\begin{definition}[basic rational surface]\label{app:rational}
A rational surface that admits a birational morphism to $\mathbb{P}^2$ is called a {\em basic rational surface}.
This means that a basic rational surface can always be obtained by a finite number of blow-ups of $\mathbb{P}^2$.
\end{definition}

\begin{definition}[geometric basis \cite{dol1}]\label{app:geometricbasis}
Let $X$ be a basic rational surface and $(e^{(0)}, \ldots, e^{(r)})$ be a $\mathbb{Z}$-basis on $\operatorname{Pic} X$.
Then $(e^{(0)}, \ldots, e^{(r)})$ is said to be a {\em geometric basis} if there is a composition of blow-ups $\pi = \pi^{(1)} \circ \cdots \circ \pi^{(r)} \colon X \to \mathbb{P}^2$ such that $e^{(0)} = \pi^{*}\mathcal{O}_{\mathbb{P}^2}(1)$ and $e^{(i)}$ is the total transform of the class of the exceptional curve of $\pi^{(i)}$ for $i = 1, \ldots, r$.

Note that $e^{(i)} - e^{(j)}$ ($i, j > 0, i \ne j$) is effective if and only if $i < j$ and the center of the $j$-th blow-up is infinitely near that of the $i$-th blow-up.

For any geometric basis $(e^{(0)}, \ldots, e^{(r)})$ we have
\[
	K_X = - 3 e^{(0)} + e^{(1)} + \cdots + e^{(r)}, \quad
	e^{(i)} \cdot e^{(j)} = 
	\begin{cases}
		1 & (i = j = 0) \\
		-1 & (i = j \ne 0) \\
		0 & (i \ne j).
	\end{cases}
\]

A birational morphism to $\mathbb{P}^2$ is determined only by its geometric basis up to automorphism on $\mathbb{P}^2$, i.e.\,if two birational morphisms $\pi, \pi' \colon X \to \mathbb{P}^2$ give the same geometric basis on $\operatorname{Pic} X$, then there exists $f \in \operatorname{PGL}(3)$ such that $\pi' = f \circ \pi$.
In fact, these birational morphisms are determined only by $e^{(0)}$, since the set $\{ e^{(1)}, \ldots, e^{(r)} \}$ is determined by
\[
	\left\{
	e^{(1)}, \ldots, e^{(r)} \right\} =
	\left\{ F \in \operatorname{Pic} X \, | \, F^2 = -1, F \cdot e^{(0)} = 0, F \colon \text{effective}
	\right\}.
\]
\end{definition}

\begin{proposition}\label{app:irrednega}
Let $C_1, \ldots, C_m$ be irreducible curves in a surface $X$ such that the matrix $(C_i \cdot C_j)_{ij}$ is negative definite.
Then, for all nonnegative integers $a_1, \ldots, a_m$, we have
\[
	h^0 \left( \sum_i a_i C_i \right) = 1.
\]
In particular, if an irreducible curve $C$ has a negative self-intersection, then
\[
	h^0(m C) = 1
\]
for $m \ge 0$ and thus the class $[C]$ cannot be written as a nontrivial sum of effective classes.
\end{proposition}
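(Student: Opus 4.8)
The plan is to show that the complete linear system $|D|$ of $D := \sum_i a_i C_i$ consists of the single divisor $D$; since $D$ is effective, $|D|$ is nonempty, so this yields $h^0(D) = \dim|D| + 1 = 1$. The only ingredients needed are the negative definiteness of the Gram matrix $M := (C_i \cdot C_j)_{ij}$ together with two elementary facts: that two effective divisors with no common component meet nonnegatively, and that the only effective divisor linearly equivalent to $0$ on a projective (connected) surface is $0$ itself (because $h^0(\mathcal{O}_X) = 1$).

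First I would take an arbitrary effective $D' \in |D|$ and write $D' = \sum_i m_i C_i + E$ with integers $m_i \ge 0$ and $E \ge 0$ an effective divisor none of whose prime components lies among $C_1, \dots, C_m$. Subtracting $\min(D, D') = \sum_i \min(a_i, m_i) C_i$ from the linear equivalence $D \sim D'$ produces $A \sim B$, where $A := \sum_i \max(a_i - m_i, 0)\, C_i$ and $B := \sum_i \max(m_i - a_i, 0)\, C_i + E$ are both effective and --- the key point --- no $C_j$ lies in the support of both $A$ and $B$.

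Next I would argue $A = 0$. If not, negative definiteness of $M$ gives $A^2 < 0$, so in the identity $A^2 = \sum_j \max(a_j - m_j, 0)\,(A \cdot C_j)$ some summand must be negative; hence there is an index $j$ with $a_j > m_j$ and $A \cdot C_j < 0$. But $a_j > m_j$ forces $C_j \notin \operatorname{Supp} B$, so $B \cdot C_j \ge 0$, contradicting $A \cdot C_j = B \cdot C_j$. Therefore $A = 0$, i.e.\ $m_i \ge a_i$ for every $i$, whence $D' - D \ge 0$; being linearly equivalent to $0$ it must vanish, so $D' = D$. This proves $|D| = \{D\}$ and $h^0(D) = 1$. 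The ``in particular'' clause is the case $m = 1$, $C_1 = C$, where the $1 \times 1$ matrix $(C^2)$ is negative definite exactly when $C^2 < 0$; and if $[C] = [F_1] + [F_2]$ with $F_1, F_2$ effective, then $F_1 + F_2$ is an effective member of $|C| = \{C\}$, so $F_1 + F_2 = C$ as divisors, and since $C$ is prime one of $F_1, F_2$ must be $0$.

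I expect the only place demanding care to be the bookkeeping in the second step: one must verify that the coefficients of $A$ are genuinely nonnegative, that $A$ and $B$ have disjoint support along the $C_i$ (this is what makes the sign clash $A \cdot C_j < 0 \le B \cdot C_j$ bite), and that $\min(D, D')$ really is supported only on the $C_i$. Everything else --- the passage from $A^2 < 0$ to the existence of a ``bad'' index $j$, and the vanishing of an effective divisor equivalent to $0$ --- is routine. An equivalent alternative would be to fix the tautological section $s_0$ of $\mathcal{O}_X(D)$ with $\operatorname{div}(s_0) = D$ and show that any section $s$ vanishes along each $C_i$ to order $\ge a_i$, so that $s/s_0$ is a global regular function and hence constant; this rests on the same intersection estimate and is no shorter.
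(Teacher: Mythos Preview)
The paper does not actually prove this proposition: it sits in Appendix~\ref{sec:surf} among a list of standard facts on algebraic surfaces, prefaced by the remark that ``we shall not give the proofs of propositions since these can be found in many textbooks.'' Your argument is correct and is the standard one (a variant of Zariski's lemma): the subtraction of $\min(D,D')$, the use of negative definiteness to find an index $j$ with $A\cdot C_j<0$ and $C_j\notin\operatorname{Supp} B$, and the conclusion $D'-D\ge 0$, $D'-D\sim 0\Rightarrow D'=D$ are all fine, as is the deduction of the ``in particular'' clause.
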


\begin{theorem}[Riemann-Roch]\label{app:riemannroch}
Let $F \in \operatorname{Pic} X$.
Then
\[
	h^0(F) - h^1(F) + h^2(F) = \chi(\mathcal{O}_X) + \frac{1}{2} F \cdot (F - K_X),
\]
where we denote by $\chi(\mathcal{O}_X)$ the Euler-Poincar\'{e} characteristic.
Since $h^1(F)$ is not negative, we have
\[
		h^0(F) + h^2(F) \ge \chi(\mathcal{O}_X) + \frac{1}{2} F \cdot (F - K_X),
\]
which is called the {\em Riemann-Roch inequality}.
If $X$ is rational, then $\chi(\mathcal{O}_X) = 1$.
Therefore we have
\[
	h^0(F) - h^1(F) + h^2(F) = 1 + \frac{1}{2} F \cdot (F - K_X)
\]
and
\[
	h^0(F) + h^2(F) \ge 1 + \frac{1}{2} F \cdot (F - K_X).
\]
\end{theorem}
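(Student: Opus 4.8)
The final statement is the classical Riemann-Roch theorem for surfaces; the plan is to give the standard d\'{e}vissage proof, reducing the surface case to the Riemann-Roch theorem on curves. Throughout I write $\chi(F) = h^0(F) - h^1(F) + h^2(F)$ for $F \in \operatorname{Pic} X$ (all higher cohomology vanishes on a surface) and set
\[
	P(F) = \chi(\mathcal{O}_X) + \frac{1}{2} F \cdot (F - K_X).
\]
The whole content is the equality $\chi(F) = P(F)$: once this is established, the Riemann-Roch inequality is immediate from $h^1(F) \ge 0$, and the rational specialization follows from $\chi(\mathcal{O}_X) = h^0(\mathcal{O}_X) - h^1(\mathcal{O}_X) + h^2(\mathcal{O}_X) = 1$, since for a rational surface $h^0(\mathcal{O}_X) = 1$ while $h^1(\mathcal{O}_X)$ and $h^2(\mathcal{O}_X)$ are birational invariants that vanish on $\mathbb{P}^2$.

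First I would record that $\chi$ is additive on short exact sequences of coherent sheaves: the long exact cohomology sequence of $0 \to \mathcal{F}' \to \mathcal{F} \to \mathcal{F}'' \to 0$ gives $\chi(\mathcal{F}) = \chi(\mathcal{F}') + \chi(\mathcal{F}'')$ via the alternating-sum identity. Applying this to the sequence
\[
	0 \to \mathcal{O}_X(F - C) \to \mathcal{O}_X(F) \to \mathcal{O}_C(F|_C) \to 0
\]
attached to an irreducible curve $C \subset X$ produces the recursion $\chi(\mathcal{O}_X(F)) - \chi(\mathcal{O}_X(F - C)) = \chi(\mathcal{O}_C(F|_C))$. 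The next step is to evaluate the right-hand side by Riemann-Roch on the curve $C$, namely $\chi(\mathcal{O}_C(F|_C)) = F \cdot C + 1 - g_a(C)$, combined with the adjunction formula $C \cdot (C + K_X) = 2 g_a(C) - 2$. A short expansion then shows that $P$ obeys exactly the same recursion,
\[
	P(F) - P(F - C) = F \cdot C - \frac{1}{2} C \cdot (C + K_X) = F \cdot C + 1 - g_a(C),
\]
so that $\chi - P$ is unchanged when one subtracts any irreducible curve class.

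The proof would then conclude by d\'{e}vissage. Every class in $\operatorname{Pic} X$ is linearly equivalent to a difference $H - H'$ of very ample divisors, and each of $H, H'$ is in turn linearly equivalent to a sum of irreducible curves; hence any $F$ can be connected to $0$ by a finite chain of moves $F \mapsto F \pm C$ with $C$ irreducible. Since $\chi$ and $P$ agree at $F = 0$ (both equal $\chi(\mathcal{O}_X)$) and change in the same way along every such move, they coincide for all $F$, which is the desired equality.

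The main obstacle is conceptual rather than computational: the argument is not self-contained but rests on two substantial prerequisites, the Riemann-Roch theorem for (possibly singular) curves and the adjunction formula, which must be in hand before the surface bookkeeping goes through. Granting these, the only delicate point is the d\'{e}vissage, where one must ensure the chain of curve-subtractions connecting $F$ to $0$ is legitimate; this is automatic once both $\chi$ and $P$ are known to be genuine functions obeying the same one-step recursion. An alternative route is the Hirzebruch-Riemann-Roch theorem, which yields $\chi(F) = \int_X \operatorname{ch}(\mathcal{O}_X(F)) \operatorname{td}(X)$ and hence the formula in a single computation, at the cost of invoking the full machinery of characteristic classes.
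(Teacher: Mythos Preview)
Your proof is the standard d\'{e}vissage argument for Riemann--Roch on surfaces and is correct as sketched. However, there is nothing to compare: the paper does not prove this theorem. It is listed in Appendix~\ref{sec:surf} among ``some basic results on algebraic surfaces,'' and the appendix explicitly states that proofs are omitted and can be found in standard textbooks such as Hartshorne, Beauville, or B\u{a}descu. Your argument is essentially the one in those references, so it is entirely appropriate, but you should be aware that the paper treats this as background rather than something to be established.
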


\begin{theorem}[Serre duality]\label{app:serre}
Let $F \in \operatorname{Pic} X$.
Then
\[
	h^i(F) = h^{2-i}(K_X - F)
\]
for $i = 0, 1, 2$.
\end{theorem}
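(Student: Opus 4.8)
The plan is to obtain the stated numerical identity as the dimension count of the general Serre duality isomorphism, proved by Hodge theory. Since $X$ is a smooth projective variety over $\mathbb{C}$, hence a compact complex manifold of dimension $2$, write $L = \mathcal{O}_X(F)$ for the line bundle of $F$, so that $L^{-1} = \mathcal{O}_X(-F)$ and $\mathcal{O}_X(K_X) = \Omega^2_X$ is the canonical bundle; the target rewrites as $H^{2-i}(X, \mathcal{O}_X(\Omega^2_X \otimes L^{-1}))$. First I would invoke Dolbeault's theorem to identify each coherent cohomology group $H^q(X, \mathcal{O}_X(L))$ with the $\bar{\partial}$-cohomology $H^{0,q}_{\bar{\partial}}(X, L)$ of smooth $(0,q)$-forms valued in $L$, and likewise $H^{2-i}(X, \mathcal{O}_X(\Omega^2_X \otimes L^{-1}))$ with $H^{2,2-i}_{\bar{\partial}}(X, L^{-1})$, using that a $(2,q)$-form is a $(0,q)$-form with values in $\Omega^2_X$. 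The whole statement then reduces to proving that these two Dolbeault groups are dual.

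The next step is to construct an explicit pairing. Given $\alpha \in A^{0,i}(L)$ and $\beta \in A^{2,2-i}(L^{-1})$, the wedge product together with the contraction $L \otimes L^{-1} \to \mathcal{O}_X$ produces a scalar $(2,2)$-form $\alpha \wedge \beta$, which is of top degree on a surface, and integration defines
\[
	\langle \alpha, \beta \rangle = \int_X \alpha \wedge \beta.
\]
If $\alpha = \bar{\partial}\gamma$ and $\bar{\partial}\beta = 0$, then by the Leibniz rule $\alpha \wedge \beta = \bar{\partial}(\gamma \wedge \beta)$; since $\gamma \wedge \beta$ has holomorphic bidegree $2$, its $\partial$-derivative is a $(3,1)$-form and hence vanishes, so $\bar{\partial}$ agrees with $d$ on it and Stokes' theorem gives $\langle \alpha, \beta \rangle = 0$. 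Thus the pairing descends to the $\bar{\partial}$-cohomology groups, and under the Dolbeault identification it is exactly the Serre pairing between $H^i(X, \mathcal{O}_X(F))$ and $H^{2-i}(X, \mathcal{O}_X(K_X - F))$.

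To prove nondegeneracy I would fix Hermitian metrics on $X$ and on $L$, form the $\bar{\partial}$-Laplacian $\Delta_{\bar{\partial}} = \bar{\partial}\,\bar{\partial}^{*} + \bar{\partial}^{*}\bar{\partial}$, and apply the Hodge theorem for the Dolbeault complex: $\Delta_{\bar{\partial}}$ is elliptic, each class in $H^{0,q}_{\bar{\partial}}(X, L)$ has a unique harmonic representative, the harmonic spaces $\mathcal{H}^{0,q}(L)$ are finite-dimensional, and $H^{0,q}_{\bar{\partial}}(X, L) \cong \mathcal{H}^{0,q}(L)$. The conjugate-linear Hodge star $\bar{*} \colon A^{0,i}(L) \to A^{2,2-i}(L^{-1})$ built from these metrics carries $\mathcal{H}^{0,i}(L)$ isomorphically onto $\mathcal{H}^{2,2-i}(L^{-1})$ and satisfies $\langle \alpha, \bar{*}\alpha \rangle = \lVert \alpha \rVert^2 > 0$ for every nonzero harmonic $\alpha$. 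This exhibits the pairing as perfect on harmonic representatives, hence nondegenerate on cohomology, so $H^i(X, \mathcal{O}_X(F))$ and $H^{2-i}(X, \mathcal{O}_X(K_X - F))$ are mutually dual. Taking dimensions yields $h^i(F) = h^{2-i}(K_X - F)$.

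The genuine obstacle is the analytic input of the third paragraph, namely the Hodge decomposition for the $\bar{\partial}$-Laplacian: ellipticity, elliptic regularity, finite-dimensionality of the kernel, and the orthogonal $L^2$-decomposition of forms all rest on the theory of elliptic operators on compact manifolds (existence of the Green's operator), which is the one point where the argument leaves the realm of formal algebra. A purely algebraic alternative would instead use Grothendieck's dualizing sheaf: embed $X$ in some $\mathbb{P}^N$, establish duality on $\mathbb{P}^N$ where $\omega_{\mathbb{P}^N} = \mathcal{O}(-N-1)$ is computed directly, transport it along the embedding through local $\mathcal{E}xt$ sheaves, and finally identify the dualizing sheaf with $\mathcal{O}_X(K_X)$ via the fundamental local isomorphism for smooth $X$; there the main difficulty shifts to the construction and computation of those $\mathcal{E}xt$ sheaves.
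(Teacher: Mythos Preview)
Your outline is a correct and standard proof of Serre duality via Dolbeault cohomology and Hodge theory, with the analytic crux (the Hodge decomposition for the $\bar\partial$-Laplacian on a compact complex manifold) honestly flagged. There is nothing to compare: the paper does not prove this theorem. It appears in the appendix on algebraic surfaces, whose opening sentence explicitly says that the proofs are omitted because they can be found in standard textbooks such as Hartshorne, Beauville, or B\u{a}descu. Serre duality is quoted there purely as background, to be used in the Riemann--Roch estimates of \S\ref{sec:minimalization}.
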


\begin{theorem}[genus formula]\label{app:genus}
Let $C \subset X$ be an irreducible curve.
Then
\[
	g_a(C) = 1 + \frac{1}{2} C \cdot (C + K_X),
\]
where $g_a(C) = h^{1}(C, \mathcal{O}_C)$ is the arithmetic genus of $C$.
\end{theorem}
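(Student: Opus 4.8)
The plan is to deduce the genus formula from the Riemann--Roch theorem on the surface (Theorem~\ref{app:riemannroch}) together with the structure-sheaf exact sequence attached to $C$. Since a surface in our sense is smooth and $C$ is a curve, $C$ is an effective Cartier divisor, so its ideal sheaf is the invertible sheaf $\mathcal{O}_X(-C)$ and we have the short exact sequence
\[
	0 \to \mathcal{O}_X(-C) \to \mathcal{O}_X \to \mathcal{O}_C \to 0.
\]
First I would pass to Euler characteristics. Since $\chi$ is additive on short exact sequences,
\[
	\chi(\mathcal{O}_C) = \chi(\mathcal{O}_X) - \chi(\mathcal{O}_X(-C)).
\]
The point of this reduction is that it expresses a cohomological invariant of the (possibly singular) curve $C$ purely in terms of line bundles on the ambient surface $X$, where Riemann--Roch is available.

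Next I would evaluate the right-hand side. Applying Theorem~\ref{app:riemannroch} with $F = -C$ gives
\[
	\chi(\mathcal{O}_X(-C)) = \chi(\mathcal{O}_X) + \frac{1}{2}(-C) \cdot (-C - K_X) = \chi(\mathcal{O}_X) + \frac{1}{2} C \cdot (C + K_X),
\]
so that $\chi(\mathcal{O}_X)$ cancels and
\[
	\chi(\mathcal{O}_C) = -\frac{1}{2} C \cdot (C + K_X).
\]
On the other hand, by definition $\chi(\mathcal{O}_C) = h^0(C, \mathcal{O}_C) - h^1(C, \mathcal{O}_C) = h^0(C, \mathcal{O}_C) - g_a(C)$. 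Combining the two expressions for $\chi(\mathcal{O}_C)$ yields $g_a(C) = h^0(C, \mathcal{O}_C) + \frac{1}{2} C \cdot (C + K_X)$, and the formula follows once we show $h^0(C, \mathcal{O}_C) = 1$.

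The one genuinely nontrivial input is precisely this identification $h^0(C, \mathcal{O}_C) = 1$, which is where the hypothesis that $C$ is irreducible enters. I would argue that an irreducible curve is a connected, reduced projective scheme over $\mathbb{C}$, so its only global regular functions are the constants and hence $H^0(C, \mathcal{O}_C) \cong \mathbb{C}$. (For a smooth $C$ one could alternatively invoke the adjunction isomorphism $K_C \cong (K_X + C)|_C$ and take degrees, but the Riemann--Roch route has the advantage of applying verbatim to singular irreducible curves, which is the generality the statement demands.) Substituting $h^0(C, \mathcal{O}_C) = 1$ into the displayed identity gives exactly $g_a(C) = 1 + \frac{1}{2} C \cdot (C + K_X)$, completing the proof.
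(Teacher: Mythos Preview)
Your argument is correct and is precisely the standard textbook derivation of the genus formula (via the ideal-sheaf sequence and Riemann--Roch on the surface). Note, however, that the paper does not actually give a proof of this statement: it is listed in Appendix~\ref{sec:surf} among the basic results on algebraic surfaces that the author explicitly cites without proof, referring instead to \cite{hartshorne, beauville, badescu}. So there is nothing in the paper to compare against; your proof is exactly the one those references supply.

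One small clarification worth making explicit: when you assert $h^0(C,\mathcal{O}_C)=1$ you use that $C$ is reduced, not merely irreducible as a scheme. This is fine in context, since throughout the paper an ``irreducible curve'' on a surface means a prime divisor, hence an integral (in particular reduced) closed subscheme; but it is the step where that convention is actually used.
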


\begin{proposition}\label{app:p1arith}
Let $C$ be a (possibly singular) irreducible curve.
Then $g_a(C) = 0$ if and only if $C$ is isomorphic to $\mathbb{P}^1$.
\end{proposition}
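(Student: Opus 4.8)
The plan is to prove the two implications separately. The forward direction, that $C \cong \mathbb{P}^1$ implies $g_a(C) = 0$, is a direct cohomology computation: by definition $g_a(C) = h^1(C, \mathcal{O}_C)$, and $H^1(\mathbb{P}^1, \mathcal{O}_{\mathbb{P}^1}) = 0$, which one reads off from the standard computation of the cohomology of $\mathbb{P}^1$ (or from Serre duality, since $\deg K_{\mathbb{P}^1} = -2 < 0$). Hence $g_a(\mathbb{P}^1) = 0$, and this direction is immediate.

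The substance is the converse. First I would pass to the normalization $\nu \colon \widetilde{C} \to C$, where $\widetilde{C}$ is a smooth irreducible projective curve. The comparison between $C$ and $\widetilde{C}$ is governed by the short exact sequence of sheaves on $C$
\[
	0 \to \mathcal{O}_C \to \nu_{*} \mathcal{O}_{\widetilde{C}} \to \mathcal{F} \to 0,
\]
where $\mathcal{F} = \nu_{*} \mathcal{O}_{\widetilde{C}} / \mathcal{O}_C$ is a skyscraper sheaf supported on the finite singular locus of $C$; its length $\delta = \dim_{\mathbb{C}} H^0(C, \mathcal{F})$ is the total delta invariant and satisfies $\delta \ge 0$, with $\delta = 0$ exactly when $C$ is normal. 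Since $\nu$ is finite, $H^i(C, \nu_{*} \mathcal{O}_{\widetilde{C}}) = H^i(\widetilde{C}, \mathcal{O}_{\widetilde{C}})$, and since $\mathcal{F}$ is a skyscraper sheaf $H^1(C, \mathcal{F}) = 0$.

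Taking the long exact cohomology sequence and using that $C$ and $\widetilde{C}$ are both irreducible and projective, so that $H^0(C, \mathcal{O}_C) = H^0(\widetilde{C}, \mathcal{O}_{\widetilde{C}}) = \mathbb{C}$ and the leftmost map is an isomorphism, I obtain
\[
	0 \to H^0(C, \mathcal{F}) \to H^1(C, \mathcal{O}_C) \to H^1(\widetilde{C}, \mathcal{O}_{\widetilde{C}}) \to 0,
\]
which yields the genus relation $g_a(C) = \delta + g(\widetilde{C})$, where $g(\widetilde{C}) = h^1(\widetilde{C}, \mathcal{O}_{\widetilde{C}})$ is the (geometric) genus of the smooth curve $\widetilde{C}$. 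Now the hypothesis $g_a(C) = 0$ forces both nonnegative summands to vanish: $\delta = 0$ shows that $C$ is normal, hence smooth (a normal curve has no singular points), so $\nu$ is an isomorphism and $C = \widetilde{C}$; while $g(\widetilde{C}) = 0$ shows that $\widetilde{C}$ is a smooth projective curve of genus zero.

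It remains to invoke the classical fact that a smooth projective curve of genus $0$ over $\mathbb{C}$ is isomorphic to $\mathbb{P}^1$; together with $C = \widetilde{C}$ this gives $C \cong \mathbb{P}^1$ and completes the proof. I expect the main obstacle to be the bookkeeping in the second paragraph --- setting up the normalization sequence correctly and confirming the vanishing $H^i(C, \nu_{*}\mathcal{O}_{\widetilde{C}}) = H^i(\widetilde{C}, \mathcal{O}_{\widetilde{C}})$ together with $H^1(C, \mathcal{F}) = 0$ --- rather than the final appeal to genus-zero curves, which is entirely standard. The genus relation $g_a(C) = \delta + g(\widetilde{C})$ is the crux: it cleanly separates the singularity contribution $\delta$ from the genus contribution $g(\widetilde{C})$, and makes transparent why arithmetic genus zero is so restrictive.
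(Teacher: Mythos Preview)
Your proof is correct. The paper itself does not prove this proposition: it is placed in an appendix of background results on algebraic surfaces, with the blanket disclaimer that proofs ``can be found in many textbooks such as \cite{hartshorne, beauville, badescu}.'' So there is no proof in the paper to compare against; your argument via the normalization exact sequence and the genus relation $g_a(C) = \delta + g(\widetilde{C})$ is the standard textbook route, and it is complete as written. The only implicit assumption worth flagging is that $C$ is projective (needed for $H^0(C,\mathcal{O}_C) = \mathbb{C}$ and for the classification of smooth genus-zero curves), which in the paper's context---curves on a smooth projective surface over $\mathbb{C}$---is automatic.
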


\begin{definition}[exceptional curve of the first kind]\label{app:exceptionalcurve}
An irreducible curve $C \subset X$ is called an {\em exceptional curve of the first kind} if $C$ is isomorphic to $\mathbb{P}^1$ and $C^2 = -1$.
The genus formula and Proposition~\ref{app:p1arith} imply that these conditions are equivalent to $C^2 = C \cdot K_X = -1$.
\end{definition}

\begin{theorem}[Castelnuovo's contraction theorem]\label{app:castelnuovo}
Let $X$ be a surface and $C \subset X$ an exceptional curve of the first kind.
Then there exist a surface $X'$ and a birational morphism $\pi \colon X \to X'$ such that $\pi(C)$ is a point in $X'$ and $\pi$ is an isomorphism from $X \setminus C$ to $X' \setminus \pi(C)$.
This procedure is called a {\em blow-down}.
In other words, we can contract an exceptional curve of the first kind by a blow-down.
\end{theorem}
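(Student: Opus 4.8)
The plan is to realize the blow-down explicitly as the morphism to projective space defined by a suitable linear system on $X$. I would first choose a very ample divisor $H$ on $X$ and, after replacing $H$ by a large multiple (Serre vanishing), also arrange $H^1(X,\mathcal{O}_X(H))=0$. Put $k=H\cdot C>0$ and $M=H+kC$, so that $M\cdot C=k+kC^2=0$. The aim is to show that $|M|$ is base-point-free and that the associated morphism $\varphi=\varphi_{|M|}\colon X\to\mathbb{P}^N$ contracts $C$ to a single point while restricting to a closed immersion on $X\setminus C$.

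The essential cohomological input is the vanishing $H^1(X,\mathcal{O}_X(H+iC))=0$ for $0\le i\le k$, which I would establish by induction on $i$ from the restriction sequences
\[
0\to\mathcal{O}_X(H+(i-1)C)\to\mathcal{O}_X(H+iC)\to\mathcal{O}_C(H+iC)\to 0,
\]
using $C\cong\mathbb{P}^1$ and $\deg\mathcal{O}_C(H+iC)=k-i\ge 0$, whence $H^1(C,\mathcal{O}_C(H+iC))=0$. For $i=k$ this makes the restriction $H^0(X,\mathcal{O}_X(M))\to H^0(C,\mathcal{O}_C(M))=\mathbb{C}$ surjective, so $M$ has a section nowhere vanishing along $C$; combined with the very ampleness of $H$ on $X\setminus C$ (any $D_0\in|H|$ yields $D_0+kC\in|M|$, which misses a prescribed point off $C$ and contains $C$ itself), this gives base-point-freeness, and the same kind of argument shows $\varphi$ separates points and tangent vectors of $X\setminus C$ and maps all of them away from $\varphi(C)$. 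Since $M|_C$ is trivial, $\varphi$ collapses $C$ to a point $p$. Setting $X'=\varphi(X)$, the morphism $\varphi\colon X\to X'$ is birational and induces an isomorphism $X\setminus C\to X'\setminus\{p\}$.

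It remains to show that $X'$ is a smooth surface at $p$, and this is the step that genuinely uses $C^2=-1$; I expect it to be the main obstacle. First, after replacing $X'$ by the Stein factorization of $\varphi$, Zariski's main theorem gives $\varphi_*\mathcal{O}_X=\mathcal{O}_{X'}$, so $X'$ is normal. To obtain regularity of $\mathcal{O}_{X',p}$ I would analyze the push-forwards of the ideal sheaf $\mathcal{O}_X(-C)$ and of its square: since $C$ is reduced, $\varphi_*\mathcal{O}_X(-C)$ coincides near $p$ with the ideal sheaf $\mathfrak{m}_p$ of the point $p$, and from the sequence $0\to\mathcal{O}_X(-2C)\to\mathcal{O}_X(-C)\to\mathcal{O}_C(-C)\to 0$ together with the vanishing $R^1\varphi_*\mathcal{O}_X(-2C)=0$ one is led to
\[
\mathfrak{m}_p/\mathfrak{m}_p^2\;\cong\;H^0(C,\mathcal{O}_C(-C))\;\cong\;H^0(\mathbb{P}^1,\mathcal{O}(1)),
\]
which is $2$-dimensional precisely because $\deg\mathcal{O}_C(-C)=-C^2=1$. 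Hence $\dim_{\mathbb{C}}\mathfrak{m}_p/\mathfrak{m}_p^2=2=\dim\mathcal{O}_{X',p}$, so $X'$ is smooth at $p$; equivalently, $\varphi$ exhibits $X$ as the blow-up of $X'$ at $p$. The delicate point is exactly this local computation — correctly matching $\varphi_*$ of the powers of $\mathcal{O}_X(-C)$ with the powers of $\mathfrak{m}_p$ and verifying the vanishing of the relevant higher direct image — whereas building $\varphi$ from $|H+kC|$ is routine once the vanishing $H^1(X,\mathcal{O}_X(H+iC))=0$ is in hand.
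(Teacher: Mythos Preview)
Your proof is the standard textbook argument (essentially Hartshorne, Theorem~V.5.7, or Beauville, Theorem~II.17), and it is correct as outlined; the only place requiring genuine care is, as you say, the local smoothness computation at $p$, and your sketch via $\mathfrak{m}_p/\mathfrak{m}_p^2\cong H^0(C,\mathcal{O}_C(-C))$ is one of the recognized ways to carry it out.

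However, there is nothing to compare against: the paper does not prove Castelnuovo's contraction theorem. The statement appears in Appendix~\ref{sec:surf}, whose opening sentence explicitly says ``We shall not give the proofs of propositions since these can be found in many textbooks such as \cite{hartshorne, beauville, badescu}.'' The theorem is quoted as background, not established. So your proposal is not an alternative to the paper's proof but rather a reconstruction of the classical argument the paper is citing.
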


\begin{definition}[nef]\label{app:nefclosed}
A divisor $D$ on a surface $X$ is {\em nef} if it satisfies
\[
	C \cdot D \ge 0
\]
for every irreducible curve $C \subset X$.
A class $F \in \operatorname{Pic} X$ (or $F \in \operatorname{Pic}_{\mathbb{R}} X$) is said to be nef if it satisfies
\[
	C \cdot F \ge 0
\]
for every irreducible curve $C \subset X$.

It is known that the self-intersection of a nef class is always nonnegative.
It is also known that $K_X$ cannot be nef for a rational surface $X$.

The set
\[
	\{
		F \in \operatorname{Pic}_{\mathbb{R}} X \, | \, F \colon \text{ nef}
	\}
\]
is a closed convex cone in $\operatorname{Pic}_{\mathbb{R}} X$, i.e.
\begin{itemize}
\item
if $F, F'$ are nef, then so is $F + F'$,
\item
if $F$ is nef and $a > 0$, then $a F$ is also nef,
\item
the above set is a closed set in $\operatorname{Pic}_{\mathbb{R}} X$.
\end{itemize}

The set of all nef classes in $\operatorname{Pic}_{\mathbb{R}} X$ is called the {\em nef cone} of $X$.
\end{definition}

\begin{theorem}[Nagata \cite{nagata}]\label{app:nagata}
If a rational surface has infinitely many exceptional curves of the first kind, then it is a basic rational surface.
\end{theorem}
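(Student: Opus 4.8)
The plan is to prove the contrapositive: if a rational surface $X$ is \emph{not} basic, then $X$ has only finitely many exceptional curves of the first kind. So assume $X$ is rational and non-basic, and suppose toward a contradiction that $X$ carries infinitely many pairwise distinct exceptional curves of the first kind $E^{(1)}, E^{(2)}, \dots$. Since $\operatorname{Pic} X$ is a lattice of finite rank, their classes form a discrete infinite set, so after passing to a subsequence we may assume $\|E^{(j)}\| \to \infty$ and $E^{(j)}/\|E^{(j)}\| \to v$ for some $v \in \operatorname{Pic}_{\mathbb{R}} X$ with $\|v\| = 1$ (fixing any norm). By continuity of the intersection form and $E^{(j)} \cdot E^{(j)} = E^{(j)} \cdot K_X = -1$ we get $v^2 = 0$ and $v \cdot K_X = 0$; and since two distinct irreducible curves meet non-negatively, $v \cdot C = \lim_i (E^{(i)} \cdot C)/\|E^{(i)}\| \ge 0$ for every irreducible curve $C \subset X$, so $v$ is a nonzero nef class (Definition~\ref{app:nefclosed}) which is isotropic and orthogonal to $K_X$. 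Note already that the Hodge index theorem (Theorem~\ref{app:hodge}) forces $\rho(X) \ge 10$: if $\rho(X) < 10$ then $K_X^2 = 10 - \rho(X) > 0$, so $K_X$ spans the positive part of the intersection form, $K_X^{\perp}$ is negative definite, and the isotropic class $v \in K_X^{\perp}$ would have to vanish.

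If $\rho(X) = 10$ the contradiction is immediate. Then $K_X^2 = 0$, so $K_X$ and $v$ are orthogonal isotropic classes and hence linearly dependent by Lemma~\ref{lem:lem1}; writing $v = \alpha K_X$ and using that $v$ is nef while $K_X$ is not nef on a rational surface (Definition~\ref{app:nefclosed}) gives $\alpha < 0$, so $-K_X$ is a positive multiple of $v$ and therefore nef. By Lemma~\ref{lem:nefhalphen}, $X$ is a generalized Halphen surface, and every generalized Halphen surface is basic (\cite{sakai}, Proposition~2) --- contradicting our assumption.

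The essential case is $\rho(X) > 10$, where $K_X^2 < 0$. Here I would first fix a birational morphism $\pi\colon X \to S$ to a minimal rational surface, obtained by successively contracting exceptional curves of the first kind (Theorem~\ref{app:castelnuovo}); by the Enriques classification, non-basicness of $X$ rules out $S = \mathbb{P}^2$, and $S = \mathbb{P}^1 \times \mathbb{P}^1$ is impossible too, since that surface has no $(-1)$-curve while $X$ does and a single blow-up of it is two blow-ups of $\mathbb{P}^2$; so $S = \mathbb{F}_n$ with $n \ge 2$, and the strict transform $\Gamma \subset X$ of the negative section is a smooth rational curve with $\Gamma^2 \le -n \le -2$, so $\Gamma \cdot K_X \ge 0$ by the genus formula (Theorem~\ref{app:genus}) and $-K_X$ is not nef. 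Next, since $v$ is nef and nonzero we have $v \cdot A > 0$ for an ample $A$, hence $(K_X - \ell v) \cdot A < 0$ for $\ell \gg 0$; by Serre duality (Theorem~\ref{app:serre}), $h^2(\ell v) = h^0(K_X - \ell v) = 0$, and the Riemann--Roch inequality (Theorem~\ref{app:riemannroch}) then gives $h^0(\ell v) \ge 1 + \tfrac12\, \ell v \cdot (\ell v - K_X) = 1$. Thus some multiple of $v$ is effective, say $\ell v \sim \sum_i a_i D_i$ with $a_i > 0$ and $D_i$ irreducible, and nefness of $v$ with $v^2 = 0$ forces every $D_i$ to be $v$-trivial. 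The plan from here is to promote $v$ to a genuine morphism $X \to \mathbb{P}^1$: the $v$-trivial curves $D_i$ should organize into the fibres of a fibration with fibre class proportional to $v$, whose general fibre $G$ --- satisfying $G^2 = G \cdot K_X = 0$ --- is a curve of arithmetic genus one. Hence $X$ would carry an elliptic fibration, so after contracting the $(-1)$-curves contained in fibres $X$ is a blow-up of $\mathbb{P}^2$ at the nine (possibly infinitely near) base points of a pencil of cubics; that is, $X$ is basic, a contradiction.

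The step I expect to be the genuine obstacle is precisely this last one: upgrading the nef isotropic class $v$ --- for which some multiple is merely known to be effective --- to an honest elliptic fibration, i.e.\ proving semiampleness of $v$. This is exactly where one must analyze the configuration of the points blown up by $\pi\colon X \to \mathbb{F}_n$: they cannot be in general position (for general centres $\mathrm{Bl}\,\mathbb{F}_n$ already becomes basic through elementary transformations), and controlling how they must accumulate along the successive negative sections is the heart of Nagata's argument. The purely lattice-theoretic shortcut that works for small Picard number --- bounding the coefficients of a $(-1)$-class in a geometric basis by Cauchy--Schwarz on the negative (semi)definite lattice $v^{\perp} \cap \operatorname{Pic} X$, thereby getting finiteness with no surface geometry at all --- degenerates once $K_X^2 < 0$, which is why the fibration and the classification of rational elliptic surfaces have to be brought in to close the argument.
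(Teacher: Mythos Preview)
The paper does not prove this theorem. It appears in Appendix~\ref{sec:surf}, whose opening sentence explicitly says ``We shall not give the proofs of propositions since these can be found in many textbooks''; Theorem~\ref{app:nagata} is simply cited from Nagata's original paper \cite{nagata} as a background result. So there is no ``paper's own proof'' to compare your attempt against.

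On the merits of your attempt itself, two comments. First, a technical gap you pass over: your limit class $v$ lives in $\operatorname{Pic}_{\mathbb{R}} X$, not in $\operatorname{Pic} X$, and there is no a priori reason it is rational. Hence writing $h^0(\ell v)$ and invoking Riemann--Roch for ``some multiple $\ell v$'' is not immediately meaningful; you would first need to show that the ray $\mathbb{R}_{>0} v$ contains an integral point, which in this setting requires an additional argument (for instance, via the structure of the effective cone or by replacing $v$ with a suitable integral class in its face of the nef cone). Second, and more substantially, you yourself flag the real difficulty: upgrading the nef isotropic class to an honest elliptic fibration. Your sketch correctly identifies that this is where Nagata's argument actually lives --- the careful analysis of the blow-up centres over $\mathbb{F}_n$ --- but you do not carry it out, so the proposal as written is an outline rather than a proof. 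Note also that invoking Lemma~\ref{lem:nefhalphen} and Sakai's Proposition~2 to handle the $\rho(X)=10$ case is logically a bit delicate in the context of this paper, since those results sit downstream of the appendix in which Theorem~\ref{app:nagata} is quoted; for a self-contained write-up you would want to avoid that circularity.
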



\section{Proof of Lemma~\ref{lem:symmetricjordan}}\label{sec:proofoflemma}

\begin{proof}[Proof of Lemma~\ref{lem:symmetricjordan}]

Let us extend $(-, -)$ to a Hermitian form on $V_{\mathbb{C}}$.

We show (1) in Steps 1--9.

\noindent
\underline{Step~1}

First let us consider the case where $f$ has an eigenvalue whose modulus is not $1$.

If $\lambda$ is an eigenvalue with $|\lambda| \ne 1$ and $v$ its corresponding eigenvector, then $v$ is always isotropic since
\begin{align*}
	(v, v) &= (f v, f v) \\
	&= |\lambda|^2 (v, v).
\end{align*}

\noindent
\underline{Step~2}

Let us show that if $\lambda$ is an eigenvalue whose modulus is not $1$, then $\lambda$ is simple.

Let $v$ be the corresponding eigenvector and assume $\lambda$ is not simple.
Then there exists $w$, linearly independent of $v$, such that
\[
	f w = \lambda w \quad \text{or} \quad
	f w = \lambda w + v.
\]
In the first case, $v$ and $w$ are orthogonal to each other since
\[
	(v, w) = |\lambda|^2 (v, w),
\]
which, together with $(v, v) = (w, w) = 0$, contradicts Lemma~\ref{lem:lem1}.
Let us consider the second case.
Since
\begin{align*}
	(v, w) &= (\lambda v, \lambda w + v) \\
	&= |\lambda|^2 (v, w),
\end{align*}
we have
\[
	(v, w) = 0.
\]
In the same way we find
\[
	(w, w) = |\lambda|^2 (w, w)
\]
and thus
\[
	(v, v) = (v, w) = (w, w) = 0,
\]
which again contradicts Lemma~\ref{lem:lem1}.

\noindent
\underline{Step~3}

We show that if $\lambda_1, \lambda_2$ are different eigenvalues of $f$ with $|\lambda_i| \ne 1$, then $\lambda_2 = 1/\overline{\lambda_1}$.
In particular, $\lambda_i$ must be real numbers.

Let $v_1, v_2$ be the corresponding eigenvectors.
These vectors are both isotropic by Step~1, and linearly independent since they are eigenvectors corresponding to different eigenvalues.
Thus it follows from Lemma~\ref{lem:lem1} that $(v_1, v_2) \ne 0$.
Since
\[
	(v_1, v_2) = \lambda_1 \overline{\lambda_2} (v_1, v_2),
\]
we have
\[
	\lambda_1 \overline{\lambda_2} = 1.
\]

\noindent
\underline{Step~4}

We show that if $f$ has an eigenvalue whose modulus is not $1$, then $f$ is diagonalizable.
We already know that such eigenvalues are simple.

We therefore consider an eigenvalue $\mu$ of modulus $1$ and assume vectors $u_1, u_2$ satisfy
\begin{align*}
	f u_1 &= \mu u_1, \\
	f u_2 &= \mu u_2 + u_1.
\end{align*}
Since
\[
	(u_1, u_2) = (u_1, u_2) + \mu(u_1, u_1),
\]
$u_1$ is isotropic.
In the same way we find
\[
	(v, u_1) = \lambda \overline{\mu}(v, u_1)
\]
and thus
\[
	(v, v) = (v, u_1) = (u_1, u_1) = 0,
\]
which contradicts Lemma~\ref{lem:lem1}.
Hence, $f$ is diagonalizable.

\noindent
\underline{Step~5}

We show that if $\lambda \in \mathbb{R} \setminus \{ \pm 1 \}$ is an eigenvalue of $f$, then so is $1/\lambda$.

Assume that $1/\lambda$ is not an eigenvalue.
Then, since all eigenvalues except $\lambda$ have modulus $1$ (Step~3) and $f$ is diagonalizable (Step~4), there exists a basis $(v, u_1, \ldots, u_r)$ of $V_{\mathbb{C}}$ such that
\begin{align*}
	f v &= \lambda v, \\
	f u_i &= \mu_i u_i,
\end{align*}
where $|\mu_i| = 1$.
Since
\[
	(v, u_i) = \lambda \overline{\mu_i}(v, u_i),
\]
we have $(v, u_i) = 0$.
Therefore, using Step~1, we obtain that $v$ is orthogonal to all elements in $V_{\mathbb{C}}$.
However, this contradicts the nondegeneratedness of $(-, -)$.

\hspace{1em}

Steps 1--5 show that if $f$ has an eigenvalue whose modulus is not $1$, then the Jordan normal form of $f$ is (\ref{eq:jordanexp}).
From now on, let us consider the case where all eigenvalues of $f$ have modulus $1$.

\noindent
\underline{Step~6}

It is clear that if $f$ is diagonalizable, then its Jordan normal form is (\ref{eq:jordanbounded}).
Thus it is sufficient to show that if $f$ is not diagonalizable, then its Jordan normal form is (\ref{eq:jordanquadratic}).

\noindent
\underline{Step~7}

We show that the size of each Jordan block is at most $3$.

Assume that linearly independent vectors $v_1, v_2, v_3, v_4$ satisfy
\begin{align*}
	f v_1 &= \nu v_1, \\
	f v_2 &= \nu v_2 + v_1, \\
	f v_3 &= \nu v_3 + v_2, \\
	f v_4 &= \nu v_4 + v_3.
\end{align*}
Using
\[
	(v_1, v_i) = (v_1, v_i) + \nu (v_1, v_{i-1}),
\]
we have $(v_1, v_{i-1}) = 0$ for $i = 2, 3, 4$.
Since
\begin{align*}
	(v_2, v_3) &= (\nu v_2 + v_1, \nu v_3 + v_2) \\
	&= (v_2, v_3) + \nu (v_2, v_2),
\end{align*}
$v_2$ is isotropic, which contradicts Lemma~\ref{lem:lem1}.

\noindent
\underline{Step~8}

We show that $f$ has only one Jordan block whose size is greater than $1$.
In particular, the corresponding eigenvalue is $\pm 1$.

Let $\mu, \nu$ be eigenvalues of modulus $1$ and let pairwise-linearly independent vectors $v_1, v_2$ and $w_1, w_2$ satisfy
\begin{align*}
	f v_1 & = \mu v_1,
	& f w_1 & = \nu w_1, \\
	f v_2 & = \mu v_2 + v_1,
	& f w_2 & = \nu w_2 + w_1.
\end{align*}
It is sufficient to show that $v_1$ and $w_1$ are linearly dependent.

The same calculation as in Step~4 implies that $v_1$ and $w_1$ are both isotropic.
Therefore, since
\[
	(v_1, w_1) = \mu \overline{\nu} (v_1, w_1),
\]
it follows from Lemma~\ref{lem:lem1} that $\mu = \nu$.
However, using
\[
	(v_1, w_2) = (v_1, w_2) + \mu (v_1, w_1),
\]
we have $(v_1, w_1) = 0$.
Hence Lemma~\ref{lem:lem1} shows that $v_1$ and $w_1$ are linearly dependent.

\noindent
\underline{Step~9}

Finally we show that the size of the Jordan block in Step~8 is exactly $3$.
It is sufficient to show that the size is not $2$.

Assume that $(v_1, v_2, u_1, \ldots, u_{r-1})$ is a basis on $V_{\mathbb{C}}$ such that
\begin{align*}
	f v_1 &= \nu v_1, \\
	f v_2 &= \nu v_2 + v_1, \\
	f u_i &= \mu_i u_i.
\end{align*}
Moreover, we can take $v_1$ and $v_2$ in $V$ since $\nu = \pm 1$.
As in Step~5, we deduce a contradiction by showing that $v_1$ is orthogonal to any of $v_1, v_2, u_1, \ldots, u_{r-1}$, bearing in mind that a calculation similar to Step~4 shows that $(v_1, v_1) = 0$.
Using
\[
	(v_2, v_2) = (v_2, v_2) + 2 \nu (v_1, v_2),
\]
we have $(v_1, v_2) = 0$.
Finally we show that $(v_1, u_i) = 0$.
If $\nu \ne \mu_i$, then
\[
	(v_1, u_i) = \nu \overline{\mu_i} (v_1, u_i)
\]
and thus $(v_1, u_i) = 0$.
If $\nu = \mu_i$, then
\[
	(v_2, u_i) = (v_2, u_i) + \nu (v_1, u_i)
\]
and thus $(v_1, u_i) = 0$.

\hspace{1em}

We now show (2).

\noindent
\underline{Step~10}

Let us represent $w$ as
\[
	w = a_1 v_1 + a_2 v_2 + a_3 v_3 + \sum_j b_j u_j.
\]
Since $(v_1, \ldots, u_{r-2})$ is a Jordan basis corresponding to (\ref{eq:jordanquadratic}), we have
\[
	f^n w = \left( \nu^n a_1 + \nu^{n-1} n a_2 + \frac{n(n-1)\nu^n}{2} a_3 \right) v_1 + (\nu^n a_2 + \nu^{n-1} n a_3) v_2 + \nu^n a_3 v_3 + \sum_j \mu^n_j b_j u_j.
\]
Dividing both sides by $\nu^n n^2$ and taking the limit $n \to +\infty$, we have
\[
	\lim_{n \to +\infty} \frac{1}{\nu^n n^2} f^n w = \frac{a_3}{2} v_1.
\]
Thus it is sufficient to show that $a_3 = {(w, v_1)}/{(v_3, v_1)}$.
A calculation similar to that given in Step~9 leads to
\[
	(v_1, v_1) = (v_2, v_1) = (u_j, v_1) = 0.
\]
Therefore, we have
\[
	(w, v_1) = a_3 (v_3, v_1)
\]
and thus
\[
	a_3 = \frac{(w, v_1)}{(v_3, v_1)}.
\]

\hspace{1em}

We finally show (3) in a similar way as in Step~10.

\noindent
\underline{Step~11}

Let us represent $w$ as
\[
	w = a_1 v_1 + a_2 v_2 + \sum_j b_j u_j.
\]
Then we have
\[
	f^n w = \lambda^n a_1 v_1 + \frac{a_2}{\lambda^n} v_2 + \sum_j \mu^n_j b_j u_j
\]
and
\[
	\lim_{n \to +\infty} \frac{1}{\lambda^n} w = a_1 v_1.
\]
Step~1 and a calculation similar to that given in Step~5 lead to
\[
	(v_1, v_1) = (v_1, u_j) = 0
\]
and thus we have
\[
	a_1 = \frac{(w, v_2)}{(v_1, v_2)}.
\]
\end{proof}


\end{document}